\pgfplotsset{compat=1.13}
\newcommand\blankfootnote[1]{%
  \begin{NoHyper}
  \let\svthefootnote\thefootnote%
  \let\thefootnote\relax\footnotetext{#1}%
  \let\thefootnote\svthefootnote%
  \end{NoHyper}
}
\newcommand{\inputAlphabet}{\mathcal{X}}
\newcommand{\inputRV}{X}
\newcommand{\inputDistribution}{P}
\newcommand{\mutualInformation}[2]{\mathbf{I}_{#1,#2}}
\newcommand{\outputAlphabet}{\mathcal{Y}}
\newcommand{\outputRV}{Y}
\newcommand{\noiseRV}{N}
\newcommand{\noisestd}{\sigma}
\newcommand{\disturbancestd}{\sigma_B}
\newcommand{\normaldistribution}[2]{\mathcal{N}(#1,#2)}
\newcommand{\channel}{\mathcal{W}}
\newcommand{\channelAWGN}{\mathcal{W}_\mathrm{AWGN}}
\newcommand{\channelFading}{\mathcal{W}_\mathrm{fading}}
\newcommand{\txNum}{K}
\newcommand{\txIndex}{k}
\newcommand{\objectiveFunction}{f}
\newcommand{\objectiveFunctionEstimate}{\hat{f}}
\newcommand{\objectiveFunctionDomain}{\mathcal{S}}
\newcommand{\stateSpaceElement}{s}
\newcommand{\stateSpace}{\mathcal{S}}
\newcommand{\messageRealization}{m}
\newcommand{\messageRV}{\mathcal{M}}
\newcommand{\blocklength}{n}
\newcommand{\blIndex}{i}
\newcommand{\preproc}{E}
\newcommand{\postproc}{D}
\newcommand{\reals}{\mathbb{R}}
\newcommand{\complexNumbers}{\mathbb{C}}
\newcommand{\generalReal}{x}
\newcommand{\generalRealTwo}{y}
\newcommand{\powerconstraint}{\mathcal{P}}
\newcommand{\tail}{\varepsilon}
\newcommand{\errorprob}{\delta}
\newcommand{\Probability}{\mathbb{P}}
\newcommand{\naturals}{\mathbb{N}}
\newcommand{\generalNatural}{n}
\newcommand{\generalNaturalTwo}{k}
\newcommand{\generalNaturalFour}{p}
\newcommand{\absolutevalue}[1]{\left\lvert #1 \right\rvert}
\newcommand{\positivePart}[1]{\left[ #1 \right]^+}
\newcommand{\codebook}{\preproc}
\newcommand{\decoder}{\postproc}
\newcommand{\codebookSize}{M}
\newcommand{\cardinality}[1]{\left\lvert #1 \right\rvert}
\newcommand{\fading}{h}
\newcommand{\euclidNorm}[1]{\lVert #1 \rVert}
\newcommand{\maximumNorm}[1]{\lVert #1 \rVert_\infty}
\newcommand{\generalRV}{A}
\newcommand{\generalRVTwo}{B}
\newcommand{\Expectation}{{\mathbb{E}}}
\newcommand{\Variance}{\mathrm{Var}}
\newcommand{\txSubset}{J}
\newcommand{\generalSubset}{J}
\newcommand{\analogTxSubset}{J_a}
\newcommand{\digitalTxSubset}{J_d}
\newcommand{\analogTxSet}{\mathcal{K}_a}
\newcommand{\digitalTxSet}{\mathcal{K}_d}
\newcommand{\complement}[1]{#1^c}
\newcommand{\analogTxSubsetComplement}{J_a^c}
\newcommand{\digitalTxSubsetComplement}{J_d^c}
\newcommand{\transpose}[1]{#1^T}
\newcommand{\planeMap}{U}
\newcommand{\planeMapElement}{u}
\newcommand{\planeMapInv}{\transpose{\planeMap}}
\newcommand{\identityMapping}[1]{\mathrm{id}_{#1}}
\newcommand{\basisOne}{v}
\newcommand{\basisTwo}{w}
\newcommand{\innerproduct}[2]{\left\langle{#1},{#2}\right\rangle}
\newcommand{\onevector}{\mathbbm{1}}
\newcommand{\plane}{E}
\newcommand{\generalIndex}{i}
\newcommand{\imaginaryUnit}{\mathbf{j}}
\newcommand{\numAnalogTransmissions}{L}
\newcommand{\indexAnalogTransmissions}{\ell}
\newcommand{\preprocHybrid}{\mathcal{E}}
\newcommand{\postprocHybrid}{\mathcal{D}}
\newcommand{\analograte}{\beta}
\newcommand{\analogTxNum}{K_a}
\newcommand{\digitalTxNum}{K_d}
\newcommand{\unconstrainedGaussianCapacity}{C}
\newcommand{\unconstrainedMacCapacity}[1]{C_{#1}}
\newcommand{\constrainedMacCapacity}[1]{\bar{C}_{#1}}
\newcommand{\constrainedMacCapacityAchievable}[1]{\hat{C}_{#1}}
\newcommand{\constrainedMacCapacityConverse}[1]{\check{C}_{#1}}
\newcommand{\decodingError}{\varepsilon_\mathrm{dec}}
\newcommand{\closure}{\mathrm{cl}}
\newcommand{\amplitudeConstraint}{\mathcal{A}}
\newcommand{\analogAmplitudeConstraint}{\amplitudeConstraint_a}
\newcommand{\optimalInputDistribution}{\hat{P}}
\newcommand{\timeSharingRV}{Q}
\newcommand{\mutualInformationRV}[2]{\mathbf{I}\left(#1;#2\right)}
\newcommand{\mutualInformationRVCond}[3]{\mathbf{I}\left(#1;#2|#3\right)}
\newcommand{\entropyRV}[1]{H(#1)}
\newcommand{\entropyRVCond}[2]{H(#1|#2)}
\newcommand{\diffEntropyRVCond}[2]{h(#1|#2)}
\newcommand{\diffEntropyRV}[1]{h(#1)}
\newcommand{\analogTransmissionsIndexSet}{\naturalsInitialSection{\numAnalogTransmissions}}
\newcommand{\compoundChannel}[1]{\mathcal{W}_{#1}}
\newcommand{\codebookRate}{R}
\newcommand{\converseAnalogErrorVariance}{V}
\newcommand{\nomPre}[2]{\ifthenelse{\isempty{#2}}{f_{#1}}{f_{#1}^{(#2)}}}
\newcommand{\nomPost}[1]{\ifthenelse{\isempty{#1}}{F}{F^{(#1)}}}
\newcommand{\nomInc}[1]{\ifthenelse{\isempty{#1}}{\Phi}{\Phi^{(#1)}}}
\newcommand{\nomMin}[2]{\ifthenelse{\isempty{#2}}{\phi_{\min,#1}}{\phi_{\min,#1}^{(#2)}}}
\newcommand{\nomMax}[2]{\ifthenelse{\isempty{#2}}{\phi_{\max,#1}}{\phi_{\max,#1}^{(#2)}}}
\newcommand{\nomMaxSpread}[1]{\Delta(#1)}
\newcommand{\effectiveNoise}{N_\mathrm{eff}}
\newcommand{\Fmon}{\mathfrak{F}_\mathrm{mon}}
\newcommand{\nomIntermediateDomain}{\mathcal{D}}
\newcommand{\hadsCapacity}[2]{\mathcal{H}_{#1,#2}}
\newcommand{\hadsConverse}[2]{\check{\mathcal{H}}_{#1,#2}}
\newcommand{\hadsAchievable}[2]{\hat{\mathcal{H}}_{#1,#2}}
\newcommand{\eulersNumber}{e}
\newcommand{\naturalsInitialSection}[1]{[#1]}
\newcommand{\flRounds}{t}
\newcommand{\landauO}{\mathcal{O}}
\newcommand{\mean}{\mu}
\definecolor{revcolor}{rgb}{0,0,0.5}
\newtheorem{theorem}{Theorem}
\newtheorem{lemma}{Lemma}
\newtheorem{cor}{Corollary}
\newtheorem{remark}{Remark}
\newtheorem{definition}{Definition}
\newcommand{\drawaxes}[2]{
  \coordinate (origin) at (0,0);
  \coordinate (xend)   at (#1,0);
  \coordinate (yend)   at (0,#2);

  \draw[->] (origin) -- (xend) node[at end, below right] {$\codebookRate_1$};
  \draw[->] (origin) -- (yend) node[at end, above left] {$\codebookRate_2$};
}
\newcommand{\drawxtick}[2]{
  \draw (#1,0) -- (#1,-.03) node[at end,below] {#2};
}
\newcommand{\drawytick}[2]{
  \draw (0,#1) -- (-.03,#1) node[at end, left] {#2};
}
\newacronym{awgn}{AWGN}{additive white Gaussian noise}
\newacronym{ota}{OTA}{Over-the-Air}
\newacronym{otac}{OTA-C}{Over-the-Air Computation}
\newacronym{mac}{MAC}{multiple-access channel}
\newacronym{mse}{MSE}{mean squared error}
\newacronym{socc}{SOCC}{Simultaneous Over-the-Air Computation and Communication}
\newacronym{ldpc}{LDPC}{low density parity check}
\newacronym{ml}{ML}{machine learning}
\newacronym{mimo}{MIMO}{multiple-input multiple-output}
\newacronym{sinr}{SINR}{signal-to-interference-and-noise ratio}
\newacronym{ris}{RIS}{reflective intelligent surface}
\newacronym{5g}{5G}{fifth-generation technology standard for cellular networks}
\newacronym{ber}{BER}{bit error rate}
\newglossaryentry{outputRV}{
  name={\ensuremath{\outputRV}},
  description={Channel output at the receiver}
}
\newglossaryentry{analogTxNum}{
  name={\ensuremath{\analogTxNum}},
  description={Number of analog transmitters}
}
\newglossaryentry{digitalTxNum}{
  name={\ensuremath{\digitalTxNum}},
  description={Number of digital transmitters}
}
\newglossaryentry{totalTxNum}{
  name={\ensuremath{\txNum = \analogTxNum + \digitalTxNum}},
  description={Overall number of transmitters}
}
\newglossaryentry{channelInputs}{
  name={\ensuremath{\inputRV_1, \dots, \inputRV_\txNum}},
  description={Channel inputs}
}
\newglossaryentry{normalDistribution}{
  name={\ensuremath{\normaldistribution{\mean}{\noisestd^2}}},
  description={Normal distribution with mean $\mean$ and variance $\noisestd^2$}
}
\newglossaryentry{channelAWGN}{
  name={\ensuremath{\channelAWGN}},
  description={\gls{awgn} Channel}
}
\newglossaryentry{blocklength}{
  name={\ensuremath{\blocklength}},
  description={Number of channel uses (block length)}
}
\newglossaryentry{analogPreProcessors}{
  name={\ensuremath{\preprocHybrid_1, \dots, \preprocHybrid_{\analogTxNum}}},
  description={\acrshort{socc} pre-processors at the analog transmitters}
}
\newglossaryentry{digitalPreProcessors}{
  name={\ensuremath{\preprocHybrid_{\analogTxNum+1}, \dots, \preprocHybrid_{\txNum}}},
  description={\acrshort{socc} pre-processors at the digital transmitters}
}
\newglossaryentry{postProcessor}{
  name={\ensuremath{\postprocHybrid}},
  description={\acrshort{socc} post-processor at the receiver}
}
\newglossaryentry{analogValues}{
  name={\ensuremath{\stateSpaceElement_{\txIndex,1} \in \stateSpace_{\txIndex,1}, \dots, \stateSpaceElement_{\txIndex,\numAnalogTransmissions} \in \stateSpace_{\txIndex,\numAnalogTransmissions}}},
  description={Analog values held at transmitter $\txIndex$}
}
\newglossaryentry{naturalsInitialSection}{
  name={\ensuremath{\naturalsInitialSection{\generalNatural} := \{1, \dots, \generalNatural\}}},
  text={\ensuremath{\naturalsInitialSection{\analogTxNum} := \{1, \dots, \analogTxNum\}}},
  description={Set of first $\generalNatural$ natural numbers}
}
\newglossaryentry{analogAmplitudeConstraint}{
  name={\ensuremath{\analogAmplitudeConstraint}},
  description={Amplitude constraint at the analog transmitters}
}
\newglossaryentry{maximumNorm}{
  name={\ensuremath{\maximumNorm{\cdot}}},
  description={Maximum norm}
}
\newglossaryentry{euclidNorm}{
  name={\ensuremath{\euclidNorm{\cdot}}},
  description={Standard Euclidean norm}
}
\newglossaryentry{codebookSize}{
  name={\ensuremath{\codebookSize_\txIndex}},
  description={Size of codebook at the $\txIndex$th digital transmitter}
}
\newglossaryentry{amplitudeConstraint}{
  name={\ensuremath{\amplitudeConstraint_\txIndex}},
  text={\ensuremath{\amplitudeConstraint_{\analogTxNum + \txIndex}}},
  description={Amplitude constraint at transmitter $\txIndex$}
}
\newglossaryentry{powerconstraint}{
  name={\ensuremath{\powerconstraint_\txIndex}},
  text={\ensuremath{\powerconstraint_{\analogTxNum + \txIndex}}},
  description={Average power constraint at transmitter $\txIndex$}
}
\newglossaryentry{messageRV}{
  name={\ensuremath{\messageRV_\txIndex}},
  description={Message to be transmitted by the $\txIndex$-th digital transmitter (($\analogTxNum+\txIndex$)-th overall transmitter)}
}
\newglossaryentry{functionEstimates}{
  name={\ensuremath{\objectiveFunctionEstimate_1, \dots, \objectiveFunctionEstimate_\numAnalogTransmissions}},
  description={Approximate function values obtained by the receiver through \acrshort{otac}}
}
\newglossaryentry{messageEstimates}{
  name={\ensuremath{\hat{\messageRV}_1, \dots, \hat{\messageRV}_{\digitalTxNum}}},
  description={Digital messages recovered at the receiver}
}
\newglossaryentry{objectiveFunction}{
  name={\ensuremath{\objectiveFunction^{(\indexAnalogTransmissions)}}},
  text={},
  description={$\indexAnalogTransmissions$-th objective function to be computed over the air}
}
\newglossaryentry{converseAnalogErrorVariance}{
  name={\ensuremath{\converseAnalogErrorVariance}},
  description={\acrshort{mse} of the analog function estimates at the receiver}
}
\newglossaryentry{codebookRate}{
  name={\ensuremath{\codebookRate_\txIndex}},
  text={},
  description={Coding rate at the $\txIndex$-th digital transmitter (($\analogTxNum+\txIndex$)-th overall transmitter)}
}
\newglossaryentry{analograte}{
  name={\ensuremath{\analograte}},
  description={Rate of analog function computations}
}
\newglossaryentry{noiseStd}{
  name={\ensuremath{\noisestd^2}},
  description={Power of the channel noise}
}
\newglossaryentry{preproc}{
  name={\ensuremath{\preproc_1, \dots, \preproc_\txNum}},
  description={Digital encoders for a \acrshort{mac} with $\txNum$ inputs}
}
\newglossaryentry{postproc}{
  name={\ensuremath{\postproc}},
  description={Decoder which recovers digital messages}
}
\newglossaryentry{decodingError}{
  name={\ensuremath{\decodingError}},
  description={Digital decoding error}
}
\newglossaryentry{unconstrainedGaussianCapacity}{
  name={\ensuremath{\unconstrainedGaussianCapacity}},
  description={Gaussian capacity function for real channels}
}
\newglossaryentry{unconstrainedMacCapacity}{
  name={\ensuremath{\unconstrainedMacCapacity{\channel}(\powerconstraint_1, \dots, \powerconstraint_\txNum)}},
  description={Capacity region of the \acrshort{mac} $\channel$ under average power constraints $\powerconstraint_1, \dots, \powerconstraint_\txNum$}
}
\newglossaryentry{inputDistributions}{
  name={\ensuremath{\inputDistribution_1, \dots, \inputDistribution_\txNum}},
  description={Channel input distributions}
}
\newglossaryentry{constrainedMacCapacity}{
  name={\ensuremath{
    \constrainedMacCapacity{\channel}(
      \powerconstraint_1,
      \dots,
      \powerconstraint_\txNum,
      \amplitudeConstraint_1,
      \dots,
      \amplitudeConstraint_\txNum
    )
  }},
  text={},
  description={Capacity region of the \acrshort{mac} $\channel$ under average power constraints $\powerconstraint_1, \dots, \powerconstraint_\txNum$ and amplitude constraints $\amplitudeConstraint_1, \dots, \amplitudeConstraint_\txNum$}
}
\newglossaryentry{inputAlphabets}{
  name={\ensuremath{\inputAlphabet_1, \dots, \inputAlphabet_\txNum}},
  description={Channel input alphabets}
}\newglossaryentry{outputAlphabet}{
  name={\ensuremath{\outputAlphabet}},
  description={Channel output alphabet}
}
\newglossaryentry{mutualInformation}{
  name={\ensuremath{\mutualInformation{\inputDistribution_1, \dots, \inputDistribution_\txNum}{\channel}(\cdot;\cdot|\cdot)}},
  description={Mutual information between the inputs and output of channel $\channel$ under the input distributions $\inputDistribution_1, \dots, \inputDistribution_\txNum$ (subscript sometimes omitted when clear from context))}
}
\newglossaryentry{noiseRV}{
  name={\ensuremath{\noiseRV}},
  description={Channel noise}
}
\newglossaryentry{positivePart}{
  name={\ensuremath{\positivePart{\generalReal} := \max(0,x)}},
  description={Positive part of $\generalReal$}
}
\newglossaryentry{closure}{
  name={\ensuremath{\closure}},
  description={Topological closure}
}
\title{Simultaneous Computation and Communication over MAC}
\author{%
  \IEEEauthorblockN{Matthias Frey\IEEEauthorrefmark{1},
                    Igor Bjelaković\IEEEauthorrefmark{2}\IEEEauthorrefmark{3},
                    Michael C. Gastpar\IEEEauthorrefmark{4},
                    and Jingge Zhu\IEEEauthorrefmark{1}}
  \\
  \IEEEauthorblockA{\IEEEauthorrefmark{1}%
                    Department of Electrical and Electronic Engineering, The University of Melbourne, Australia}
  \\
  \IEEEauthorblockA{\IEEEauthorrefmark{2}%
                    Fraunhofer Heinrich Hertz Institute, Berlin, Germany}
  \IEEEauthorblockA{\IEEEauthorrefmark{3}%
                    Technische Universität Berlin, Germany}
  \\
  \IEEEauthorblockA{\IEEEauthorrefmark{4}%
                    École Polytechnique Fédérale de Lausanne, Switzerland}
}
\begin{document}
\maketitle

\blankfootnote{Part of this work has been presented at the 2024 IEEE International Symposium of Information Theory (ISIT), Athens, Greece.

The work of J. Zhu was supported in part by the Australian Research Council under project DE210101497 and under project DP230101493. I. Bjelakovi\'c acknowledges support by the German Research Foundation (DFG) under grant STA 864/15-1 and financial support by the Federal Ministry of Education and Research of Germany in the program of ``Souverän. Digital. Vernetzt.''. Joint project 6G-RIC, project identification numbers: 16KISK020K, 16KISK030. The work of M. C. Gastpar was supported in part by the Swiss National Science Foundation under Grant 200364. This research was also supported by The University of Melbourne’s Research Computing Services and the Petascale Campus Initiative.}
\begin{abstract}
We study communication over a Gaussian \gls{mac} with two types of transmitters: Digital transmitters hold a message from a discrete set that needs to be communicated to the receiver with vanishing error probability. Analog transmitters hold sequences of analog values. Some functions of these distributed values (but not the values themselves) need to be conveyed to the receiver, subject to a fidelity criterion such as \gls{mse} or a certain maximum error with given confidence. For the case in which the computed function for the analog transmitters is a sum of values in $[-1,1]$, we derive inner and outer bounds for the tradeoff of digital and analog rates of communication under peak and average power constraints for digital transmitters and a peak power constraint for analog transmitters. We then extend the achievability result to a class of functions that includes all linear and some non-linear functions. This extended scheme works over fading channels as long as full channel state information is available at the transmitter. The practicality of our proposed communication scheme is shown in channel simulations that use a version of the scheme based on \gls{ldpc} coding. We evaluate the system performance for different block lengths and Gaussian as well as non-Gaussian noise distributions.
\end{abstract}

\section{Introduction}
\label{sec:introduction}
\gls{otac} is a class of joint source-channel communication schemes that can be used whenever data from a large number of transmitters has to be collected in such a way that the receiver requires some function of the transmitted data, but not necessarily all of the individual data points. \gls{otac} schemes can be roughly divided into two classes: Digital (or coded) \gls{otac} and analog (or uncoded) \gls{otac}. While digital schemes are vastly more efficient in terms of energy and spectrum usage (with some notable exceptions~\cite{gastpar2008uncoded}), analog schemes have much simpler transceiver designs, are more versatile with respect to what functions they can compute, and are often more robust to practical inhibitions such as small synchronization errors. This makes them easier to use for emerging applications such as distributed sensing, distributed control, and distributed learning in large wireless networks. Because of this, there has been a growing research interest in analog \gls{otac} schemes over the last few years, despite their drawbacks in comparison to digital schemes when it comes to efficiency. One feature that many analog \gls{otac} schemes share and which ostensibly is the main reason for their low efficiency is that they combat the influence of channel noise and fading by repeating the same transmission multiple times. In this work, we take a slightly different perspective: We show that this type of repeated transmission of the same signal can be combined with digitally coded communication in a way that yields an efficient overall communication system. That is, the spectral efficiency of the resulting system is much better than that of a system which separates digitally coded communication and analog \gls{otac} in time domain. Moreover, we show that it is still possible to satisfy reasonable amplitude constraints. We propose a novel \emph{hybrid} communication scheme in which digital communications are encoded in such a way that they do not disturb and are not disturbed by analog \gls{otac} that is executed concurrently through the same channel.

\subsection{Prior Work}
\label{sec:literature}

\paragraph{\GLS{otac}}
Digital \gls{otac} was first proposed and analyzed in~\cite{nazer2007computation}. First ideas for analog \gls{otac} of sums can be traced back to~\cite{gastpar2003source}. A larger class of functions for \gls{otac} was considered in~\cite{goldenbaum2009function}. In~\cite{goldenbaum2013harnessing,goldenbaum2014nomographic}, a systematic approach for the analog \gls{otac} of a large class of functions was proposed. This is the first work to make the connection between \gls{otac} and nomographic representations of the functions to be computed. This means that a large class of functions can essentially be reduced to a summation for the sake of \gls{otac} schemes. Applications of \gls{otac} include physical layer network coding~\cite{nazer2007computation,nazer2011compute}, distributed control~\cite{cai2018modulation}, distributed sensing~\cite{goldenbaum2009function}, and distributed optimization for federated learning~\cite{amiri2020machine}. There has been a tremendous development of research in \gls{otac}, in particular over the last few years. For a comprehensive overview of the field, we direct the reader to the surveys~\cite{wang2022over,sahin2023survey}.

\paragraph{Co-existence of Digital Communication and Analog \GLS{otac}\label{par:literature-coexistence} in Wireless Networks}
In~\cite{qi2020integration}, the authors consider a cellular network with \gls{mimo} transceivers in which three different tasks need to be reconciled in the same wireless frequency band, but with a partial separation in time and space (with the use of beamforming): Power delivery from the base station to the network nodes, digital communications, and analog \gls{otac}. The authors solve an optimization problem to derive beamformers and power management which optimize the analog \gls{mse} while adhering to transmitter power constraints and guaranteeing a certain minimum \glspl{sinr} for digital communication. A similar scenario, but for the simultaneous tasks of digital communication, analog \gls{otac}, and sensing was considered in~\cite{qi2021integrated}, and only for digital communication and analog \gls{otac} in~\cite{du2021interference}. In~\cite{qi2022integrating}, the authors establish a tradeoff between the performances of the three tasks of digital communication, analog \gls{otac}, and sensing. \Glspl{ris} are used for a partial spatial separation of digital communication and analog \gls{otac} in~\cite{ni2022integrating}, and the authors use power control to achieve a tradeoff between the performance of the digital and analog parts of the system. A central assumption of this work is that the signal from the analog transmitters is weaker at the receiver than the signal from the digital transmitters, which enables the use of successive interference cancellation at the decoder. A common feature of all of these works is that they at least partially separate the digital and analog signals in the spatial domain, either with \gls{mimo} beamforming or with \glspl{ris}. Moreover, there is always a residual impact on \gls{sinr} or \gls{mse} of the digital and analog parts of the system from the respective other part of the system.

Our present study is complementary to these existing works in that we approach the problem of co-existence of digital communication systems and analog \gls{otac} from an information-theoretic angle. We separate the digital and analog signals purely by coding instead of exploiting spatial modes. Hence, we establish a different type of tradeoff between the digital and analog communication rates. This works regardless of how strong the signal of the analog transmitters is compared to the digital transmitters. We consider a simpler channel model than these prior works, but our approach requires neither the presence of \gls{ris} in the channel nor of multiple antennas at transmitting or receiving nodes.

\paragraph{Hybrid Digital/Analog Coding Schemes}
This is a class of joint source-channel coding schemes that exhibit characteristics of both digital coding and analog uncoded transmission. To the best of our knowledge, they were first proposed under this name in~\cite{mittal2002hybrid} as a means to make the transmission of a Gaussian source through a point-to-point \gls{awgn} channel more robust against deviations from the modeling assumptions. First ideas for this type of codes can be traced back to the systematic source-channel codes introduced in~\cite{shamai1998systematic} for a scenario with side information. Several later works have applied hybrid digital/analog coding schemes to different scenarios, such as relay channels~\cite{yao2009hybrid} and \gls{mac}~\cite{lapidoth2010sending}. A unified framework for hybrid digital/analog coding schemes that recovers many of these older results was proposed in~\cite{minero2015unified}. For a more extensive list of works on this topic and a more detailed discussion of these works, we refer the reader to the literature section in~\cite{minero2015unified}.

These papers employ coding schemes that have features of both digital coding and analog uncoded transmission in a joint source-channel coding problem. Our present work differs from this in that we employ analog or digital coding schemes at different transmitters based on the type of information that the particular transmitter conveys, and that we perform~\gls{otac} based on analog values. Therefore, both the communication scenario and the methodology are fundamentally different, although we do draw on the basic idea of combining digital coding techniques and analog transmissions concurrently in the same communication system.

\subsection{Contribution and Outline}
\label{sec:contribution}
The contribution of this paper can be summarized as follows:
\begin{itemize}
  \item We propose a system model for \gls{socc} over a \gls{mac}. In this system model, there are analog transmitters that perform \gls{otac} and digital transmitters that each communicate a message to the receiver. Both types of transmitters share the same channel.
  \item For the case of a channel with \gls{awgn} and no fading, we derive inner and outer bounds for the tradeoff between the rates of digital communication and the number of function values that are computed with \gls{otac}. For this, we focus on functions of sums of values in the interval $[-1,1]$. Our bounds hold under peak and average power constraints at the digital transmitters and a peak power constraint at the analog transmitters.
  \item We show that the communication schemes used for the achievability proofs can be modified in a straightforward way to compute any function in the class $\Fmon$. This is the same class of functions that the scheme proposed in~\cite{frey2021over-tsp} is able to compute. $\Fmon$ contains all linear functions such as weighted averages which are of particular interest in distributed learning applications. It also includes some nonlinear functions such as $p$-norms for $p \geq 1$. This is due to the observation~\cite{goldenbaum2013harnessing,goldenbaum2014nomographic} that summation can be seen as the prototypical function for \gls{otac} since via suitable nomographic representations, a large class of function computations can be reduced to summation. The modified scheme we propose works over fading channels, as long as full channel state information is available at the transmitter.
  \item The hybrid communication schemes we propose for the achievability part of our main result are derived from codes for standard \gls{mac} communication. For this, we use computationally inexpensive additional processing steps at the transmitters and the receiver. In the theoretical part of this work, we apply them to random \gls{mac} codes for simplicity, but they are compatible with arbitrary \gls{mac} codes. This means that the communication schemes we propose are in principle suitable for real-world communication systems when used for instance with \gls{ldpc} or polar codes. To illustrate this point, we have conducted channel simulations with our proposed \gls{socc} scheme, using digital \gls{ldpc} coding and modulation from the \acrfull{5g}. Our simulations include both short and moderate block lengths, and we have tested both Gaussian and non-Gaussian noise, showing a moderate degradation of performance as the channel noise deviates from a Gaussian distribution.
\end{itemize}
In Section~\ref{sec:system-model}, we introduce the novel system model for analog \gls{otac} and digitally coded communication through the same channel. In Section~\ref{sec:main-result}, we define the rate tradeoff region that we study in this paper and summarize our result on the bounds of this region. We also plot sum rate bounds for numerical examples. These plots show that our bounds are not fully tight, but they do establish that there is a meaningful tradeoff between the rates of digital communications and the number of \glspl{otac} that can be performed for the system model we propose. Section~\ref{sec:achievability} is devoted to the achievability part of our main result. In particular, in Section~\ref{sec:construction} we give details on how digital codes that are compatible with concurrent \gls{otac} can be constructed given \emph{any} base code for communication through a \gls{mac} (including practically relevant schemes such as \gls{ldpc} and polar codes). The converse part of the main result is proven in Section~\ref{sec:converse}. This proof is an extension of the standard technique used to derive converse results for the Gaussian \gls{mac}. In Section~\ref{sec:fmon}, we show how the achievability part of the main result can be extended to compute any function in the class $\Fmon$ introduced in~\cite{frey2021over-tsp}. Finally, we summarize the results of our numerical simulations in Section~\ref{sec:simulations}. In order to aid the reader in keeping track of the extensive notation used throughout the paper, we provide an index of all symbols that are used in multiple sections (in order of first usage) in Appendix~\ref{appendix:symbols}.

\begin{figure}
    \centering
    \begin{tikzpicture}
        \coordinate (in1analog) at (0,3);
        \coordinate (inKanalog) at (0,0);
        \coordinate (arrowturn) at (6.5,0);
        \coordinate (in1digital) at (0,-2);
        \coordinate (inKdigital) at (0,-5);

        \node[rectangle,draw,minimum width=1.5cm,minimum height=1.5cm] (enc1) at (4,3) {$\preprocHybrid_1$};
        \node (dots) at (4,1.5) {\Shortstack{. . . . . .}};
        \node[rectangle,draw,minimum width=1.5cm,minimum height=1.5cm] (encKa) at (4,0) {$\preprocHybrid_{\analogTxNum}$};

        \node[rectangle,draw,minimum width=1.5cm,minimum height=1.5cm] (encKa1) at (4,-2) {$\preprocHybrid_{\analogTxNum+1}$};
        \node (dots) at (4,-3.5) {\Shortstack{. . . . . .}};
        \node[rectangle,draw,minimum width=1.5cm,minimum height=1.5cm] (encKaKd) at (4,-5) {$\preprocHybrid_{\txNum}$};

        \node[circle,draw] (plus) at (9,-1) {$+$};
        \node (noise) at (9,2) {$\noiseRV^\blocklength \sim \text{i.i.d. } \normaldistribution{0}{\noisestd^2}$};

        \node[rectangle,draw,minimum width=1.5cm,minimum height=1.5cm] (dec) at (12,-1) {$\postprocHybrid$};

        \coordinate (outdigital) at (16,-1.25);
        \coordinate (outanalog) at (16,-.75);

        \draw[->] (in1analog) -- (enc1) node[midway,above] {$\stateSpaceElement_{1,1}, \dots, \stateSpaceElement_{1,\numAnalogTransmissions}$};
        \draw[->] (inKanalog) -- (encKa) node[midway,above] {$\stateSpaceElement_{\analogTxNum,1}, \dots, \stateSpaceElement_{\analogTxNum,\numAnalogTransmissions}$};
        \draw[->] (in1digital) -- (encKa1) node[midway,above] {$\messageRV_{1}$};
        \draw[->] (inKdigital) -- (encKaKd) node[midway,above] {$\messageRV_{\digitalTxNum}$};

        \draw[->] (enc1) -- (enc1-|arrowturn) node[midway,above] {$\inputRV_1^{\blocklength}$} -- (plus);
        \draw[->] (encKa) -- (encKa-|arrowturn) node[midway,above] {$\inputRV_{\analogTxNum}^{\blocklength}$} -- (plus);
        \draw[->] (encKa1) -- (encKa1-|arrowturn) node[midway,above] {$\inputRV_{\analogTxNum+1}^{\blocklength}$} -- (plus);
        \draw[->] (encKaKd) -- (encKaKd-|arrowturn) node[midway,above] {$\inputRV_{\txNum}^{\blocklength}$} -- (plus);
        \draw[->] (noise) -- (plus);

        \draw[->] (plus) -- (dec) node[midway,above] {$\outputRV^{\blocklength}$};

        \draw[->] (dec.east|-outdigital) -- (outdigital) node[midway,below] {$\hat{\messageRV}_1, \dots, \hat{\messageRV}_{\digitalTxNum}$};
        \draw[->] (dec.east|-outanalog) -- (outanalog) node[midway,above] {$\hat{\objectiveFunction}_1, \dots, \hat{\objectiveFunction}_{\numAnalogTransmissions}$};
    \end{tikzpicture}
    \caption{System model for \acrfull{socc}.}
    \label{fig:digital-mac-with-ota-c}
\end{figure}

\section{System Model}
\label{sec:system-model}

\subsection{Channel and Communication Model}
In this work, we consider a combination of analog~\gls{otac} and digital communication over a shared \gls{mac} with \gls{awgn}. The representation of this channel in terms of random variables is
\begin{equation}
\label{eq:channel-awgn}
\outputRV = \inputRV_1 + \dots + \inputRV_\txNum + \noiseRV,
\end{equation}
where $\gls{channelInputs}$ are the channel inputs, $\gls{outputRV}$ is the channel output, and $\gls{noiseRV} \sim \normaldistribution{0}{\gls{noiseStd}}$ where $\gls{normalDistribution}$ denotes the normal distribution with mean $\mean$ and variance $\noisestd^2$. For simplicity, we focus on the case where each transmitter conveys either a digital message or analog values for \gls{otac}: There are $\gls{analogTxNum}$ analog transmitters, indexed $1, \dots, \analogTxNum$, and $\gls{digitalTxNum}$ digital transmitters, indexed $\analogTxNum + 1, \dots, \analogTxNum + \digitalTxNum$, where the total number of transmitters is $\gls{totalTxNum}$. Transmitters and receiver are connected via the channel $\gls{channelAWGN}$ defined in (\ref{eq:channel-awgn}), which they can use a total of $\gls{blocklength}$ times. A \gls{socc} scheme, depicted in Fig.~\ref{fig:digital-mac-with-ota-c}, consists of analog pre-processors \gls{analogPreProcessors}, digital pre-processors \gls{digitalPreProcessors}, and a post-processor \gls{postProcessor}.

Each analog transmitter $\txIndex\in\gls{naturalsInitialSection}$ holds analog values $\gls{analogValues}$ which are passed through $\preprocHybrid_\txIndex$ to  create a channel input sequence $\inputRV_\txIndex^\blocklength$. The pre-processor output is subject to a peak amplitude constraint of $\gls{analogAmplitudeConstraint} > 0$. This means that for every $\txIndex \in \naturalsInitialSection{\analogTxNum}$, the pre-processor $\preprocHybrid_\txIndex$ has to satisfy
\begin{equation}
\label{eq:analog-amplitude-constraint}
\forall \stateSpaceElement_{\txIndex,1} \in \stateSpace_{\txIndex,1}, \dots, \stateSpaceElement_{\txIndex,\numAnalogTransmissions} \in \stateSpace_{\txIndex,\numAnalogTransmissions}:~
\maximumNorm{
  \preprocHybrid_\txIndex(
    \stateSpaceElement_{\txIndex,1}, \dots, \stateSpaceElement_{\txIndex,\numAnalogTransmissions}
  )
}
\leq
\analogAmplitudeConstraint,
\end{equation}
where $\gls{maximumNorm}$ denotes the maximum norm on real Euclidean spaces. This amplitude constraint automatically implies an average power constraint
\begin{equation}
\label{eq:analog-power-constraint}
\forall \stateSpaceElement_{\txIndex,1} \in \stateSpace_{\txIndex,1}, \dots, \stateSpaceElement_{\txIndex,\numAnalogTransmissions} \in \stateSpace_{\txIndex,\numAnalogTransmissions}:~
\frac{1}{\blocklength}
\euclidNorm{
  \preprocHybrid_\txIndex(
    \stateSpaceElement_{\txIndex,1}, \dots, \stateSpaceElement_{\txIndex,\numAnalogTransmissions}
  )
}^2
\leq
\analogAmplitudeConstraint^2,
\end{equation}
where \gls{euclidNorm} denotes the standard Euclidean norm.

Each digital transmitter $\analogTxNum + \txIndex$, $\txIndex \in \naturalsInitialSection{\digitalTxNum}$, holds a message $\gls{messageRV} \in \naturalsInitialSection{\gls{codebookSize}}$, where $\codebookSize_\txIndex$ is the number of different digital messages transmitter $\analogTxNum + \txIndex$ can convey. The message is passed through $\preprocHybrid_{\analogTxNum + \txIndex}$ which outputs a channel input sequence $\inputRV_{\analogTxNum + \txIndex}^\blocklength$. For the digital encoder outputs, we impose a maximum amplitude constraint
\begin{equation}
\label{eq:digital-amplitude-constraint}
\forall \messageRealization \in \naturalsInitialSection{\codebookSize_\txIndex}:~
\maximumNorm{
  \preprocHybrid_{\analogTxNum+\txIndex}(
    \messageRealization
  )
}
\leq
\gls{amplitudeConstraint}
\end{equation}
and a separate average power constraint
\begin{equation}
\label{eq:digital-power-constraint}
\forall \messageRealization \in \naturalsInitialSection{\codebookSize_\txIndex}:~
\frac{1}{\blocklength}
\euclidNorm{
  \preprocHybrid_{\analogTxNum+\txIndex}(
    \messageRealization
  )
}^2
\leq
\gls{powerconstraint}.
\end{equation}

The channel input sequences $\inputRV_1^\blocklength, \dots, \inputRV_{\txNum}^\blocklength$ are then passed through the $\blocklength$-fold product channel $\channelAWGN^\blocklength$. The receiver passes the channel output sequence $\outputRV^\blocklength$ through $\postprocHybrid$ which outputs analog \gls{ota} computed function estimates \gls{functionEstimates}, along with estimates \gls{messageEstimates} of the digital messages.

We assume that there is a tuple
\[
\gls{objectiveFunction}
\objectiveFunction^{(1)}:
  \stateSpace_{1,1} \times \dots \times \stateSpace_{\analogTxNum,1}
  \rightarrow
  \reals,
\dots,
\objectiveFunction^{(\numAnalogTransmissions)}:
  \stateSpace_{1,\numAnalogTransmissions} \times \dots \times \stateSpace_{\analogTxNum,\numAnalogTransmissions}
  \rightarrow
  \reals
\]
of functions (known to all transmitters and the receiver) that are to be \gls{ota} computed. The function estimates $\objectiveFunctionEstimate_1, \dots, \objectiveFunctionEstimate_\numAnalogTransmissions$ should closely approximate the functions $\objectiveFunction^{(1)}, \dots, \objectiveFunction^{(\indexAnalogTransmissions)}$.

\subsection{Error Criteria for \gls{otac}}
We next introduce criteria for the approximation quality of the \gls{otac} results at the receiver that we consider in this paper. All of these criteria derive from widely known and used mathematical approximation criteria.
\begin{definition}
\label{def:analog-approximation-criteria}
Let $\objectiveFunction: \stateSpace_1 \times \dots \times \stateSpace_\txNum \rightarrow \reals$ be a function that is to be \gls{ota} computed between $\txNum$ transmitters and a receiver, let, for all $\txIndex \in \naturalsInitialSection{\txNum}$, $\stateSpaceElement_\txIndex \in \stateSpace_\txIndex$ be the function argument that transmitter $\txIndex$ holds,\footnote{We treat each $\stateSpaceElement_\txIndex$ as a deterministic quantity that can take any value in $\stateSpace_\txNum$. For stochastic scenarios, this means in particular that direct results specialize to every probability distribution supported on $\stateSpace_\txNum$, and even to the case that there are stochastic dependencies between different $\stateSpaceElement_\txIndex$.} and let $\objectiveFunctionEstimate$ be the estimate of $\objectiveFunction(\stateSpaceElement_1, \dots, \stateSpaceElement_\txNum)$ which the receiver obtains as a result of carrying out the \gls{otac} scheme. We then define the following approximation criteria:
\begin{enumerate}
  \item \label{item:analog-approximation-criteria-tail}
  We say that $\objectiveFunctionEstimate$ $(\tail,\errorprob)$-approximates $\objectiveFunction$ if uniformly for all $\stateSpaceElement_1 \in \stateSpace_1, \dots, \stateSpaceElement_\txNum \in \stateSpace_\txNum$,
    \begin{equation*}
    \Probability\left(
      \absolutevalue{
        \objectiveFunction(\stateSpaceElement_1, \dots, \stateSpaceElement_\txNum)
        -
        \objectiveFunctionEstimate
      }
      >
      \tail
    \right)
    \leq
    \errorprob.
    \end{equation*}
  \item \label{item:analog-approximation-criteria-mse}
  We say that $\objectiveFunctionEstimate$ approximates $\objectiveFunction$ with \gls{mse} \gls{converseAnalogErrorVariance} if
    \begin{equation*}
    \Expectation\left(
      \left(
        \objectiveFunction(\stateSpaceElement_1, \dots, \stateSpaceElement_\txNum)
        -
        \objectiveFunctionEstimate
      \right)^2
    \right)
    \leq
    \converseAnalogErrorVariance
    \end{equation*}
  for all $\stateSpaceElement_1 \in \stateSpace_1, \dots, \stateSpaceElement_\txNum \in \stateSpace_\txNum$.
  \item \label{item:analog-approximation-criteria-gauss}
  We say that $\objectiveFunctionEstimate$ is a Gaussian approximation of $\objectiveFunction$ with variance $\converseAnalogErrorVariance$ if
    \begin{equation*}
    \objectiveFunctionEstimate
    \sim
    \normaldistribution{\objectiveFunction(\stateSpaceElement_1, \dots, \stateSpaceElement_\txNum)}{\converseAnalogErrorVariance}
    \end{equation*}
  for all $\stateSpaceElement_1 \in \stateSpace_1, \dots, \stateSpaceElement_\txNum \in \stateSpace_\txNum$.
\end{enumerate}
\end{definition}
In the following lemma, we summarize some known and immediate implications between these approximation criteria.
\begin{lemma}
\label{lemma:analog-approximation-implications}
Let $\objectiveFunction$ be the function that is \gls{ota} computed, and let $\objectiveFunctionEstimate$ be the estimate of $\objectiveFunction(\stateSpaceElement_1, \dots, \stateSpaceElement_\txNum)$ at the receiver. We then have the following relations between the criteria in Definition~\ref{def:analog-approximation-criteria}:
\begin{enumerate}
  \item \label{item:analog-approximation-implication-gauss}
  If $\objectiveFunctionEstimate$ is a Gaussian approximation of $\objectiveFunction$ with variance $\converseAnalogErrorVariance$, then
  \begin{enumerate}
  \item \label{item:analog-approximation-implication-gauss-mse}
  $\objectiveFunctionEstimate$ approximates $\objectiveFunction$ with \gls{mse} $\converseAnalogErrorVariance$, and
  \item \label{item:analog-approximation-implication-gauss-tail}
  for every $\tail \in (0,\infty)$, $\objectiveFunctionEstimate$ $(\tail,\errorprob)$-approximates $\objectiveFunction$ with
    \begin{equation*}
    \errorprob
    =
    \frac{1}{\tail}
    \sqrt{\frac{2}{\pi}}
    \exp\left(-\frac{\tail^2}{2}\right).
    \end{equation*}
  \end{enumerate}
  \item \label{item:analog-approximation-implication-tail}
  If for every $\tail \in (0,\infty)$, we have $\errorprob(\tail) \in [0,1]$ such that $\objectiveFunctionEstimate$ $(\tail,\errorprob(\tail))$-approximates $\objectiveFunction$ and $\errorprob$ is a measurable function of $\tail$, then $\objectiveFunctionEstimate$ approximates $\objectiveFunction$ with \gls{mse}
    \begin{equation*}
    \converseAnalogErrorVariance
    =
    \int_0^{\infty}
      \errorprob(\sqrt{\tail})
      d\tail.
    \end{equation*}
  \item \label{item:analog-approximation-implication-mse}
  If $\objectiveFunctionEstimate$ approximates $\objectiveFunction$ with \gls{mse} $\converseAnalogErrorVariance$, then, for every $\tail \in \reals$, $\objectiveFunctionEstimate$ $(\tail, \converseAnalogErrorVariance/\tail^2)$-approximates $\objectiveFunction$.
\end{enumerate}
\end{lemma}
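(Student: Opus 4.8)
The plan is to treat the four implications separately, in each case reducing to a standard one-line probabilistic estimate applied to the scalar approximation error $\objectiveFunctionEstimate - \objectiveFunction(\stateSpaceElement_1, \dots, \stateSpaceElement_\txNum)$. Throughout I fix an arbitrary argument tuple $\stateSpaceElement_1 \in \stateSpace_1, \dots, \stateSpaceElement_\txNum \in \stateSpace_\txNum$; since every criterion in Definition~\ref{def:analog-approximation-criteria} is required to hold uniformly over such tuples, it suffices to verify each bound pointwise in the arguments and then quantify over them at the end.

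For item~\ref{item:analog-approximation-implication-gauss-mse}, the claim is immediate: if $\objectiveFunctionEstimate \sim \normaldistribution{\objectiveFunction(\stateSpaceElement_1, \dots, \stateSpaceElement_\txNum)}{\converseAnalogErrorVariance}$, then $\objectiveFunctionEstimate - \objectiveFunction(\stateSpaceElement_1, \dots, \stateSpaceElement_\txNum)$ is centred Gaussian with variance $\converseAnalogErrorVariance$, so its second moment equals $\converseAnalogErrorVariance$, which is exactly the \gls{mse} criterion. For item~\ref{item:analog-approximation-implication-gauss-tail}, I would standardize and invoke the classical Gaussian tail bound: writing $Z := (\objectiveFunctionEstimate - \objectiveFunction(\stateSpaceElement_1, \dots, \stateSpaceElement_\txNum))/\sqrt{\converseAnalogErrorVariance} \sim \normaldistribution{0}{1}$, the elementary estimate
\begin{equation*}
\Probability(Z > t)
= \int_t^\infty \frac{1}{\sqrt{2\pi}} e^{-\generalReal^2/2}\, d\generalReal
\leq \int_t^\infty \frac{\generalReal}{t}\frac{1}{\sqrt{2\pi}} e^{-\generalReal^2/2}\, d\generalReal
= \frac{1}{t\sqrt{2\pi}} e^{-t^2/2}
\end{equation*}
(using $\generalReal/t \geq 1$ on the range of integration), together with the symmetry $\Probability(\absolutevalue{Z} > t) = 2\Probability(Z>t)$, yields the stated $\errorprob$. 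I note that as written the target $\errorprob$ carries no dependence on $\converseAnalogErrorVariance$, so this step implicitly fixes the normalization $\converseAnalogErrorVariance = 1$ (equivalently, $\tail$ is measured in standard deviations); for general $\converseAnalogErrorVariance$ the identical computation gives $\errorprob = \frac{\sqrt{\converseAnalogErrorVariance}}{\tail}\sqrt{2/\pi}\,\exp(-\tail^2/(2\converseAnalogErrorVariance))$, which I would flag in the write-up.

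For item~\ref{item:analog-approximation-implication-tail}, the key tool is the layer-cake (tail-integral) representation of the expectation of the nonnegative random variable $W := (\objectiveFunctionEstimate - \objectiveFunction(\stateSpaceElement_1, \dots, \stateSpaceElement_\txNum))^2$. Since $\Probability(W > \tail) = \Probability(\absolutevalue{\objectiveFunctionEstimate - \objectiveFunction(\stateSpaceElement_1, \dots, \stateSpaceElement_\txNum)} > \sqrt{\tail}) \leq \errorprob(\sqrt{\tail})$ for every $\tail > 0$, I would write
\begin{equation*}
\Expectation(W)
= \int_0^\infty \Probability(W > \tail)\, d\tail
\leq \int_0^\infty \errorprob(\sqrt{\tail})\, d\tail
= \converseAnalogErrorVariance,
\end{equation*}
which is the \gls{mse} criterion. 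Finally, item~\ref{item:analog-approximation-implication-mse} is Chebyshev's inequality applied to $W$: for $\tail > 0$ (the nontrivial regime) we have $\Probability(\absolutevalue{\objectiveFunctionEstimate - \objectiveFunction(\stateSpaceElement_1, \dots, \stateSpaceElement_\txNum)} > \tail) = \Probability(W > \tail^2) \leq \Expectation(W)/\tail^2 \leq \converseAnalogErrorVariance/\tail^2$.

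None of the four steps is genuinely hard; the only points requiring care are the elementary Gaussian tail estimate in item~\ref{item:analog-approximation-implication-gauss-tail} (together with the normalization caveat noted above) and the justification of the tail-integral identity in item~\ref{item:analog-approximation-implication-tail}. There the measurability hypothesis on $\errorprob$ is precisely what is needed so that $\tail \mapsto \errorprob(\sqrt{\tail})$ is integrable and the monotone layer-cake formula (equivalently Tonelli's theorem for the product of the probability measure with Lebesgue measure on $[0,\infty)$) applies, making the interchange of expectation and integral legitimate.
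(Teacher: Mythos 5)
Your proof follows essentially the same route as the paper's: item~\ref{item:analog-approximation-implication-gauss-mse} is immediate, item~\ref{item:analog-approximation-implication-gauss-tail} is the standard Gaussian tail bound (the paper cites \cite[Proposition 2.1.2]{vershynin2018high} where you derive it directly), item~\ref{item:analog-approximation-implication-tail} is the layer-cake tail-integral formula (the paper cites \cite[eq. 21.9]{billingsley1995probability}), and item~\ref{item:analog-approximation-implication-mse} is Chebyshev. Your normalization caveat on item~\ref{item:analog-approximation-implication-gauss-tail} is well taken: the stated $\errorprob$ carries no dependence on $\converseAnalogErrorVariance$, and the paper's own proof silently treats the error as standard normal, so the displayed bound is only valid as written when $\converseAnalogErrorVariance \leq 1$; the general form $\errorprob = \frac{\sqrt{\converseAnalogErrorVariance}}{\tail}\sqrt{2/\pi}\,\exp\left(-\tail^2/(2\converseAnalogErrorVariance)\right)$ you give is the correct one.
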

\begin{proof}
Item \ref{item:analog-approximation-implication-gauss-mse} is immediate. For item \ref{item:analog-approximation-implication-gauss-tail}, we apply \cite[Proposition 2.1.2]{vershynin2018high} to derive
\[
\Probability\left(
  \absolutevalue{
    \objectiveFunction(\stateSpaceElement_1, \dots, \stateSpaceElement_\txNum)
    -
    \objectiveFunctionEstimate
  }
  \geq
  \tail
\right)
=
\Probability\left(
  \objectiveFunction(\stateSpaceElement_1, \dots, \stateSpaceElement_\txNum)
  -
  \objectiveFunctionEstimate
  \geq
  \tail
\right)
+
\Probability\left(
  \objectiveFunctionEstimate
  -
  \objectiveFunction(\stateSpaceElement_1, \dots, \stateSpaceElement_\txNum)
  \geq
  \tail
\right)
=
2
\cdot
\frac{1}{\tail}
\frac{1}{\sqrt{2\pi}}
\exp\left(-\frac{\tail^2}{2}\right).
\]
For item \ref{item:analog-approximation-implication-tail}, we use~\cite[eq. 21.9]{billingsley1995probability} to obtain
\[
\Expectation\left(
  \left(
    \objectiveFunctionEstimate
    -
    \objectiveFunction(\stateSpaceElement_1, \dots, \stateSpaceElement_\txNum)
  \right)^2
\right)
=
\int_0^\infty
  \Probability\left(
    \left(
      \objectiveFunction(\stateSpaceElement_1, \dots, \stateSpaceElement_\txNum)
      -
      \objectiveFunctionEstimate
    \right)^2
    \geq
    \tail
  \right)
  d\tail
=
\int_0^\infty
  \Probability\left(
    \absolutevalue{
      \objectiveFunction(\stateSpaceElement_1, \dots, \stateSpaceElement_\txNum)
      -
      \objectiveFunctionEstimate
    }
    \geq
    \sqrt{\tail}
  \right)
  d\tail.
\]
For item \ref{item:analog-approximation-implication-mse}, we proceed very closely to the proof of Chebyshev's inequality and get
\[
\Probability\left(
  \absolutevalue{
    \objectiveFunction(\stateSpaceElement_1, \dots, \stateSpaceElement_\txNum)
    -
    \objectiveFunctionEstimate
  }
  \geq
  \tail
\right)
=
\Probability\left(
  \left(
    \objectiveFunction(\stateSpaceElement_1, \dots, \stateSpaceElement_\txNum)
    -
    \objectiveFunctionEstimate
  \right)^2
  \geq
  \tail^2
\right)
\leq
\frac{
  \Expectation\left(
    \left(
      \objectiveFunction(\stateSpaceElement_1, \dots, \stateSpaceElement_\txNum)
      -
      \objectiveFunctionEstimate
    \right)^2
  \right)
}{
  \tail^2
}.
\qedhere
\]
\end{proof}
So Gaussian approximation is the most stringent criterion, implying both \gls{mse} approximation and exponentially good $(\tail,\errorprob)$-approximation, with the other two notions implying each other.

\subsection{Digital Decoding Error}
For the digital messages, we consider the reconstruction error probability
\begin{equation}
\label{eq:reconstruction-error}
\decodingError
:=
\Probability\left(
  \messageRV_1 \neq \hat{\messageRV}_1
  \text{ or }
  \dots
  \text{ or }
  \messageRV_{\digitalTxNum} \neq \hat{\messageRV}_{\digitalTxNum}
\right),
\end{equation}
under uniform distributions of $\messageRV_1, \dots, \messageRV_{\digitalTxNum}$ (average error criterion).

\subsection{Framework for \gls{socc}}

For any given \gls{socc} scheme, we define its digital rates as $(\log(\codebookSize_1)/\blocklength, \dots, \log(\codebookSize_{\digitalTxNum})/\blocklength)$. Moreover, we define its analog rate as $\numAnalogTransmissions/\blocklength$.

\begin{definition}
\label{def:hads-achievable}
For $\gls{codebookRate}\codebookRate_1, \dots, \codebookRate_{\digitalTxNum}, \converseAnalogErrorVariance \in [0,\infty), \analograte \in [0,1]$, we say that \gls{socc} with digital rates $(\codebookRate_1, \dots, \codebookRate_{\digitalTxNum})$, analog rate \gls{analograte}, and analog error $\converseAnalogErrorVariance$ is \emph{achievable} over a channel $\channel$ if there is a sequence of \gls{socc} schemes (i.e., one scheme for each block length $\blocklength \in \naturals$) such that
\begin{enumerate}
  \item \label{item:hads-achievable-digital-rates} for all schemes in the sequence, the digital rates are (component-wise) at least $(\codebookRate_1, \dots, \codebookRate_{\digitalTxNum})$;
  \item \label{item:hads-achievable-analog-rate} for all of them, the analog rate is at least $\analograte$;
  \item \label{item:hads-achievable-constraints} all schemes in the sequence satisfy the power and amplitude constraints (\ref{eq:analog-amplitude-constraint}), (\ref{eq:digital-amplitude-constraint}), (\ref{eq:digital-power-constraint});
  \item \label{item:hads-achievable-digital-error} we have $\lim_{\blocklength \rightarrow \infty} \decodingError \rightarrow 0$ with $\decodingError$ defined in (\ref{eq:reconstruction-error});
  \item \label{item:hads-achievable-analog-error} all except finitely many schemes in the sequence have the property that $\objectiveFunctionEstimate_1, \dots, \objectiveFunctionEstimate_{\analogTxNum}$ approximate, respectively, $\objectiveFunction^{(1)}, \dots, \objectiveFunction^{(\numAnalogTransmissions)}$ with \gls{mse} $\converseAnalogErrorVariance$.
\end{enumerate}
\end{definition}
In this work, we make a first step towards determining the region of digital and analog rates with which \gls{socc} is achievable, at least for special cases of the framework defined above. In the following, we assume that all \gls{ota} computed functions are sum functions, i.e.,
\begin{equation}
\label{eq:sum-function-restriction}
\objectiveFunction^{(\indexAnalogTransmissions)}: (\stateSpaceElement_1, \dots, \stateSpaceElement_{\analogTxNum}) \mapsto \stateSpaceElement_1 + \dots + \stateSpaceElement_{\analogTxNum},
\text{ with domains }
\stateSpace_{1,1} = \dots = \stateSpace_{1,\numAnalogTransmissions} = \dots = \stateSpace_{\analogTxNum,1} = \dots = \stateSpace_{\analogTxNum,\numAnalogTransmissions} = [-1,1].
\end{equation}
In particular for the converse bounds it is clear that they cannot generalize to arbitrary functions.\footnote{Consider, for instance, the extreme case of \gls{ota} computed functions that map every point in their domain to the same value. Clearly, these functions can be computed at an arbitrarily high analog rate without impacting on the digital transmissions at all by letting the analog transmitters input $0$ to the channel.} However, slight variations of both the converse and achievable bounds we propose in this paper hold for significantly larger classes of functions as well (see Remark~\ref{remark:nonsum} for details).

\section{Main Result}
\label{sec:main-result}

For $\converseAnalogErrorVariance \in [0,\infty)$ and $\analograte \in [0,1]$, we define
\begin{multline*}
\hadsCapacity{\analograte}{\converseAnalogErrorVariance}
:=
\closure
\{
  (\codebookRate_1, \dots, \codebookRate_{\digitalTxNum})
  :~
  \text{\gls{socc} with digital rates }
  \codebookRate_1, \dots, \codebookRate_{\digitalTxNum},
  \text{ analog rate }
  \analograte
  \text{ and analog error }
  \converseAnalogErrorVariance
  \\
  \text{ is achievable over the channel $\channelAWGN$ for }
  \objectiveFunction^{(1)}, \dots, \objectiveFunction^{(\numAnalogTransmissions)}
  \text{ given in (\ref{eq:sum-function-restriction})}
\},
\end{multline*}
where \gls{closure} denotes topological closure. Before we state our main result regarding $\hadsCapacity{\analograte}{\converseAnalogErrorVariance}$, we define the rate regions
\begin{align}
\label{eq:inner-bound}
\hadsAchievable{\analograte}{\analograte'}
&:=
\left\{
  (\codebookRate_1, \dots, \codebookRate_{\digitalTxNum})
  ~:~
  \left(
    \frac{\codebookRate_1}{1-\analograte'},
    \dots,
    \frac{\codebookRate_{\digitalTxNum}}{1-\analograte'}
  \right)
  \in
  \gls{constrainedMacCapacity}
  \constrainedMacCapacity{\channelAWGN}\left(
    \frac{\powerconstraint_{\analogTxNum+1}}{1-\analograte},
    \dots,
    \frac{\powerconstraint_{\txNum}}{1-\analograte},
    \frac{\amplitudeConstraint_{\analogTxNum+1}}{\frac{\sqrt{2}}{\sqrt{2}-1}},
    \dots,
    \frac{\amplitudeConstraint_{\txNum}}{\frac{\sqrt{2}}{\sqrt{2}-1}}
  \right)
\right\}
\\
\label{eq:outer-bound}
\hadsConverse{\analograte}{\converseAnalogErrorVariance}
&:=
\left\{
  (\codebookRate_1, \dots, \codebookRate_{\digitalTxNum})
  ~:~
  \forall \digitalTxSubset \subseteq \naturalsInitialSection{\digitalTxNum}~
  \sum_{\txIndex \in \digitalTxSubset} \codebookRate_{\txIndex}
  \leq
  \min_{\txIndex \in \{0, \dots, \analogTxNum\}}\left(
    \unconstrainedGaussianCapacity\left(
      \frac{\sum_{\txIndex \in \digitalTxSubset} \powerconstraint_{\analogTxNum+\txIndex} + \txIndex^2\analogAmplitudeConstraint^2}
          {\noisestd^2}
    \right)
    -
    \positivePart{
      \frac{\analograte}{2}\log\frac{2\txIndex^2}{\pi\eulersNumber\converseAnalogErrorVariance}
    }
  \right)
\right\},
\end{align}
where for any $\generalReal \in \reals$, we define \gls{positivePart}, $\unconstrainedGaussianCapacity$ denotes the Gaussian capacity function
\[
\gls{unconstrainedGaussianCapacity}:
[0,\infty) \rightarrow [0,\infty),~
\generalReal
\mapsto
\frac{1}{2}
\log(1+\generalReal),
\]
and $\constrainedMacCapacity{\channel}$ denotes the (digital communication only) capacity region of a channel $\channel$ under average power constraint and amplitude constraint which is defined as
\begin{multline}
\label{eq:constrained-mac-capacity}
\constrainedMacCapacity{\channel}(
  \powerconstraint_1, \dots, \powerconstraint_\txNum,
  \amplitudeConstraint_1, \dots, \amplitudeConstraint_\txNum
)
=
\closure\big\{
  (\codebookRate_1, \dots, \codebookRate_\txNum)
  :~
  (\codebookRate_1, \dots, \codebookRate_\txNum)
  \text{ is achievable for $\channel$}
  \\
  \text{under average power constraints }
  \powerconstraint_1, \dots, \powerconstraint_\txNum
  \text{ and amplitude constraints }
  \amplitudeConstraint_1, \dots, \amplitudeConstraint_\txNum
\big\}.
\end{multline}
As noted by the authors, some of the methods proposed in~\cite{mamandipoor2014capacity} can be applied to determine this capacity region. For simplicity, we will instead be using inner bounds based on the capacity achieving distributions for power and amplitude constrained channels in the single user case~\cite{smith1971information} and outer bounds based on the Gaussian capacity function (i.e., the region of rates that are achievable if the amplitude constraints are disregarded). Particularly in the case in which the amplitude constraints are reasonably generous compared to the power constraints, this yields bounds which are not overly loose as we discuss in Section~\ref{sec:gaussian-mac}.

\begin{figure}
\centering
\begin{subfigure}[t]{.45\textwidth}
\begin{tikzpicture}
\begin{axis}[
  xlabel={$\digitalTxNum$},
  ylabel={sum rate in nats},
  legend pos=south east
]
  \addplot[black,dotted,mark=*,mark options=solid] table[x=k, y=capacity_nats] {gaussian_capacity_function_multiple_8dB.csv};
  \addplot[black,dashed,mark=o,mark options=solid] table[x=num_digital_tx, y=sum_rate_bound_nats, col sep=comma] {converse_power_digital_8dB_analog_5dB_beta_1_over_10.csv};
  \addplot[black,mark=square*] table[x=num_tx, y=achievable_sum_rate_nats, col sep=comma] {achievable_power_digital_8dB_analog_5dB_beta_1_over_10.csv};
  \legend{trivial converse, converse, achievable};
\end{axis}
\end{tikzpicture}
\caption{Analog rate fixed at $\analograte$ just below $1/10$, number $\digitalTxNum$ of digital transmitters variable.}
\label{fig:capacity-bounds-num-tx-variable}
\end{subfigure}
\hspace{.05\textwidth}
\begin{subfigure}[t]{.45\textwidth}
\begin{tikzpicture}
\begin{axis}[
  xlabel={$\analograte$},
  ylabel={sum rate in nats},
  legend pos=south west
]
  \addplot[black,dotted,domain=0:0.5] {1.9051835220871938};
  \addplot[black,dashed] table[x=analog_rate,y=sum_rate_bound_nats, col sep=comma] {converse_power_digital_8dB_analog_5dB_num_digital_tx_7.csv};
  \addplot[black] table[x=analog_rate,y=achievable_sum_rate_nats, col sep=comma] {achievable_power_digital_8dB_analog_5dB_num_digital_tx_7.csv};
  \legend{trivial converse, converse, achievable};
\end{axis}
\end{tikzpicture}
\caption{Number of digital transmitters fixed at $\digitalTxNum=7$, analog rate $\analograte$ variable. The achievable sum rate changes whenever $\analograte$ crosses a threshold of the form $1/\generalNaturalFour, \generalNaturalFour \in \naturals$. The achievable and converse curves are not monotonous since the analog approximation \gls{mse} changes in dependence of $\analograte$.}
\label{fig:capacity-bounds-analograte-variable}
\end{subfigure}
\caption{Theorem~\ref{theorem:capacity} sum rate bounds for $\noisestd^2 = \qty{0}{\dB}, \analogAmplitudeConstraint = \qty{2.5}{\dB}, \powerconstraint_{\analogTxNum+1} = \dots = \powerconstraint_{\txNum} = \qty{8}{\dB}, \amplitudeConstraint_{\analogTxNum+1} = \cdots = \amplitudeConstraint_\txNum = 2\sqrt{2\powerconstraint_\txIndex}$. The analog approximation is with \gls{mse} $\converseAnalogErrorVariance := \analograte' \noisestd^2/\analogAmplitudeConstraint^2$. \emph{Trivial converse} is the known sum rate bound for power-constrained Gaussian \gls{mac} without analog transmitters or \gls{otac}, \emph{converse} is the second inclusion in Theorem~\ref{theorem:capacity}, and \emph{achievable} is the first inclusion in Theorem~\ref{theorem:capacity}, where $\constrainedMacCapacityAchievable{\channelAWGN}$ defined in Section~\ref{sec:gaussian-mac} is used as a bound for $\constrainedMacCapacity{\channelAWGN}$. There need to be at least four analog transmitters in the system for all converse bounds shown in these figures to hold.}
\label{fig:capacity-bounds}
\end{figure}

In the course of this study, we prove the following bounds for $\hadsCapacity{\cdot}{\cdot}$:
\begin{theorem}
\label{theorem:capacity}
Suppose $\analograte \in (0,1)$ and define $\analograte' := \min\{1/\generalNaturalFour :~ \generalNaturalFour \in \naturals,~ 1/\generalNaturalFour > \analograte\}$. Then we have
\[
\hadsAchievable{\analograte}{\analograte'} \subseteq \hadsCapacity{\analograte}{\analograte'\noisestd^2/\analogAmplitudeConstraint^2} \subseteq \hadsConverse{\analograte}{\analograte'\noisestd^2/\analogAmplitudeConstraint^2},
\]
where $\hadsAchievable{\analograte}{\analograte'}$ is defined in (\ref{eq:inner-bound}) and $\hadsConverse{\analograte}{\analograte'\noisestd^2/\analogAmplitudeConstraint^2}$ is defined in (\ref{eq:outer-bound}).
\end{theorem}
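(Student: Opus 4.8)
The statement sandwiches the operational region $\hadsCapacity{\analograte}{\analograte'\noisestd^2/\analogAmplitudeConstraint^2}$ between an inner and an outer bound, so I would prove the two set inclusions separately.

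\emph{Achievability} ($\hadsAchievable{\analograte}{\analograte'} \subseteq \hadsCapacity{\analograte}{\analograte'\noisestd^2/\analogAmplitudeConstraint^2}$). The plan is to partition the $\blocklength$ channel uses into $\blocklength\analograte'$ blocks of length $T := 1/\analograte'$ and, within each block, split the $T$-dimensional signal space into the one-dimensional ``all-ones'' (repetition) direction and its $(T-1)$-dimensional orthogonal complement. Analog transmitters place their amplitude-$\analogAmplitudeConstraint$-scaled values on the all-ones direction, so the receiver recovers a sum estimate by averaging each block, while digital transmitters confine their codewords to the orthogonal complement. Because the two subspaces are orthogonal, the digital codewords average to zero over a block (they do not perturb the analog estimate) and the projection onto the complement annihilates the constant analog component (the analog signal does not perturb the digital decoder). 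Starting from any code achieving the scaled rates $\codebookRate_\txIndex/(1-\analograte')$ under the scaled constraints appearing in $\constrainedMacCapacity{\channel}(\cdots)$, I would embed its codewords into the complement and read the block-averaged output for the analog part; averaging $T$ i.i.d. noise samples and rescaling by $\analogAmplitudeConstraint$ then yields analog MSE $\noisestd^2/(\analogAmplitudeConstraint^2 T)=\analograte'\noisestd^2/\analogAmplitudeConstraint^2$, exactly the distortion parameter in the claim.

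It remains to check the constraint, rate, and error conditions of Definition~\ref{def:hads-achievable}. The digital rate is scaled by $1-\analograte'$ because only $T-1$ of every $T$ dimensions carry data, which is why $\codebookRate_\txIndex/(1-\analograte')$ must lie in the base region; since only $\blocklength\analograte$ of the $\blocklength\analograte'$ blocks need to carry a genuine analog computation, the per-dimension digital energy budget is the more generous $\powerconstraint/(1-\analograte)$ rather than $\powerconstraint/(1-\analograte')$, which is the source of the asymmetric appearance of $\analograte$ and $\analograte'$. The digital error vanishes because, after projection, the digital sub-channel is exactly the base MAC with Gaussian noise of the same variance, so the base code's reliability transfers. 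The main obstacle here is the peak-amplitude accounting: a codeword living in the orthogonal complement must, once written in the per-symbol coordinates $\maximumNorm{\cdot}$, still respect $\maximumNorm{\cdot}\le\amplitudeConstraint$, and bounding $\max_i\lvert\sum_j c_j (v_j)_i\rvert$ in terms of $\maximumNorm{c}$ for a suitable orthonormal basis $\{v_j\}$ of the complement is precisely what forces the constant $\sqrt{2}/(\sqrt{2}-1)$ by which the amplitude budget must be shrunk.

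\emph{Converse} ($\hadsCapacity{\analograte}{\analograte'\noisestd^2/\analogAmplitudeConstraint^2} \subseteq \hadsConverse{\analograte}{\analograte'\noisestd^2/\analogAmplitudeConstraint^2}$). This extends the genie-aided Gaussian-MAC converse. For a fixed $\digitalTxSubset \subseteq \naturalsInitialSection{\digitalTxNum}$ and each $j\in\{0,\dots,\analogTxNum\}$, I would condition on all transmitters outside $\digitalTxSubset$ and outside a chosen set of $j$ analog transmitters, reducing to a channel whose input is the superposition of the digital codewords in $\digitalTxSubset$ and $j$ analog signals. Fano's inequality bounds $\blocklength\sum_{\txIndex\in\digitalTxSubset}\codebookRate_\txIndex$ by the mutual information this reduced channel delivers, and single-letterization with the Gaussian maximum-entropy bound caps it by $\blocklength\,\unconstrainedGaussianCapacity((\sum_{\txIndex\in\digitalTxSubset}\powerconstraint_{\analogTxNum+\txIndex}+j^2\analogAmplitudeConstraint^2)/\noisestd^2)$, where the worst-case term $j^2\analogAmplitudeConstraint^2$ arises because the $j$ peak-limited analog inputs may align to combined amplitude $j\analogAmplitudeConstraint$. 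The novelty is that the same reduced channel must also carry the analog computation to fidelity $\converseAnalogErrorVariance=\analograte'\noisestd^2/\analogAmplitudeConstraint^2$: conditioning on the remaining $\analogTxNum-j$ analog values turns each target sum into a partial sum of $j$ values in $[-1,1]$, hence supported on $[-j,j]$. Because the MSE criterion holds uniformly over all inputs, I am free to impose the maximum-entropy (uniform on $[-j,j]$, $\diffEntropyRV{\cdot}=\log(2j)$) source distribution and invoke the Shannon lower bound, so that computing $\numAnalogTransmissions$ estimates at analog rate $\analograte$ consumes at least $\analograte\,\tfrac{1}{2}\log\tfrac{2j^2}{\pi\eulersNumber\converseAnalogErrorVariance}$ of the available mutual information; subtracting this and taking $\positivePart{\cdot}$ (to avoid a vacuous negative bound at large distortion) and then minimizing over $j$ gives $\hadsConverse{\analograte}{\converseAnalogErrorVariance}$. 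The hardest step is making this subtraction rigorous, namely combining the Fano bound for the discrete messages and the rate-distortion lower bound for the continuous estimates into a single mutual-information inequality against the reduced channel, and justifying the worst-case source choice against the uniform-over-inputs fidelity requirement rather than a fixed prior.
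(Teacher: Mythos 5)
Your proposal follows essentially the same route as the paper: the achievability is exactly the paper's construction (Lemmas~\ref{lemma:orthogonal}--\ref{lemma:digital-mac-with-ota-c}), embedding digital codewords into the orthogonal complement of the all-ones direction in each block with the same $\sqrt{2}/(\sqrt{2}-1)$ peak-amplitude penalty and block-averaging for the analog estimates, and the converse is the paper's Lemma~\ref{lemma:converse} and Corollary~\ref{cor:converse} — Fano for the messages, a uniform-on-$[-j,j]$ source with the maximum-entropy (Shannon-lower-bound-style) argument for the analog estimates, and the $j^2\analogAmplitudeConstraint^2$ worst-case alignment in the Gaussian capacity bound. You also correctly identify the two places where the real work lies (the max-norm bound on the embedding basis, and making the differential-entropy subtraction rigorous, which the paper handles by smoothing $\hat{\generalRV}$ with a small Gaussian and letting its variance tend to zero).
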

\begin{proof}
The first inclusion (achievability) is proved in Lemma~\ref{lemma:digital-mac-with-ota-c}, and the second inclusion (converse) in Corollary~\ref{cor:converse}. For exact details on how Lemma~\ref{lemma:digital-mac-with-ota-c} implies the achievability part of Theorem~\ref{theorem:capacity}, see the end of Section~\ref{sec:random-hybrid}.
\end{proof}

In Fig.~\ref{fig:capacity-bounds}, we show a numerical example of the converse and achievable bounds given in Theorem~\ref{theorem:capacity}. The result establishes that there is a meaningful tradeoff between analog computations and digital communications on the same channel resources. We are also able to give a first impression of how this tradeoff looks numerically, however, questions regarding inner and outer bounds that are tighter and valid for arbitrary parameter choices remain open for further research. One open question in particular is the decoupling between $\analograte$ and $\converseAnalogErrorVariance$. While the converse result states a bound (\ref{eq:outer-bound}) on the digital rates for every choice of $\analograte$ and $\converseAnalogErrorVariance$, the achievability part of Theorem~\ref{theorem:capacity} gives a result only for
\begin{equation}
\label{eq:analograte-mse-relationship}
\converseAnalogErrorVariance = \analograte'\noisestd^2/\analogAmplitudeConstraint^2,
\end{equation}
with $\analograte'$ defined in the statement of Theorem~\ref{theorem:capacity}. This functional relationship means that as $\analograte$ varies, so does $\converseAnalogErrorVariance$. In Fig.~\ref{fig:capacity-bounds-analograte-variable}, we therefore plot both the inner and outer bound on the achievable sum rate for values of $\converseAnalogErrorVariance$ that vary according to the functional relationship (\ref{eq:analograte-mse-relationship}). Both $\converseAnalogErrorVariance$ (upon which the converse bound depends) and the achievable digital sum rate have a dependence on $\analograte'$. Since $\analograte'$ changes only when $\analograte$ crosses a threshold of the form $1/\generalNaturalFour$ with $\generalNaturalFour \in \naturals$, there is a visible jump in both the achievable and converse curves in Fig.~\ref{fig:capacity-bounds-analograte-variable} resulting in a zigzag pattern. Since the jumps in the graph occur only at multiplicative inverses of integers, they grow further apart as $\analograte$ increases. It is a highly relevant question for further research if achievable schemes can be proposed for \emph{any} combination of $\analograte$ and $\converseAnalogErrorVariance$. This includes both the question of what digital rates are achievable for pairs of $\analograte$, $\converseAnalogErrorVariance$ that do not satisfy (\ref{eq:analograte-mse-relationship}), and the question if performance can be improved in the cases where $\analograte$ is not just below the multiplicative inverse of an integer (i.e., $\analograte \not\approx \analograte'$). It should be noted, however, that despite this restriction, the \gls{socc} scheme proposed in this paper is already suitable for many practical communication systems that involve \gls{otac}. We give a few examples for this and discuss possible directions for future research to remove or at least mitigate this restriction in Section~\ref{sec:analograte-mse-relationship}.

It should also be noted that the gap between inner and outer bounds can be arbitrarily large in some cases, such as when the amplitude constraints approach zero. However, our numerical evaluations indicate that the behavior we show in Fig.~\ref{fig:capacity-bounds} is fairly typical for most practically relevant parameter regimes. The behavior of the plots shows that the gap is present, but does not grow in an extreme way.

\begin{remark}
\label{remark:time-sharing-amplitude-constraint}
\emph{(Time sharing comparison with amplitude constraints).}
If the same analog amplitude constraints are to be observed as in an equivalent \gls{socc} scheme and the same analog computation rate and \gls{mse} is to be achieved, then a time sharing scheme would need to spend \emph{all} channel uses for the analog computations. Therefore, such a time sharing scheme would always achieve a digital sum rate of $0$. If, on the other hand, the same digital sum rate needed to be achieved, the relationship \eqref{eq:analograte-mse-relationship} between analog rate and \gls{mse} means that only one single channel use is available for each analog computation which would result in a significantly higher \gls{mse}, especially for small values of $\analograte$. For the parameters used in Fig.~\ref{fig:capacity-bounds-analograte-variable}, the \gls{mse} of the equivalent time sharing scheme would be approximately $0.32$ for all values of $\analograte$ while the \gls{mse} of the \gls{socc} scheme would vanish for $\analograte \rightarrow 0$ and be approximately $0.16$ at $\analograte = 0.5$.
\end{remark}

\begin{remark}
\label{remark:time-sharing}
\emph{(\gls{socc} without amplitude constraints).}
If we drop the amplitude constraints and retain only the average power constraints (\ref{eq:analog-power-constraint}) and (\ref{eq:digital-power-constraint}), there is a straightforward \gls{socc} scheme based on time sharing that achieves the inner bound of Theorem~\ref{theorem:capacity}:

Let $\analograte$ be an analog rate, let $\analograte'$ be defined as in Theorem~\ref{theorem:capacity}, and let $(\codebookRate_1, \dots, \codebookRate_{\digitalTxNum})$ be a tuple of digital rates such that $(\codebookRate_1/(1-\analograte'), \dots, \codebookRate_{\digitalTxNum}/(1-\analograte'))$ is an inner point of the capacity region of the Gaussian \gls{mac} with noise power $\noisestd^2$ under average power constraints $(\powerconstraint_{\analogTxNum+1}/(1-\analograte'), \dots, \powerconstraint_{\txNum}/(1-\analograte'))$. Then there is $\analograte'' > \analograte'$ such that $(\codebookRate_1/(1-\analograte''), \dots, \codebookRate_{\digitalTxNum}/(1-\analograte''))$ is achievable for the Gaussian \gls{mac} as well. Fix a sequence of codes that achieves these rates, and define a \gls{socc} scheme for each block length $\blocklength$ as follows:
\begin{itemize}
  \item Let the number of analog \gls{ota} computations be $\numAnalogTransmissions := \lfloor \blocklength\analograte' \rfloor$.
  \item Define the analog pre-processors (i.e., $\txIndex \in \naturalsInitialSection{\analogTxNum}$) as follows:
  \[
    \preprocHybrid_\txIndex(\stateSpaceElement_{\txIndex,1}, \dots, \stateSpaceElement_{\txIndex,\numAnalogTransmissions})
    :=
    \left(
      \stateSpaceElement_{\txIndex,1} \cdot \frac{\analogAmplitudeConstraint}{\sqrt{\analograte'}},
      \dots,
      \stateSpaceElement_{\txIndex,\numAnalogTransmissions} \cdot \frac{\analogAmplitudeConstraint}{\sqrt{\analograte'}},
      0,\dots,0
    \right)
  \]
  \item Transmit the digital messages during the last $\lfloor \blocklength(1-\analograte') \rfloor$ channel uses of the transmission block (being encoded with a code of appropriate block length) and transmit $0$ during all other channel uses.
  \item Since the analog and digital transmissions are completely separated in time domain, we can use the Gaussian \gls{mac} decoder on the last $\lfloor \blocklength(1-\analograte') \rfloor$ channel outputs to decode the digital messages, and for $\indexAnalogTransmissions \in \naturalsInitialSection{\numAnalogTransmissions}$, we compute the analog function estimate as
  \[
    \objectiveFunctionEstimate_\indexAnalogTransmissions
    :=
    \outputRV_\indexAnalogTransmissions \cdot \frac{\sqrt{\analograte'}}{\analogAmplitudeConstraint}.
  \]
\end{itemize}
The total power output of the analog pre-processors is at most $\numAnalogTransmissions \cdot \analogAmplitudeConstraint^2/\analograte' \leq \analogAmplitudeConstraint^2 \blocklength$, satisfying (\ref{eq:analog-power-constraint}), and due to the average power constraint for the digital Gaussian \gls{mac} code, the total power of the output of digital transmitter $\txIndex$ is at most $\lfloor \blocklength(1-\analograte') \rfloor \powerconstraint_{\analogTxNum+\txIndex}/(1-\analograte') \leq \blocklength\powerconstraint_{\analogTxNum+\txIndex}$, satisfying (\ref{eq:digital-power-constraint}). The digital rate of transmitter $\txIndex$ is $\codebookRate_\txIndex/(1-\analograte'') \cdot \lfloor \blocklength(1-\analograte') \rfloor / \blocklength \geq \codebookRate_\txIndex$ for sufficiently large $\blocklength$. As for the number of analog transmissions, for sufficiently large $\blocklength$, we have $\numAnalogTransmissions = \lfloor \blocklength \analograte' \rfloor \geq \blocklength \analograte$. The analog function estimates can be written as
\[
  \objectiveFunctionEstimate_\indexAnalogTransmissions
  =
  (
    \inputRV_{1,\indexAnalogTransmissions} + \dots + \inputRV_{\analogTxNum,\indexAnalogTransmissions} + \noiseRV_\indexAnalogTransmissions
  )
  \cdot
  \frac{\sqrt{\analograte'}}{\analogAmplitudeConstraint}
  =
  \left(
    \stateSpaceElement_{1,\indexAnalogTransmissions}\cdot\frac{\analogAmplitudeConstraint}{\sqrt{\analograte'}}
    +
    \dots
    +
    \stateSpaceElement_{\analogTxNum,\indexAnalogTransmissions}\cdot\frac{\analogAmplitudeConstraint}{\sqrt{\analograte'}}
    +
    \noiseRV_\indexAnalogTransmissions
  \right)
  \cdot
  \frac{\sqrt{\analograte'}}{\analogAmplitudeConstraint}
  =
  \stateSpaceElement_{1,\indexAnalogTransmissions}
  +
  \dots
  +
  \stateSpaceElement_{\analogTxNum,\indexAnalogTransmissions}
  +
  \noiseRV_\indexAnalogTransmissions
  \cdot
  \frac{\sqrt{\analograte'}}{\analogAmplitudeConstraint},
\]
and hence, the \gls{mse} is $\analograte'\noisestd^2/\analogAmplitudeConstraint^2$.

Therefore, this rather straightforward \gls{socc} scheme shows the achievability bound of Theorem~\ref{theorem:capacity}. However, the peak power consumption of the analog transmitters is $\analogAmplitudeConstraint^2/\analograte'$, which can make the scheme highly unpractical for small values of the analog rate $\analograte'$. In particular, in the case in which the analog rate is $0$ in the sense that $\numAnalogTransmissions$ grows only sub-linearly in $\blocklength$, the peak power consumption at the analog transmitters tends to infinity with the block length.\footnote{Theorem~\ref{theorem:capacity} does not strictly speaking cover this case, however, the achievability result of Lemma~\ref{lemma:digital-mac-with-ota-c} does apply, as we discuss in Remark~\ref{remark:zero-rate}.} Since the absence of amplitude constraints in our scenario permits this straightforward scheme of very limited practical significance, we argue that it is particularly relevant to analyze this communication system subject not only to average power, but also peak amplitude constraints. However, we remark that in case of large $\analograte$, this very simple transmission scheme could also be considered for implementation and the scheme we propose in Section~\ref{sec:achievability} may not be necessary.
\end{remark}

\begin{remark}
\label{remark:nonsum}
\emph{(Theorem~\ref{theorem:capacity} with functions other than sums of values in $[-1,1]$).} Theorem~\ref{theorem:capacity} does not hold for the \gls{otac} of arbitrary functions, however, it does hold for some functions that are not sums, albeit with slightly more complicated expressions for the bounds. In this paper, we give a full formal result statement only for achievability over a slightly different channel and a slightly different analog error criterion (see Corollary~\ref{cor:nom}), but in the following, we briefly sketch how inner and outer bounds akin to the ones in Theorem~\ref{theorem:capacity} can be derived for some functions other than sums.

\emph{Achievability} can be shown for functions in $\Fmon$ (see Definition~\ref{def:Fmon}) for the non-fading real channels treated in Theorem~\ref{theorem:capacity} with a minor modification of the argument used in the proof of Corollary~\ref{cor:nom}. In doing so, $(\tail,\errorprob)$-approximation is proved instead of approximation with \gls{mse} $\converseAnalogErrorVariance$. So Lemma~\ref{lemma:analog-approximation-implications}-\ref{item:analog-approximation-implication-mse} has to be used to obtain a \gls{mse} analog approximation guarantee fully matching the framework of Theorem~\ref{theorem:capacity}.

The \emph{converse} bound also does not at its core rely on the computed function being a sum function. The proof of Lemma~\ref{lemma:converse} can be checked to confirm that the only property $\objectiveFunction$ has to satisfy is that for every $\analogTxSubset \subseteq \naturalsInitialSection{\analogTxNum}$, there is a distribution of $(\stateSpaceElement_\txIndex)_{\txIndex \in \naturalsInitialSection{\analogTxNum}}$ such that if $\stateSpaceElement_\txIndex = 0$ for all $\txIndex \notin \analogTxSubset$, we have that $\objectiveFunction(\stateSpaceElement_1, \dots, \stateSpaceElement_{\analogTxNum})$ follows a uniform distribution in $[-\cardinality{\analogTxSubset},\cardinality{\analogTxSubset}]$. It is only a minor modification to prove this also for a different range of uniform distribution, but this will slightly change the expression for the converse bound.

In conclusion, a statement of the style of Theorem~\ref{theorem:capacity} can be made for every function which satisfies both of the criteria described in the previous two paragraphs. We expect that many practically relevant functions will satisfy both; one example of a function that does is the $2$-norm which we show in our simulations in Section~\ref{sec:simulations} can also be computed with \gls{socc}.
\end{remark}

\section{Achievability}
\label{sec:achievability}
In this section, we prove the first inclusion of Theorem~\ref{theorem:capacity}. To this end, we show in Subsection~\ref{sec:construction} that any code for digital communication over the Gaussian \gls{mac} can be transformed into a code for \gls{socc} simply by composing the encoding and decoding operations with suitable additional linear processing. In Subsection~\ref{sec:random-hybrid}, we apply this construction to random codes for the amplitude-constrained Gaussian \gls{mac} to prove the first inclusion of Theorem~\ref{theorem:capacity}. We remark that although this involvement of random codes makes the present work predominantly theoretical, it is straightforward to also apply the construction in Subsection~\ref{sec:construction} to practically relevant codes such as \gls{ldpc} or polar codes. Finally, in Subsection~\ref{sec:gaussian-mac}, we discuss how to numerically evaluate simple inner and outer bounds for the rate region $\constrainedMacCapacity{\channelAWGN}$ that appears in Theorem~\ref{theorem:capacity}.

\subsection{Construction of \gls{socc}-Compatible Digital Codes}
\label{sec:construction}
In this section, we propose a construction of encoding schemes for the digital messages which are compatible with analog \gls{otac} in the sense that they are (under certain assumptions) indifferent to whether an analog \gls{otac}  is happening on the same channel resources and to what value is computed. And conversely, in a sense made precise below, the encoded signals do not disturb the analog \glspl{otac}. Our construction uses an existing digital coding scheme for the \gls{mac} and composes it with a map that is carefully constructed to achieve the above goals.

The following lemma shows the existence of certain linear maps that we will then use for the construction outlined above.

\begin{lemma}
\label{lemma:orthogonal}
Let $\generalNatural \in \naturals$. Then there are linear maps $\planeMap_\generalNatural: \reals^{\generalNatural-1} \rightarrow \reals^\generalNatural$ and $\planeMapInv_\generalNatural: \reals^\generalNatural \rightarrow \reals^{\generalNatural-1}$ (the matrix transpose of $\planeMap_\generalNatural$) with the following properties:
\begin{enumerate}
 \item \label{item:orthogonal-inv} $\planeMapInv_\generalNatural \circ \planeMap_\generalNatural = \identityMapping{\reals^{\generalNatural-1}}$, where $\circ$ denotes composition and $\identityMapping{\reals^{\generalNatural-1}}$ denotes the identity mapping on $\reals^{\generalNatural-1}$.
 \item \label{item:orthogonal-plane} Let $\generalReal^{\generalNatural-1} \in \reals^{\generalNatural-1}$ be arbitrary and let $\generalRealTwo^\generalNatural := \planeMap_\generalNatural(\generalReal^{\generalNatural-1})$. Then $\generalRealTwo_1 + \dots + \generalRealTwo_\generalNatural = 0$.
 \item \label{item:orthogonal-zero} $\planeMapInv_\generalNatural(1, \dots, 1) = 0$.
 \item \label{item:orthogonal-norm} Let $\generalReal^{\generalNatural-1} \in \reals^{\generalNatural-1}$ be arbitrary. Then $\euclidNorm{\planeMap_\generalNatural(\generalReal^{\generalNatural-1})}=\euclidNorm{\generalReal^{\generalNatural-1}}$.
 \item \label{item:orthogonal-max} Let $\generalReal^{\generalNatural-1} \in \reals^{\generalNatural-1}$ be arbitrary. Then $\maximumNorm{\planeMap_\generalNatural(\generalReal^{\generalNatural-1})} < \maximumNorm{\generalReal^{\generalNatural-1}} \cdot \sqrt{2}/(\sqrt{2}-1) < \maximumNorm{\generalReal^{\generalNatural-1}} \cdot 3.42$.
 \end{enumerate}
\end{lemma}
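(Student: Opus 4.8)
The plan is to recognize that properties \ref{item:orthogonal-inv}--\ref{item:orthogonal-norm} together say nothing more than: the columns of $\planeMap_\generalNatural$ form an orthonormal basis of the hyperplane $\plane := \{\generalRealTwo \in \reals^\generalNatural : \innerproduct{\generalRealTwo}{\onevector} = 0\}$, where $\onevector = (1,\dots,1)$, and then to concentrate all the real effort on the amplitude bound \ref{item:orthogonal-max}. Indeed, if $\planeMap_\generalNatural$ has orthonormal columns, then $\planeMapInv_\generalNatural \planeMap_\generalNatural = \identityMapping{\reals^{\generalNatural-1}}$ (this is \ref{item:orthogonal-inv}) and $\euclidNorm{\planeMap_\generalNatural(\generalReal^{\generalNatural-1})}^2 = \transpose{(\generalReal^{\generalNatural-1})} \planeMapInv_\generalNatural \planeMap_\generalNatural \generalReal^{\generalNatural-1} = \euclidNorm{\generalReal^{\generalNatural-1}}^2$ (this is \ref{item:orthogonal-norm}); if in addition every column lies in $\plane$, i.e.\ is orthogonal to $\onevector$, then the image of $\planeMap_\generalNatural$ lies in $\plane$ (this is \ref{item:orthogonal-plane}) and $\planeMapInv_\generalNatural \onevector = 0$ (this is \ref{item:orthogonal-zero}). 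Since $\dim \plane = \generalNatural - 1$, any orthonormal family of $\generalNatural-1$ vectors contained in $\plane$ is automatically a basis of $\plane$, so the entire construction reduces to producing one such family with good $\maximumNorm{\cdot}$ behaviour.

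For the construction I would use a recursive, balanced binary split of the coordinate set. Writing $\generalNatural = a + b$ with $a = \lceil \generalNatural / 2 \rceil$ and $b = \lfloor \generalNatural / 2 \rfloor$, I decompose $\plane$ orthogonally into the zero-sum subspace supported on the first $a$ coordinates, the zero-sum subspace supported on the last $b$ coordinates, and the one-dimensional span of the ``cross'' vector $\basisTwo$ that is constant on each block, lies in $\plane$, and has unit norm (explicitly, $\basisTwo$ has entries $\sqrt{b/(a\generalNatural)}$ on the first block and $-\sqrt{a/(b\generalNatural)}$ on the second). Recursively supplying orthonormal bases of the two zero-sum subspaces via $\planeMap_a$ and $\planeMap_b$ (padded with zeros on the complementary block) and appending $\basisTwo$ yields an orthonormal basis of $\plane$: mutual orthogonality is immediate since the two blocks are disjointly supported, while $\basisTwo$ is constant on each block and the block bases are zero-sum. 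This is a balanced, Haar-type basis, and it is precisely the balancing that controls the amplitude growth; the naive ``peel off one coordinate at a time'' (Helmert) basis also satisfies \ref{item:orthogonal-inv}--\ref{item:orthogonal-norm}, but its amplitude expansion grows like $\log \generalNatural$ and hence fails \ref{item:orthogonal-max}.

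The amplitude bound \ref{item:orthogonal-max} is the crux. Unrolling the recursion, a fixed output coordinate $m$ of $\planeMap_\generalNatural(\generalReal^{\generalNatural-1})$ is a sum over the $\landauO(\log \generalNatural)$ recursion levels, where level $j$ contributes one entry of $\generalReal^{\generalNatural-1}$ (of modulus at most $\maximumNorm{\generalReal^{\generalNatural-1}}$) times the value at $m$ of that level's cross vector. The key quantitative facts are that the cross-vector value seen by $m$ at a split of a block of size $s$ is at most $1/\sqrt{s}$ when $m$ falls into the larger half and at most $1/\sqrt{\lfloor s/2 \rfloor}$ when it falls into the smaller half, and that the size of the block containing $m$ at least halves at each level. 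Morally these contributions are dominated by the geometric sequence $2^{-k/2}$ across scales, so that $\maximumNorm{\planeMap_\generalNatural(\generalReal^{\generalNatural-1})} \leq \maximumNorm{\generalReal^{\generalNatural-1}} \sum_{k \geq 0} 2^{-k/2} = \maximumNorm{\generalReal^{\generalNatural-1}} \cdot \sqrt{2}/(\sqrt{2}-1)$, with $\sqrt{2}/(\sqrt{2}-1) < 3.42$.

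I expect the main obstacle to be making this last estimate rigorous and strict for \emph{general} $\generalNatural$. The difficulty is a genuine coupling: when $m$ lands in the larger half the cross value is small ($\leq 1/\sqrt{s}$) but the block shrinks slowly, whereas a smaller-half split gives the larger cross value but shrinks the block fast, so a single per-level magnitude bound summed against a single block-size-growth estimate does not close (the extreme path whose split-block sizes are $2,3,5,9,\dots$ overshoots $\sqrt2/(\sqrt2-1)$ under any crude uniform bound). Because $\lceil s/2 \rceil$ and $\lfloor s/2 \rfloor$ differ for odd $s$, this rounding must be tracked through the recursion while respecting the coupling. The cleanest way I foresee to close it is an induction on $\generalNatural$ proving the slightly strengthened statement $\maximumNorm{\planeMap_\generalNatural(\generalReal^{\generalNatural-1})} \leq \maximumNorm{\generalReal^{\generalNatural-1}}\, (B - B\,\generalNatural^{-1/2})$ with $B = \sqrt{2}/(\sqrt{2}-1)$, whose decaying slack $B\,\generalNatural^{-1/2}$ is exactly enough to absorb the one additional cross-vector contribution introduced at the top level in each branch of the recursion.
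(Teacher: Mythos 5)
Your proposal is correct and its skeleton matches the paper's: items 1--4 are reduced to exhibiting an orthonormal basis of the zero-sum hyperplane, the basis is built by recursive binary splitting with one ``cross'' column per split, and item 5 is obtained by bounding the maximum absolute row sum, which collects one cross-column entry per recursion level and is dominated by a geometric-type series summing to $\sqrt{2}/(\sqrt{2}-1)$. The one genuine divergence is the odd case: the paper peels off a single coordinate (for odd $\generalNatural$ it stacks $\planeMap_{\generalNatural-1}$ over a zero row and appends a column with entries $1/\sqrt{\generalNatural^2-\generalNatural}$ and $-\sqrt{(\generalNatural-1)/\generalNatural}$), so that with $\generalNatural=2^{\generalNaturalTwo}+\generalNaturalTwo'$ the row sums obey the clean bound $\sum_{j=1}^{\generalNaturalTwo}2^{-j/2}+\generalNaturalTwo'/2^{\generalNaturalTwo}$, the odd-step contribution $1/\sqrt{\generalNatural^2-\generalNatural}\le 1/(\generalNatural-1)$ being absorbed into $\generalNaturalTwo'/2^{\generalNaturalTwo}$. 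You instead use balanced $\lceil\generalNatural/2\rceil/\lfloor\generalNatural/2\rfloor$ splits, and your diagnosis is right on both counts: a uniform per-level bound does not close, but the strengthened hypothesis $\maximumNorm{\planeMap_\generalNatural(\generalReal^{\generalNatural-1})}\le B(1-\generalNatural^{-1/2})\maximumNorm{\generalReal^{\generalNatural-1}}$ with $B=\sqrt{2}/(\sqrt{2}-1)=2+\sqrt{2}$ does. Indeed, for the sub-block of size $c\in\{\lceil\generalNatural/2\rceil,\lfloor\generalNatural/2\rfloor\}$ containing the row, the cross entry is $\sqrt{(\generalNatural-c)/(c\generalNatural)}$ and the inductive step reduces to $\sqrt{\generalNatural}+\sqrt{c}\le B\sqrt{\generalNatural-c}$, whose worst case is $\generalNatural=3$, $c=2$, where $\sqrt{3}+\sqrt{2}<2+\sqrt{2}$. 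So both constructions work: the paper's asymmetric odd step buys a simpler induction indexed by the binary expansion of $\generalNatural$, while your balanced split keeps the matrix Haar-like throughout at the cost of the slightly more delicate decaying-slack induction.
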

The proof of Lemma~\ref{lemma:orthogonal} is relegated to Appendix~\ref{appendix:lemma-orthogonal}.

\begin{figure}
    \centering
    \begin{tikzpicture}[xscale=.95]
        \coordinate (in1) at (0,3);
        \coordinate (inK) at (0,0);
        
        \node[rectangle,draw,minimum width=1.5cm,minimum height=1cm] (enc1) at (2,3) {$\preproc_1$};
        \node (dots) at (2,1.5) {\Shortstack{. . . . . .}};
        \node[rectangle,draw,minimum width=1.5cm,minimum height=1cm] (encK) at (2,0) {$\preproc_\txNum$};
        
        \node[rectangle,draw,minimum width=2cm,minimum height=1cm] (planeMap1) at (5.5,3) {$\planeMap_{\blocklength_1} \oplus \dots \oplus \planeMap_{\blocklength_\numAnalogTransmissions}$};
        \node[rectangle,draw,minimum width=2cm,minimum height=1cm] (planeMapK) at (5.5,0) {$\planeMap_{\blocklength_1} \oplus \dots \oplus \planeMap_{\blocklength_\numAnalogTransmissions}$};
        
        \node[rectangle,draw,minimum width=1.5cm,minimum height=5cm,align=center] (compoundChannel) at (8.75,1.5) {$\compoundChannel{\stateSpaceElement_1}^{\blocklength_1}$ \\ $\otimes$ \\ $\vdots$ \\ $\otimes$ \\ $\compoundChannel{\stateSpaceElement_\numAnalogTransmissions}^{\blocklength_\numAnalogTransmissions}$};
        
        \node[rectangle,draw,minimum width=2cm,minimum height=1cm] (planeMapInv) at (12,1.5) {$\planeMapInv_{\blocklength_1} \oplus \dots \oplus \planeMapInv_{\blocklength_\numAnalogTransmissions}$};
        
        \node[rectangle,draw,minimum width=1.5cm,minimum height=1cm] (dec) at (15.5,1.5) {$\postproc$};
        
        \coordinate (out) at (19,1.5);
        
        \draw[->] (in1) -- (enc1) node[midway,above] {$\messageRV_1$};
        \draw[->] (inK) -- (encK) node[midway,above] {$\messageRV_\txNum$};
        \draw[->] (enc1) -- (planeMap1) node[midway,above] {$\inputRV_1^{\blocklength-\numAnalogTransmissions}$};
        \draw[->] (encK) -- (planeMapK) node[midway,above] {$\inputRV_\txNum^{\blocklength-\numAnalogTransmissions}$};
        \draw[->] (planeMap1) -- (planeMap1-|compoundChannel.west) node[midway,above] {${\inputRV_1'}^\blocklength$};
        \draw[->] (planeMapK) -- (planeMapK-|compoundChannel.west) node[midway,above] {${\inputRV_\txNum'}^\blocklength$};
        \draw[->] (compoundChannel) -- (planeMapInv) node[midway,above] {${\outputRV'}^\blocklength$};
        \draw[->] (planeMapInv) -- (dec) node[midway,above] {$\outputRV^{\blocklength-\numAnalogTransmissions}$};
        \draw[->] (dec) -- (out) node[midway,above] {$\hat{\messageRV}_1, \dots, \hat{\messageRV}_\txNum$};
        
        \draw[red] (1,3.75) -- (7.05,3.75) node[pos=.4,above] {$\preproc_1'$} -- (7.05,2.25) -- (1,2.25) -- (1,3.75);
        \draw[red] (1,0.75) -- (7.05,0.75) node[pos=.4,above] {$\preproc_\txNum'$} -- (7.05,-.75) -- (1,-.75) -- (1,0.75);
        \draw[red] (10.45,2.25) -- (16.5,2.25) node[pos=.6,above] {$\postproc'$} -- (16.5,.75) -- (10.45,.75) -- (10.45,2.25);
        \draw[blue] (3.95,4.25) -- (13.55,4.25) node[midway,above] {$\channelAWGN^{\blocklength-\numAnalogTransmissions}$} -- (13.55,-1.25) -- (3.95,-1.25) -- (3.95,4.25);
    \end{tikzpicture}
    \caption{Graphical overview of the proof idea of Lemma~\ref{lemma:zerosum}. The notation $\planeMap_{\blocklength_1} \oplus \dots \oplus \planeMap_{\blocklength_\numAnalogTransmissions}$ means that the operations $\planeMap_{\blocklength_1}, \dots, \planeMap_{\blocklength_\numAnalogTransmissions}$ are applied in parallel to consecutive, nonoverlapping blocks of lengths $\blocklength_1 - 1, \dots, \blocklength_\numAnalogTransmissions - 1$.}
    \label{figure:zerosum}
\end{figure}

We next introduce some terminology in preparation for the proofs of the following lemmas. Given a block length $\blocklength$ \gls{mac} $\channel^{(\blocklength)}$ with input alphabets \gls{inputAlphabets} and output alphabet \gls{outputAlphabet} at each channel use\footnote{That is, $\channel^{(\blocklength)}$ is a stochastic kernel from $\inputAlphabet_1^\blocklength \times \cdots \times \inputAlphabet_\txNum^\blocklength$ to $\outputAlphabet^\blocklength$.}, we call $(\gls{preproc}, \gls{postproc})$ an $\gls{decodingError}$-code for $\channel^{(\blocklength)}$ of size $(\codebookSize_1,\dots, \codebookSize_\txNum)$ if $\preproc_1, \dots, \preproc_\txNum, \postproc$ have the following properties:
\begin{itemize}
  \item $\preproc_1: \naturalsInitialSection{\codebookSize_1} \rightarrow \inputAlphabet_1^\blocklength, \dots, \preproc_\txNum: \naturalsInitialSection{\codebookSize_\txNum} \rightarrow \inputAlphabet_\txNum^\blocklength$
  \item $\postproc: \outputAlphabet^\blocklength \rightarrow \naturalsInitialSection{\codebookSize_1} \times \cdots \times \naturalsInitialSection{\codebookSize_\txNum}$
  \item The code has average decoding error at most $\decodingError$. That is, if messages $\messageRV_1 \in \naturalsInitialSection{\codebookSize_1}, \dots, \messageRV_\txNum \in \naturalsInitialSection{\codebookSize_\txNum}$ are drawn uniformly at random and the input sequences $\preproc_1(\messageRV_1), \dots, \preproc_\txNum(\messageRV_\txNum)$ are transmitted through the channel $\channel^{(\blocklength)}$ yielding the channel output sequence $\outputRV^\blocklength$, then
  \[
    \Probability\big(
      \postproc(\outputRV^\blocklength) \neq (\messageRV_1, \dots, \messageRV_\txNum)
    \big)
    \leq
    \decodingError.
  \]
\end{itemize}

The following lemma shows how to apply Lemma~\ref{lemma:orthogonal} to communication systems. The proof is based on the idea of defining modified encoders and decoders by adding additional pre- and post-processing steps to the original ones. We show that these additional steps in conjunction with the natural channel ``emulate'' a channel that allows us to draw conclusions from the decoding error guarantee of the original encoders and decoder. As a conceptual tool for our proof, we introduce a version of the Gaussian \gls{mac} with bias. For $\stateSpaceElement \in \reals$, let $\compoundChannel{\stateSpaceElement}$ be the \gls{mac} given by
\begin{equation}
\label{eq:channel-awgn-bias}
\outputRV = \inputRV_1 + \dots + \inputRV_\txNum + \stateSpaceElement + \noiseRV.
\end{equation}
The idea of the proof is illustrated in Fig.~\ref{figure:zerosum} which can also serve as an overview of the mathematical symbols we use.

\begin{lemma}
\label{lemma:zerosum}
Let $\blocklength_1, \dots, \blocklength_\numAnalogTransmissions \in \naturals$ and $\blocklength := \blocklength_1 + \dots + \blocklength_\numAnalogTransmissions$. Suppose $(\preproc_1, \dots, \preproc_\txNum, \postproc)$ is an $\decodingError$-code of size $(\codebookSize_1,\dots, \codebookSize_\txNum)$ for the channel $\channelAWGN^{\blocklength-\numAnalogTransmissions}$, the $(\blocklength-\numAnalogTransmissions)$-fold product of the channel $\channelAWGN$ given in \eqref{eq:channel-awgn}. Then there is an $\decodingError$-code $(\preproc_1', \dots, \preproc_\txNum', \postproc')$ of size  $(\codebookSize_1,\dots, \codebookSize_\txNum)$ for the channel\footnote{$\compoundChannel{\stateSpaceElement_1}^{\blocklength_1} \otimes \dots \otimes \compoundChannel{\stateSpaceElement_\numAnalogTransmissions}^{\blocklength_\numAnalogTransmissions}$ denotes the product channel obtained by using the channel $\compoundChannel{\stateSpaceElement_1}$ consecutively $\blocklength_1$ times, $\dots$, using the channel $\compoundChannel{\stateSpaceElement_\numAnalogTransmissions}$ consecutively $\blocklength_\numAnalogTransmissions$ times.} $\compoundChannel{\stateSpaceElement_1}^{\blocklength_1} \otimes \dots \otimes \compoundChannel{\stateSpaceElement_\numAnalogTransmissions}^{\blocklength_\numAnalogTransmissions}$ with the following properties:
\begin{enumerate}
 \item \label{item:zerosum-power} \emph{(Conservation of total power).}
   $
   \forall \txIndex \in \naturalsInitialSection{\txNum}~
   \forall \messageRealization \in \naturalsInitialSection{\codebookSize_\txIndex}~
   \euclidNorm{\preproc_\txIndex'(\messageRealization)}
   =
   \euclidNorm{\preproc_\txIndex(\messageRealization)}
   $.
 \item \label{item:zerosum-peakpower} \emph{(Peak amplitude bound).}
   $
     \forall \txIndex \in \naturalsInitialSection{\txNum}~
     \forall \messageRealization \in \naturalsInitialSection{\codebookSize_\txIndex}~
     \maximumNorm{\preproc_\txIndex'(\messageRealization)} \leq \sqrt{2}/(\sqrt{2}-1)\maximumNorm{\preproc_\txIndex(\messageRealization)} < 3.42\maximumNorm{\preproc_\txIndex(\messageRealization)}
   $.
 \item \label{item:zerosum} \emph{(Zero sums).}
   Let
   $
     \txIndex \in \naturalsInitialSection{\txNum},
     \messageRealization \in \naturalsInitialSection{\codebookSize_\txIndex}
   $
   and
   $
     \indexAnalogTransmissions \in \naturalsInitialSection{\numAnalogTransmissions}
   $
   be arbitrary and define ${\inputRV'}^\blocklength := \preproc_\txIndex'(\messageRealization)$. Then, $\inputRV_{\blocklength_1 + \dots + \blocklength_{\indexAnalogTransmissions-1} + 1}' + \dots + \inputRV_{\blocklength_1 + \dots + \blocklength_{\indexAnalogTransmissions}}' = 0$.
\end{enumerate}
\end{lemma}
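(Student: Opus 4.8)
The plan is to build $(\preproc_1', \dots, \preproc_\txNum', \postproc')$ by sandwiching the given code between the orthogonal maps of Lemma~\ref{lemma:orthogonal}, exactly as depicted in Fig.~\ref{figure:zerosum}. I would split the length-$(\blocklength-\numAnalogTransmissions)$ output of each $\preproc_\txIndex$ into consecutive blocks of lengths $\blocklength_1-1, \dots, \blocklength_\numAnalogTransmissions-1$ and set
\[
\preproc_\txIndex' := (\planeMap_{\blocklength_1} \oplus \dots \oplus \planeMap_{\blocklength_\numAnalogTransmissions}) \circ \preproc_\txIndex, \qquad \postproc' := \postproc \circ (\planeMapInv_{\blocklength_1} \oplus \dots \oplus \planeMapInv_{\blocklength_\numAnalogTransmissions}).
\]
Since each $\planeMap_{\blocklength_\indexAnalogTransmissions}$ maps $\reals^{\blocklength_\indexAnalogTransmissions-1} \to \reals^{\blocklength_\indexAnalogTransmissions}$ and $\sum_\indexAnalogTransmissions(\blocklength_\indexAnalogTransmissions-1) = \blocklength - \numAnalogTransmissions$, each $\preproc_\txIndex'$ maps into $\reals^\blocklength$ and $\postproc'$ consumes a length-$\blocklength$ channel output, so the types match the target channel $\compoundChannel{\stateSpaceElement_1}^{\blocklength_1} \otimes \dots \otimes \compoundChannel{\stateSpaceElement_\numAnalogTransmissions}^{\blocklength_\numAnalogTransmissions}$.

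The three listed properties would then follow directly and blockwise from Lemma~\ref{lemma:orthogonal}. The zero-sum property (item~\ref{item:zerosum}) is immediate from item~\ref{item:orthogonal-plane}, as each block of $\preproc_\txIndex'(\messageRealization)$ lies in the image of the corresponding $\planeMap_{\blocklength_\indexAnalogTransmissions}$ and hence has entries summing to zero. Conservation of total power (item~\ref{item:zerosum-power}) follows from item~\ref{item:orthogonal-norm} applied to each block and summing squared norms. The peak-amplitude bound (item~\ref{item:zerosum-peakpower}) follows from item~\ref{item:orthogonal-max}, since $\maximumNorm{\preproc_\txIndex'(\messageRealization)}$ is the maximum over blocks of the block maximum norms, each of which is at most $\sqrt{2}/(\sqrt{2}-1)$ times the maximum norm of its input block, and every input-block maximum norm is bounded by $\maximumNorm{\preproc_\txIndex(\messageRealization)}$.

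The crux is the decoding-error guarantee, which I would obtain by showing that $\postproc'$ sees exactly the output law of $\channel^{\blocklength-\numAnalogTransmissions}$ applied to the original codewords. Fix a message tuple $\messageRealization_1, \dots, \messageRealization_\txNum$ and consider block $\indexAnalogTransmissions$: the channel $\compoundChannel{\stateSpaceElement_\indexAnalogTransmissions}^{\blocklength_\indexAnalogTransmissions}$ produces the sum of the $\planeMap_{\blocklength_\indexAnalogTransmissions}$-images of the original block codewords, plus the bias $\stateSpaceElement_\indexAnalogTransmissions \onevector_{\blocklength_\indexAnalogTransmissions}$, plus i.i.d.\ $\normaldistribution{0}{\noisestd^2}$ noise. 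Applying $\planeMapInv_{\blocklength_\indexAnalogTransmissions}$ and using linearity, item~\ref{item:orthogonal-inv} ($\planeMapInv_{\blocklength_\indexAnalogTransmissions}\planeMap_{\blocklength_\indexAnalogTransmissions} = \identityMapping{\reals^{\blocklength_\indexAnalogTransmissions-1}}$) recovers the original block codewords, and item~\ref{item:orthogonal-zero} ($\planeMapInv_{\blocklength_\indexAnalogTransmissions}\onevector_{\blocklength_\indexAnalogTransmissions} = 0$) annihilates the bias. The one genuinely probabilistic point — and the step I expect to be the main obstacle — is that the transformed noise $\planeMapInv_{\blocklength_\indexAnalogTransmissions}\noiseRV^{\blocklength_\indexAnalogTransmissions}$ is again i.i.d.\ $\normaldistribution{0}{\noisestd^2}$. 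This is where Gaussianity is essential: because the columns of $\planeMap_{\blocklength_\indexAnalogTransmissions}$ are orthonormal (item~\ref{item:orthogonal-inv}), the map $\planeMapInv_{\blocklength_\indexAnalogTransmissions}$ has orthonormal rows and carries the isotropic Gaussian $\normaldistribution{0}{\noisestd^2 I_{\blocklength_\indexAnalogTransmissions}}$ to $\normaldistribution{0}{\noisestd^2 I_{\blocklength_\indexAnalogTransmissions-1}}$; for a non-Gaussian noise law this distributional invariance would fail.

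Assembling the blocks, the per-block transformed noises are independent (the product channel uses independent noise across uses) and concatenate into a single length-$(\blocklength-\numAnalogTransmissions)$ i.i.d.\ $\normaldistribution{0}{\noisestd^2}$ vector $\tilde{\noiseRV}$. Hence the input to $\postproc$ equals $\inputRV_1^{\blocklength-\numAnalogTransmissions} + \dots + \inputRV_\txNum^{\blocklength-\numAnalogTransmissions} + \tilde{\noiseRV}$, which has exactly the output distribution of $\channel^{\blocklength-\numAnalogTransmissions}$ on the codewords $\preproc_1(\messageRealization_1), \dots, \preproc_\txNum(\messageRealization_\txNum)$. Therefore, conditioned on each message tuple the error event of $\postproc'$ has the same probability as that of $\postproc$ on the original channel; averaging over uniformly drawn messages yields decoding error at most $\decodingError$, completing the construction.
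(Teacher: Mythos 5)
Your proposal is correct and follows essentially the same route as the paper's proof: blockwise application of the maps $\planeMap_{\blocklength_\indexAnalogTransmissions}$ at the encoders and $\planeMapInv_{\blocklength_\indexAnalogTransmissions}$ at the decoder, with the three listed properties read off from Lemma~\ref{lemma:orthogonal} and the error guarantee obtained by showing that the transformed output has exactly the law of $\channel^{\blocklength-\numAnalogTransmissions}$ applied to the original codewords. You correctly isolate the one probabilistic step — that $\planeMapInv_{\blocklength_\indexAnalogTransmissions}$ carries isotropic Gaussian noise to isotropic Gaussian noise via orthonormality — which is precisely the argument the paper makes.
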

\begin{proof}
For every $\indexAnalogTransmissions$, we use the maps $\planeMap_{\blocklength_\indexAnalogTransmissions}, \planeMapInv_{\blocklength_\indexAnalogTransmissions}$ from Lemma~\ref{lemma:orthogonal}. For the rest of this proof, fix $\messageRealization = (\messageRealization_1, \dots, \messageRealization_\txNum)$. For every $\txIndex$, denote $\inputRV_\txIndex^{\blocklength-\numAnalogTransmissions} := \preproc_\txIndex(\messageRealization_\txIndex)$ and define $\preproc_\txIndex'(\messageRealization_\txIndex) := {\inputRV_\txIndex'}^\blocklength$ by
\[
\forall \indexAnalogTransmissions \in \naturalsInitialSection{\numAnalogTransmissions}~
 (\inputRV_{\txIndex, \blocklength_1 + \dots + \blocklength_{\indexAnalogTransmissions-1} + 1}', \dots, \inputRV_{\txIndex, \blocklength_1 + \dots + \blocklength_{\indexAnalogTransmissions}}')
 :=
 \planeMap_{\blocklength_\indexAnalogTransmissions}(\inputRV_{\txIndex, \blocklength_1 + \dots + \blocklength_{\indexAnalogTransmissions-1} - \indexAnalogTransmissions + 2}, \dots, \inputRV_{\txIndex, \blocklength_1 + \dots + \blocklength_{\indexAnalogTransmissions} - \indexAnalogTransmissions}),
\]
where for $\indexAnalogTransmissions=1$, we use the convention $\blocklength_1 + \dots + \blocklength_0 = 0$.

Item \ref{item:zerosum} is immediate from Lemma~\ref{lemma:orthogonal}-\ref{item:orthogonal-plane}. To prove item \ref{item:zerosum-power}, we apply Lemma~\ref{lemma:orthogonal}-\ref{item:orthogonal-norm} and obtain
\begin{align*}
\euclidNorm{{\inputRV_\txIndex'}^\blocklength}^2
&=
\sum_{\indexAnalogTransmissions=1}^\numAnalogTransmissions
  \euclidNorm{
    \planeMap_{\blocklength_\indexAnalogTransmissions}(\inputRV_{\txIndex, \blocklength_1 + \dots + \blocklength_{\indexAnalogTransmissions-1} - \indexAnalogTransmissions + 2}, \dots, \inputRV_{\txIndex, \blocklength_1 + \dots + \blocklength_{\indexAnalogTransmissions} - \indexAnalogTransmissions})
  }^2
\\
&=
\sum_{\indexAnalogTransmissions=1}^\numAnalogTransmissions
  \euclidNorm{
    (\inputRV_{\txIndex, \blocklength_1 + \dots + \blocklength_{\indexAnalogTransmissions-1} - \indexAnalogTransmissions + 2}, \dots, \inputRV_{\txIndex, \blocklength_1 + \dots + \blocklength_{\indexAnalogTransmissions} - \indexAnalogTransmissions})
  }^2
\\
&=
\euclidNorm{\inputRV_\txIndex^{\blocklength-\numAnalogTransmissions}}^2.
\end{align*}
Similarly, to prove item \ref{item:zerosum-peakpower}, we apply Lemma~\ref{lemma:orthogonal}-\ref{item:orthogonal-max} and get
\begin{align*}
\maximumNorm{{\inputRV'}^\blocklength}
&=
\max_{\indexAnalogTransmissions\in\naturalsInitialSection{\numAnalogTransmissions}}
  \maximumNorm{
    \planeMap_{\blocklength_\indexAnalogTransmissions}(\inputRV_{\txIndex, \blocklength_1 + \dots + \blocklength_{\indexAnalogTransmissions-1} - \indexAnalogTransmissions + 2}, \dots, \inputRV_{\txIndex, \blocklength_1 + \dots + \blocklength_{\indexAnalogTransmissions} - \indexAnalogTransmissions})
  }
\\
&\leq
\sqrt{2}/(\sqrt{2}-1)
\max_{\indexAnalogTransmissions\in\naturalsInitialSection{\numAnalogTransmissions}}
  \maximumNorm{
    (\inputRV_{\txIndex, \blocklength_1 + \dots + \blocklength_{\indexAnalogTransmissions-1} - \indexAnalogTransmissions + 2}, \dots, \inputRV_{\txIndex, \blocklength_1 + \dots + \blocklength_{\indexAnalogTransmissions} - \indexAnalogTransmissions})
  }
\\
&=
\sqrt{2}/(\sqrt{2}-1)\maximumNorm{\inputRV_\txIndex^{\blocklength-\numAnalogTransmissions}}.
\end{align*}

We denote the channel output obtained by passing ${\inputRV'}^\blocklength$ through $\compoundChannel{\stateSpaceElement_1}^{\blocklength_1} \otimes \dots \otimes \compoundChannel{\stateSpaceElement_\numAnalogTransmissions}^{\blocklength_\numAnalogTransmissions}$ as ${\outputRV'}^\blocklength$ and define $\outputRV^{\blocklength-\numAnalogTransmissions}$ by
\[
\forall \indexAnalogTransmissions \in \naturalsInitialSection{\numAnalogTransmissions}~
 (\outputRV_{\blocklength_1 + \dots + \blocklength_{\indexAnalogTransmissions-1} - \indexAnalogTransmissions + 2}, \dots, \outputRV_{\blocklength_1 + \dots + \blocklength_{\indexAnalogTransmissions} - \indexAnalogTransmissions})
 :=
 \planeMapInv_{\blocklength_\indexAnalogTransmissions}(\outputRV_{\blocklength_1 + \dots + \blocklength_{\indexAnalogTransmissions-1} + 1}', \dots, \outputRV_{\blocklength_1 + \dots + \blocklength_{\indexAnalogTransmissions}}').
\]
In order to prove the decoding error bound of $(\preproc_1', \dots, \preproc_\txNum', \postproc')$, we denote the additive noise added by the channel $\compoundChannel{\stateSpaceElement_1}^{\blocklength_1} \otimes \dots \otimes \compoundChannel{\stateSpaceElement_\numAnalogTransmissions}^{\blocklength_\numAnalogTransmissions}$ with ${\noiseRV'}^\blocklength$. We define a sequence ${\noiseRV}^{\blocklength-\numAnalogTransmissions}$ of real-valued random variables by
\[
\forall \indexAnalogTransmissions \in \naturalsInitialSection{\numAnalogTransmissions}~
 (\noiseRV_{\blocklength_1 + \dots + \blocklength_{\indexAnalogTransmissions-1} - \indexAnalogTransmissions + 2}, \dots, \noiseRV_{\blocklength_1 + \dots + \blocklength_{\indexAnalogTransmissions} - \indexAnalogTransmissions})
 :=
 \planeMapInv_{\blocklength_\indexAnalogTransmissions}(\noiseRV_{\blocklength_1 + \dots + \blocklength_{\indexAnalogTransmissions-1} + 1}', \dots, \noiseRV_{\blocklength_1 + \dots + \blocklength_{\indexAnalogTransmissions}}')
\]
and note that Lemma~\ref{lemma:orthogonal}-\ref{item:orthogonal-inv} and the fact that ${\noiseRV'}^\blocklength$ is distributed i.i.d. according to $\normaldistribution{0}{\noisestd^2}$ together imply that ${\noiseRV}^{\blocklength-\numAnalogTransmissions}$ is also distributed i.i.d. according to $\normaldistribution{0}{\noisestd^2}$. We thus obtain
\begin{align*}
&\hphantom{{}={}}
(\outputRV_{\blocklength_1 + \dots + \blocklength_{\indexAnalogTransmissions-1} - \indexAnalogTransmissions + 2}, \dots, \outputRV_{\blocklength_1 + \dots + \blocklength_{\indexAnalogTransmissions} - \indexAnalogTransmissions})
\\
\overset{(a)}&{=}
\sum_{\txIndex=1}^\txNum
  \planeMapInv_{\blocklength_\indexAnalogTransmissions}(\inputRV_{\txIndex, \blocklength_1 + \dots + \blocklength_{\indexAnalogTransmissions-1} + 1}', \dots, \inputRV_{\txIndex, \blocklength_1 + \dots + \blocklength_{\indexAnalogTransmissions}}')
+
\planeMapInv_{\blocklength_\indexAnalogTransmissions}(\stateSpaceElement_\indexAnalogTransmissions,\dots,\stateSpaceElement_\indexAnalogTransmissions)
+
\planeMapInv_{\blocklength_\indexAnalogTransmissions}(\noiseRV_{\blocklength_1 + \dots + \blocklength_{\indexAnalogTransmissions-1} + 1}', \dots, \noiseRV_{\blocklength_1 + \dots + \blocklength_{\indexAnalogTransmissions}}')
\\
\overset{(b)}&{=}
\sum_{\txIndex=1}^\txNum
  (\inputRV_{\txIndex, \blocklength_1 + \dots + \blocklength_{\indexAnalogTransmissions-1} - \indexAnalogTransmissions + 2}, \dots, \inputRV_{\txIndex, \blocklength_1 + \dots + \blocklength_{\indexAnalogTransmissions} - \indexAnalogTransmissions})
+
(\noiseRV_{\blocklength_1 + \dots + \blocklength_{\indexAnalogTransmissions-1} - \indexAnalogTransmissions + 2}, \dots, \noiseRV_{\blocklength_1 + \dots + \blocklength_{\indexAnalogTransmissions} - \indexAnalogTransmissions}),
\end{align*}
where (a) is due to the linearity of $\planeMapInv_{\blocklength_\indexAnalogTransmissions}$ and (b) uses Lemma~\ref{lemma:orthogonal}-\ref{item:orthogonal-inv} in the first summand and Lemma~\ref{lemma:orthogonal}-\ref{item:orthogonal-zero} in the second summand.

This means that we have defined a random experiment in which ${\outputRV'}^{\blocklength}$ is obtained by passing ${\inputRV'}^{\blocklength}$ through the channel $\compoundChannel{\stateSpaceElement_1}^{\blocklength_1} \otimes \dots \otimes \compoundChannel{\stateSpaceElement_\numAnalogTransmissions}^{\blocklength_\numAnalogTransmissions}$, and ${\outputRV}^{\blocklength-\numAnalogTransmissions}$ is obtained by passing ${\inputRV}^{\blocklength-\numAnalogTransmissions}$ through the channel $\channelAWGN^{\blocklength-\numAnalogTransmissions}$. With the definition $\postproc'({\outputRV'}^\blocklength):= \postproc(\outputRV^{\blocklength-\numAnalogTransmissions})$, the decoding error events of the codes coincide which means that $(\preproc_1, \dots, \preproc_\txNum, \postproc)$ inherits its average decoding error from $(\preproc_1', \dots, \preproc_\txNum', \postproc')$.
\end{proof}

\subsection{Achievable Rate Region for \gls{socc}}
\label{sec:random-hybrid}
We are now ready to prove the achievability part of Theorem~\ref{theorem:capacity}. For exact details on how this follows from the following lemma, see the end of this subsection.
\begin{lemma}
\label{lemma:digital-mac-with-ota-c}
Let $\powerconstraint_{\analogTxNum+1}, \dots, \powerconstraint_{\txNum}, \amplitudeConstraint_{\analogTxNum+1}, \dots, \amplitudeConstraint_{\txNum}, \analogAmplitudeConstraint \in (0,\infty)$, $\analograte, \analograte' \in (0,1)$ with $\analograte < \analograte'$, $\noisestd \in (0,\infty)$ and $\analogTxNum, \digitalTxNum \in \naturals$ be fixed, let $\blocklength \in \naturals$ be the block length and assume that $\analograte\blocklength \leq \numAnalogTransmissions \leq \analograte'\blocklength$. Let $\blocklength_1, \dots, \blocklength_\numAnalogTransmissions$ be natural numbers with $\blocklength = \blocklength_1 + \dots + \blocklength_\numAnalogTransmissions$. Let $\codebookRate = (\codebookRate_1, \dots, \codebookRate_{\digitalTxNum})$ be such that
\[
\codebookRate' = (\codebookRate_1', \dots, \codebookRate_{\digitalTxNum}') := (\codebookRate_1/(1-\analograte'), \dots, \codebookRate_{\digitalTxNum}/(1-\analograte'))
\]
is an inner point of
\[
\constrainedMacCapacity{\channelAWGN}\left(
  \frac{\powerconstraint_{\analogTxNum+1}}{1-\analograte},
  \dots,
  \frac{\powerconstraint_{\txNum}}{1-\analograte},
  \frac{\amplitudeConstraint_{\analogTxNum+1}}{\frac{\sqrt{2}}{\sqrt{2}-1}},
  \dots,
  \frac{\amplitudeConstraint_{\txNum}}{\frac{\sqrt{2}}{\sqrt{2}-1}}
\right).
\]

For large enough block lengths $\blocklength$, there is a \gls{socc} scheme for the channel $\channelAWGN$ with $\stateSpace_{\txIndex,\indexAnalogTransmissions} = [-1,1]$ for all $\txIndex \in \naturalsInitialSection{\analogTxNum}, \indexAnalogTransmissions \in \naturalsInitialSection{\numAnalogTransmissions}$; $\codebookSize_1 \geq \exp(\blocklength\codebookRate_1), \dots, \codebookSize_{\digitalTxNum} \geq \exp(\blocklength\codebookRate_{\digitalTxNum})$ such that:
\begin{enumerate}
    \item \label{item:digital-mac-with-ota-c-analog-powerconstraint} The analog transmission amplitude constraint \eqref{eq:analog-amplitude-constraint} holds.
    \item \label{item:digital-mac-with-ota-c-digital-powerconstraint} The digital amplitude constraint \eqref{eq:digital-amplitude-constraint} and power constraint \eqref{eq:digital-power-constraint} hold.
    \item \label{item:digital-mac-with-ota-c-error-digital} $\decodingError$ defined in \eqref{eq:reconstruction-error} tends to $0$ uniformly for all possible realizations of $\stateSpaceElement_{1,1}, \dots, \stateSpaceElement_{1,\numAnalogTransmissions}$, $\dots$, $\stateSpaceElement_{\analogTxNum,1}, \dots, \stateSpaceElement_{\analogTxNum,\numAnalogTransmissions}$ as $\blocklength \rightarrow \infty$.
    \item \label{item:digital-mac-with-ota-c-error-analog} For every $\indexAnalogTransmissions \in \naturalsInitialSection{\numAnalogTransmissions}$, conditioned on every possible realization of $\stateSpaceElement_{1,1}, \dots, \stateSpaceElement_{1,\numAnalogTransmissions}$, $\dots$, $\stateSpaceElement_{\analogTxNum,1}, \dots, \stateSpaceElement_{\analogTxNum,\numAnalogTransmissions}$, $\messageRealization_1, \dots, \messageRealization_{\digitalTxNum}$, each $\hat{\objectiveFunction}_\indexAnalogTransmissions$ is a Gaussian approximation of $\stateSpaceElement_{1,\indexAnalogTransmissions} + \dots + \stateSpaceElement_{\txNum,\indexAnalogTransmissions}$ with variance $\noisestd^2/\blocklength_\indexAnalogTransmissions\analogAmplitudeConstraint^2$.
\end{enumerate}
\end{lemma}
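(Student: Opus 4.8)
The plan is to build the \gls{socc} scheme directly on top of the code transformation of Lemma~\ref{lemma:zerosum}, applied to the $\digitalTxNum$-user digital sub-\gls{mac} $\channel$, exploiting that within each of the $\numAnalogTransmissions$ blocks the analog signal is constant (it points in the ``DC'' direction $\onevector$) while the transformed digital signal is zero-sum (it lies in the orthogonal complement). Concretely, I let each analog transmitter $\txIndex \in \naturalsInitialSection{\analogTxNum}$ send the constant symbol $\stateSpaceElement_{\txIndex,\indexAnalogTransmissions}\analogAmplitudeConstraint$ throughout block $\indexAnalogTransmissions$ (of length $\blocklength_\indexAnalogTransmissions$); since $\absolutevalue{\stateSpaceElement_{\txIndex,\indexAnalogTransmissions}} \leq 1$ this gives item~\ref{item:digital-mac-with-ota-c-analog-powerconstraint} at once. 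For the digital transmitters I fix a \gls{mac} code for $\channel^{\blocklength-\numAnalogTransmissions}$ of rates at least $\codebookRate'$ (such a sequence of codes with vanishing error exists by random-coding achievability because $\codebookRate'$ is an inner point of the stated region) and push it through Lemma~\ref{lemma:zerosum}. The constant analog signal turns the effective digital channel on block $\indexAnalogTransmissions$ into the biased \gls{mac} $\compoundChannel{\stateSpaceElement_\indexAnalogTransmissions}$ of~\eqref{eq:channel-awgn-bias} with $\stateSpaceElement_\indexAnalogTransmissions := \analogAmplitudeConstraint\sum_{\txIndex=1}^{\analogTxNum}\stateSpaceElement_{\txIndex,\indexAnalogTransmissions}$, which is exactly the channel for which Lemma~\ref{lemma:zerosum} produces an $\decodingError$-code.

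The digital error bound (item~\ref{item:digital-mac-with-ota-c-error-digital}) is then inherited: Lemma~\ref{lemma:zerosum} gives the transformed code the same error $\decodingError$ as the base code over $\compoundChannel{\stateSpaceElement_1}^{\blocklength_1}\otimes\dots\otimes\compoundChannel{\stateSpaceElement_\numAnalogTransmissions}^{\blocklength_\numAnalogTransmissions}$, and this bound does not depend on the biases $\stateSpaceElement_\indexAnalogTransmissions$ because the post-processing $\planeMapInv_{\blocklength_\indexAnalogTransmissions}$ annihilates the constant component by Lemma~\ref{lemma:orthogonal}-\ref{item:orthogonal-zero}; this is precisely the uniformity over analog realizations that the item requires. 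As $\blocklength - \numAnalogTransmissions \geq (1-\analograte')\blocklength \to \infty$, drawing the base codes from a vanishing-error sequence forces $\decodingError \to 0$.

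For the analog estimate (item~\ref{item:digital-mac-with-ota-c-error-analog}) I set $\objectiveFunctionEstimate_\indexAnalogTransmissions := (\blocklength_\indexAnalogTransmissions\analogAmplitudeConstraint)^{-1}\sum_{\blIndex}\outputRV_\blIndex$, summing $\outputRV$ over block $\indexAnalogTransmissions$. The zero-sum property (Lemma~\ref{lemma:zerosum}-\ref{item:zerosum}) makes every digital codeword contribute $0$ to this block sum, so, conditioned on all analog values and messages, the only surviving randomness is $(\blocklength_\indexAnalogTransmissions\analogAmplitudeConstraint)^{-1}\sum_{\blIndex}\noiseRV_\blIndex$, a centered Gaussian of variance $\noisestd^2/(\blocklength_\indexAnalogTransmissions\analogAmplitudeConstraint^2)$. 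Hence $\objectiveFunctionEstimate_\indexAnalogTransmissions$ is a Gaussian approximation of $\stateSpaceElement_{1,\indexAnalogTransmissions}+\dots+\stateSpaceElement_{\analogTxNum,\indexAnalogTransmissions}$ with the claimed variance, the digital indices $\analogTxNum+1,\dots,\txNum$ contributing $0$.

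What remains, and where the bookkeeping is most delicate, is item~\ref{item:digital-mac-with-ota-c-digital-powerconstraint} together with the message-set sizes. The amplitude bound is immediate from Lemma~\ref{lemma:zerosum}-\ref{item:zerosum-peakpower}: the base code respects the constraint $\amplitudeConstraint_{\analogTxNum+\txIndex}/(\tfrac{\sqrt2}{\sqrt2-1})$ built into the region, and the transformation inflates $\maximumNorm{\cdot}$ by at most $\tfrac{\sqrt2}{\sqrt2-1}$, restoring $\amplitudeConstraint_{\analogTxNum+\txIndex}$. The average power combines norm conservation (Lemma~\ref{lemma:zerosum}-\ref{item:zerosum-power}) with the block-length change: the base code has per-symbol power at most $\powerconstraint_{\analogTxNum+\txIndex}/(1-\analograte)$ over $\blocklength-\numAnalogTransmissions$ symbols, so the length-$\blocklength$ transformed codeword has per-symbol power at most $\tfrac{\blocklength-\numAnalogTransmissions}{\blocklength}\cdot\tfrac{\powerconstraint_{\analogTxNum+\txIndex}}{1-\analograte} \leq \powerconstraint_{\analogTxNum+\txIndex}$, the last step using $\numAnalogTransmissions \geq \analograte\blocklength$. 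Dually, the rate count gives $\log\codebookSize_\txIndex \geq (\blocklength-\numAnalogTransmissions)\codebookRate_\txIndex' \geq \blocklength(1-\analograte')\codebookRate_\txIndex' = \blocklength\codebookRate_\txIndex$, using $\numAnalogTransmissions \leq \analograte'\blocklength$ and $\codebookRate_\txIndex = (1-\analograte')\codebookRate_\txIndex'$. The crux is that the power step needs $\numAnalogTransmissions \geq \analograte\blocklength$ while the rate step needs $\numAnalogTransmissions \leq \analograte'\blocklength$, so both can hold simultaneously only because the scheme keeps the two separate parameters $\analograte < \analograte'$; reconciling these opposing requirements is the main point to get right.
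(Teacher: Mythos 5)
Your proposal is correct and follows essentially the same route as the paper's proof: constant analog blocks scaled by $\analogAmplitudeConstraint$, a base \gls{mac} code at rate $\codebookRate'$ on $\blocklength-\numAnalogTransmissions$ channel uses pushed through Lemma~\ref{lemma:zerosum}, error inheritance via the biased channel $\compoundChannel{\stateSpaceElement_1}^{\blocklength_1}\otimes\dots\otimes\compoundChannel{\stateSpaceElement_\numAnalogTransmissions}^{\blocklength_\numAnalogTransmissions}$, and the block-sum estimator for the analog part, with the same bookkeeping for power ($\numAnalogTransmissions\geq\analograte\blocklength$) and rate ($\numAnalogTransmissions\leq\analograte'\blocklength$). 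The only cosmetic difference is that the paper additionally introduces an intermediate $\analograte''>\analograte'$ to give the base code a strict rate margin, whereas you work directly with $\codebookRate'$; since the lemma only requires $\codebookSize_\txIndex\geq\exp(\blocklength\codebookRate_\txIndex)$ and inner points of the (downward-closed) region are achievable, your version suffices.
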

\begin{proof}
Fix $\analograte'' > \analograte'$ such that $\codebookRate'' = (\codebookRate_1'', \dots, \codebookRate_{\digitalTxNum}'') := (\codebookRate_1/(1-\analograte''), \dots, \codebookRate_{\digitalTxNum}/(1-\analograte''))$ is an inner point of $\constrainedMacCapacity{\channelAWGN}(\powerconstraint_{\analogTxNum+1}/(1-\analograte), \dots, \powerconstraint_{\txNum}/(1-\analograte), \amplitudeConstraint_{\analogTxNum+1}/(\sqrt{2}/(\sqrt{2}-1)), \dots, \amplitudeConstraint_{\txNum}/(\sqrt{2}/(\sqrt{2}-1))$. For every large enough $\blocklength$ and $\numAnalogTransmissions$, we can by definition of this capacity region fix a code (consisting of encoders $\preproc_1, \dots, \preproc_{\digitalTxNum}$ and a decoder $\postproc$) for rate $\codebookRate''$ and block length $\blocklength-\numAnalogTransmissions$ which satisfies the average power constraints $\powerconstraint_{\analogTxNum+1}/(1-\analograte), \dots, \powerconstraint_\txNum/(1-\analograte)$ and the amplitude constraints $\amplitudeConstraint_{\analogTxNum+1}/(\sqrt{2}/(\sqrt{2}-1)), \dots, \amplitudeConstraint_{\txNum}/(\sqrt{2}/(\sqrt{2}-1))$. Furthermore, the codes can be chosen in such a way that the decoding error vanishes for $\blocklength \rightarrow \infty$. For each $\txIndex$, the codebook size satisfies
\[
\codebookSize_\txIndex
\geq
\exp((\blocklength-\numAnalogTransmissions)\codebookRate_\txIndex'')
\geq
\exp(\blocklength(1-\analograte')\codebookRate_\txIndex/(1-\analograte''))
>
\exp(\blocklength\codebookRate_\txIndex).
\]

We next invoke Lemma~\ref{lemma:zerosum} to obtain modified encoding and decoding procedures $\preproc_1', \dots, \preproc_\txNum', \postproc'$ for block length $\blocklength$. For $\txIndex \in \naturalsInitialSection{\analogTxNum}$, we define
\[
\preprocHybrid_\txIndex(\stateSpaceElement_{\txIndex,1}, \dots, \stateSpaceElement_{\txIndex,\numAnalogTransmissions})
:=
\analogAmplitudeConstraint
(\stateSpaceElement_{\txIndex,1} \onevector_{\blocklength_1},
 \dots,
 \stateSpaceElement_{\txIndex,\numAnalogTransmissions} \onevector_{\blocklength_\numAnalogTransmissions}),
\]
where $\onevector_{\blocklength_\indexAnalogTransmissions}$ denotes the all-ones vector of length $\blocklength_\indexAnalogTransmissions$. For $\txIndex \in \naturalsInitialSection{\digitalTxNum}$, we define
\[
\preprocHybrid_{\analogTxNum+\txIndex}(\messageRealization_\txIndex)
:=
\preproc_\txIndex'(\messageRealization_\txIndex).
\]
Item \ref{item:digital-mac-with-ota-c-analog-powerconstraint} is satisfied since each $\stateSpaceElement_{\txIndex,\indexAnalogTransmissions} \in [-1,1]$. For item \ref{item:digital-mac-with-ota-c-digital-powerconstraint}, we note that the amplitude constraint is clearly satisfied due to Lemma~\ref{lemma:zerosum}-\ref{item:zerosum-peakpower} and the amplitude constraint of $\preproc_1, \dots, \preproc_{\digitalTxNum}$. For the average power constraint, we note that the total power of each digital transmitter's pre-processor output is bounded as
\[
\euclidNorm{\preprocHybrid_{\analogTxNum+\txIndex}(\messageRealization_\txIndex)}^2
=
\euclidNorm{\preproc_\txIndex'(\messageRealization_\txIndex)}^2
\overset{(a)}{=}
\euclidNorm{\preproc_\txIndex(\messageRealization_\txIndex)}^2
\overset{(b)}{\leq}
(\blocklength-\numAnalogTransmissions)\frac{\powerconstraint_{\analogTxNum+\txIndex}}{1-\analograte}
\overset{(c)}{\leq}
\blocklength(1-\analograte)\frac{\powerconstraint_{\analogTxNum+\txIndex}}{1-\analograte}
=
\blocklength\powerconstraint_{\analogTxNum+\txIndex},
\]
where (a) is due to Lemma~\ref{lemma:zerosum}-\ref{item:zerosum-power}, (b) is due to the choice of $\preproc_\txIndex(\messageRealization_\txIndex)$, and (c) is due to $\numAnalogTransmissions \geq \analograte\blocklength$.

Next, we define the digital part of $\postprocHybrid$ as $\hat{\messageRealization} := \postproc'(\outputRV^\blocklength)$. We observe that $\outputRV^\blocklength$ is actually $\preproc_1'(\messageRealization_1), \dots, \preproc_\txNum'(\messageRealization_\txNum)$ passed through the channel
$
\compoundChannel{\stateSpaceElement_1}^{\blocklength_1}
  \otimes
  \dots
  \otimes
  \compoundChannel{\stateSpaceElement_\numAnalogTransmissions}^{\blocklength_\numAnalogTransmissions}
$
with $\stateSpaceElement_\indexAnalogTransmissions := \analogAmplitudeConstraint(\stateSpaceElement_{1,\indexAnalogTransmissions} + \dots + \stateSpaceElement_{\txNum,\indexAnalogTransmissions})$. Therefore, item \ref{item:digital-mac-with-ota-c-error-digital} follows immediately from the error guarantee of $\preproc_1, \dots, \preproc_\txNum, \postproc$ in conjunction with Lemma~\ref{lemma:zerosum}.

Finally, we define the analog part of $\postprocHybrid$ and calculate its distribution as
\begin{align*}
\hat{\objectiveFunction}_\indexAnalogTransmissions
:&=
\frac{1}{\blocklength_\indexAnalogTransmissions\analogAmplitudeConstraint}
(\outputRV_{\blocklength_1 + \dots + \blocklength_{\indexAnalogTransmissions-1} + 1} + \dots + \outputRV_{\blocklength_1 + \dots + \blocklength_\indexAnalogTransmissions})
\\
\overset{(a)}&{=}
\frac{1}{\blocklength_\indexAnalogTransmissions\analogAmplitudeConstraint}
\left(
  \blocklength_\indexAnalogTransmissions\analogAmplitudeConstraint(\stateSpaceElement_{1,\indexAnalogTransmissions} + \dots + \stateSpaceElement_{\txNum,\indexAnalogTransmissions}) + \noiseRV_{\blocklength_1 + \dots + \blocklength_{\indexAnalogTransmissions-1} + 1} + \dots + \noiseRV_{\blocklength_1 + \dots + \blocklength_\indexAnalogTransmissions}
\right)
\\
&=
\stateSpaceElement_{1,\indexAnalogTransmissions} + \dots + \stateSpaceElement_{\txNum,\indexAnalogTransmissions}
+
\frac{1}{\blocklength_\indexAnalogTransmissions\analogAmplitudeConstraint}
\left(
  \noiseRV_{\blocklength_1 + \dots + \blocklength_{\indexAnalogTransmissions-1}}
  +
  \dots
  +
  \noiseRV_{\blocklength_1 + \dots + \blocklength_\indexAnalogTransmissions}
\right),
\end{align*}
where step (a) is due to the definition of $\preprocHybrid_1, \dots, \preprocHybrid_{\analogTxNum}$ and Lemma~\ref{lemma:zerosum}-\ref{item:zerosum}. Item \ref{item:digital-mac-with-ota-c-error-analog} immediately follows, completing the proof of the theorem.
\end{proof}
\begin{remark}
\label{remark:zero-rate}
\emph{(Zero-rate analog transmissions).}
If $\numAnalogTransmissions$ grows sub-linearly in $\blocklength$, i.e., for any $\analograte' > 0$ and sufficiently large $\blocklength$, we have $\numAnalogTransmissions < \analograte' \blocklength$, then Lemma~\ref{lemma:digital-mac-with-ota-c} guarantees that every inner point of
\[
\constrainedMacCapacity{\channelAWGN}\left(\powerconstraint_{\analogTxNum+1}, \dots, \powerconstraint_{\analogTxNum}, \frac{\amplitudeConstraint_{\analogTxNum+1}}{\frac{\sqrt{2}}{\sqrt{2}-1}}, \dots, \frac{\amplitudeConstraint_{\txNum}}{\frac{\sqrt{2}}{\sqrt{2}-1}}\right)
\]
can be achieved for the digital transmissions, hence the analog computations can be carried out at (almost) no cost in terms of digital rates, at least when the digital amplitude constraints are reasonably generous.

One particularly relevant special case of sub-linear growth is when $\numAnalogTransmissions$ is a constant that does not grow with $\blocklength$ at all. This might be an accurate model in a communication system where analog computation occurs very infrequently.
\end{remark}

\begin{proof}[Proof of the achievability part of Theorem~\ref{theorem:capacity}]
We invoke Lemma~\ref{lemma:digital-mac-with-ota-c} with $\analograte$ and $\analograte'$ from the Theorem~\ref{theorem:capacity} statement. Since $\analograte < \analograte'$, we can, for large enough $\blocklength$, pick $\numAnalogTransmissions$ in such a manner that $\analograte\blocklength \leq \numAnalogTransmissions \leq \analograte'\blocklength$. We choose $\blocklength_1, \dots, \blocklength_{\numAnalogTransmissions-1} := 1/\analograte'$ (which is an integer by definition) and $\blocklength_\numAnalogTransmissions := \blocklength - (\numAnalogTransmissions-1)/\analograte' \geq \blocklength - (\analograte'\blocklength - 1)/\analograte'=1/\analograte'$. The sequence of schemes constructed with these choices clearly satisfies items \ref{item:hads-achievable-digital-rates}, \ref{item:hads-achievable-analog-rate}, \ref{item:hads-achievable-constraints}, and \ref{item:hads-achievable-digital-error} of Definition~\ref{def:hads-achievable}. For item \ref{item:hads-achievable-analog-error}, we note that by Lemma~\ref{lemma:digital-mac-with-ota-c}-\ref{item:digital-mac-with-ota-c-error-analog}, for every $\indexAnalogTransmissions$, $\objectiveFunctionEstimate_\indexAnalogTransmissions$ is a Gaussian approximation of the sum function with variance
\[
\converseAnalogErrorVariance
=
\frac{\noisestd^2}{\blocklength_\indexAnalogTransmissions \analogAmplitudeConstraint^2}
\leq
\frac{\analograte' \noisestd^2}{\analogAmplitudeConstraint^2},
\]
from which the \gls{mse} approximation follows by Lemma~\ref{lemma:analog-approximation-implications}-\ref{item:analog-approximation-implication-gauss-mse}.
\end{proof}

\subsection{Simple Numerical Bounds for the Amplitude and Power Constrained Gaussian \gls{mac}}
\label{sec:gaussian-mac}
In this section, we discuss simple inner and outer bounds for the rate region $\constrainedMacCapacity{\channelAWGN}$ that appears in the statements of Theorem~\ref{theorem:capacity} and Lemma~\ref{lemma:digital-mac-with-ota-c}. In~\cite{mamandipoor2014capacity}, the authors propose a method to determine the capacity region of a two-user Gaussian \gls{mac} under amplitude constraints only which is readily extensible to the case of a larger number of transmitters. The result that optimizing input distributions are discrete with a finite number of mass points also holds if there is an additional average power constraint. However, using this fact to determine $\constrainedMacCapacity{\channelAWGN}$ numerically could incur high computational complexity for a large number of transmitters. Moreover, if the amplitude constraint is somewhat generous compared to the average power constraint (which is the regime we focus on in this work), the true boundary of $\constrainedMacCapacity{\channelAWGN}$ is not far away from much easier to obtain bounds, as was observed for the single-user case in~\cite{smith1971information}. In the following, we propose very simple such bounds based on the single-user capacities determined in~\cite{smith1971information} and discuss their tightness in a specific numerical example.

A code of block length $\blocklength$ for a \gls{mac} $\channel$ consists of encoders
\[
\codebook^\blocklength_1: \naturalsInitialSection{\codebookSize_1} \rightarrow \reals^\blocklength,
\dots,
\codebook^\blocklength_\txNum: \naturalsInitialSection{\codebookSize_\txNum} \rightarrow \reals^\blocklength,
\]
and a decoder
\[
\decoder^\blocklength:
\reals^\blocklength
\rightarrow
\naturalsInitialSection{\codebookSize_1} \times \dots \times \naturalsInitialSection{\codebookSize_\txNum}.
\]
The average decoding error over the channel $\channel$ is defined as
\[
\decodingError^\channel(\codebook_1^\blocklength, \dots, \codebook^\blocklength_\txNum, \decoder^\blocklength)
:=
\frac{1}{\codebookSize_1 \cdots \codebookSize_\txNum}
\sum_{\messageRealization_1=1}^{\codebookSize_1}
\dots
\sum_{\messageRealization_\txNum=1}^{\codebookSize_\txNum}
\Probability\left(
  \decoder(\outputRV^\blocklength)
  \neq
  (\messageRealization_1, \dots, \messageRealization_\txNum)
  ~|~
  \inputRV^\blocklength_1 = \codebook_1(\messageRealization_1),
  \dots,
  \inputRV^\blocklength_\txNum = \codebook_\txNum(\messageRealization_\txNum)
\right)
\]
A code has a rate associated with each transmitter $\txIndex \in \naturalsInitialSection{\txNum}$ defined by
\[
\codebookRate_\txIndex
:=
\frac{\log \codebookSize_\txIndex}{\blocklength}.
\]
A rate tuple $(\codebookRate_1, \dots, \codebookRate_\txNum)$ is called \emph{achievable} for $\channel$ under average power constraints $\powerconstraint_1, \dots, \powerconstraint_\txNum$ if for every $\blocklength \in \naturals$, there is a code $\codebook_1^\blocklength, \dots, \codebook^\blocklength_\txNum, \decoder^\blocklength$ of rates at least $(\codebookRate_1, \dots, \codebookRate_\txNum)$ such that
\begin{equation}
\label{eq:achievable}
\lim_{\blocklength \rightarrow \infty}
\decodingError^\channel(\codebook_1^\blocklength, \dots, \codebook^\blocklength_\txNum, \decoder^\blocklength)
=
0
,~~
\forall \blocklength \in \naturals~
\forall \txIndex \in \naturalsInitialSection{\txNum}~
\forall \messageRealization \in \naturalsInitialSection{\codebookSize_\txIndex}~
\frac{1}{\blocklength} \euclidNorm{\codebook^\blocklength_\txIndex(\messageRealization)}^2
\leq
\powerconstraint_\txIndex.
\end{equation}
The capacity region of $\channel$ under average power constraint is defined as
\[
\gls{unconstrainedMacCapacity}
=
\closure\left\{
  (\codebookRate_1, \dots, \codebookRate_\txNum)
  :~
  (\codebookRate_1, \dots, \codebookRate_\txNum)
  \text{ is achievable for $\channel$ under average power constraints }
  \powerconstraint_1, \dots, \powerconstraint_\txNum
\right\}.
\]

Similarly, a rate tuple $(\codebookRate_1, \dots, \codebookRate_\txNum)$ is called \emph{achievable} under average power constraints $\powerconstraint_1, \dots, \powerconstraint_\txNum$ and amplitude constraints $\amplitudeConstraint_1, \dots, \amplitudeConstraint_\txNum$ if for every $\decodingError > 0$ there is a block length $\blocklength$ and a code $\codebook_1^\blocklength, \dots, \codebook^\blocklength_\txNum, \decoder^\blocklength$ of rates $(\codebookRate_1, \dots, \codebookRate_\txNum)$ which satisfies (\ref{eq:achievable}) and
\[
\forall \txIndex \in \naturalsInitialSection{\txNum}~
\forall \messageRealization \in \naturalsInitialSection{\codebookSize_\txIndex}~
\maximumNorm{\codebook^\blocklength_\txIndex(\messageRealization)}
\leq
\amplitudeConstraint_\txIndex.
\]

We recall definition (\ref{eq:constrained-mac-capacity}) of the capacity region of $\channel$ under an average power constraint and an amplitude constraint:
\begin{multline*}
\constrainedMacCapacity{\channel}(
  \powerconstraint_1, \dots, \powerconstraint_\txNum,
  \amplitudeConstraint_1, \dots, \amplitudeConstraint_\txNum
)
=
\closure\big\{
  (\codebookRate_1, \dots, \codebookRate_\txNum)
  :~
  (\codebookRate_1, \dots, \codebookRate_\txNum)
  \text{ is achievable for $\channel$}
  \\
  \text{under average power constraints }
  \powerconstraint_1, \dots, \powerconstraint_\txNum
  \text{ and amplitude constraints }
  \amplitudeConstraint_1, \dots, \amplitudeConstraint_\txNum
\big\}.
\end{multline*}

It is known~\cite[eq. (15.152), (15.153)]{cover2006elements} that under average power constraints the capacity region of $\channelAWGN$ is
\begin{equation}
\label{eq:unconstrained-capacity}
\unconstrainedMacCapacity{\channelAWGN}(\powerconstraint_1, \dots, \powerconstraint_\txNum)
=
\left\{
  (\codebookRate_1, \dots, \codebookRate_\txNum)
  :~
  \forall \txSubset \subseteq \naturalsInitialSection{\txNum}~
  \sum_{\txIndex \in \txSubset} \codebookRate_\txIndex
  \leq
  \unconstrainedGaussianCapacity\left(
    \frac{
      \sum_{\txIndex \in \txSubset} \powerconstraint_\txIndex
    }{
      \noisestd^2
    }
  \right)
\right\}.
\end{equation}
For the average power and amplitude constrained capacity region, a straightforward extension of~\cite[Theorem 8]{han2006information} to the case of more than two transmitters tells us that for any channel $\channel$,
$
\constrainedMacCapacity{\channel}(
  \powerconstraint_1, \dots, \powerconstraint_\txNum,
  \amplitudeConstraint_1, \dots, \amplitudeConstraint_\txNum
)
$
is the convex closure of
\begin{align*}
\Bigg\{
  (\codebookRate_1, \dots, \codebookRate_\txNum)
  :~
  \exists \inputDistribution_1, \dots, \inputDistribution_\txNum~
  &\forall \txSubset \subseteq \naturalsInitialSection{\txNum}~
  \sum_{\txIndex \in \txSubset} \codebookRate_\txIndex
  \leq
  \mutualInformation{\inputDistribution_1, \dots, \inputDistribution_\txNum}{\channel}
    ((\inputRV_\txIndex)_{\txIndex \in \txSubset}; \outputRV | (\inputRV_\txIndex)_{\txIndex \in \naturalsInitialSection{\txNum} \setminus \txSubset})
  ,
  \\
  &\forall \txIndex \in \naturalsInitialSection{\txNum}~
  \Expectation_{\inputDistribution_\txIndex} \big((\inputRV_\txIndex)^2\big) \leq \powerconstraint_\txIndex,
  \inputDistribution_\txIndex(\absolutevalue{\inputRV} > \amplitudeConstraint_\txIndex) = 0
\Bigg\},
\end{align*}
where \gls{mutualInformation} denotes mutual information between inputs and output of the channel $\channel$ under the input distributions \gls{inputDistributions}. For $\txNum=1$, it is shown in \cite{smith1971information} that there is a unique optimal input distribution $\optimalInputDistribution(\powerconstraint,\amplitudeConstraint)$ with second moment at most $\powerconstraint$ and amplitude at most $\amplitudeConstraint$ such that
\[
\constrainedMacCapacity{\channelAWGN}(\powerconstraint,\amplitudeConstraint)
=
\left[0, \mutualInformation{\optimalInputDistribution(\powerconstraint,\amplitudeConstraint)}{\channel_{(1)}}(\inputRV; \outputRV)\right],
\]
where $\channel_{(1)}$ denotes the channel $\channelAWGN$ with $\txNum=1$.

We also define rate regions
\begin{align*}
&\begin{multlined}
\constrainedMacCapacityAchievable{\channel}(\powerconstraint_1, \dots, \powerconstraint_\txNum, \amplitudeConstraint_1, \dots, \amplitudeConstraint_\txNum)
:=\\
\left\{
  (\codebookRate_1, \dots, \codebookRate_\txNum)
  :~
  \forall \txSubset \subseteq \naturalsInitialSection{\txNum}~
  \sum_{\txIndex \in \txSubset} \codebookRate_\txIndex
  \leq
  \mutualInformation{
    \optimalInputDistribution(\powerconstraint_1, \amplitudeConstraint_1),
    \dots,
    \optimalInputDistribution(\powerconstraint_\txNum, \amplitudeConstraint_\txNum)
  }{\channel}
  \left(
      \sum_{\txIndex \in \txSubset} \inputRV_\txIndex;
      \outputRV
      |
      (\inputRV_\txIndex)_{\txIndex \in \naturalsInitialSection{\txNum} \setminus \txSubset}
  \right)
\right\}
\end{multlined}
\\
&\begin{multlined}
\constrainedMacCapacityConverse{\channel}(\powerconstraint_1, \dots, \powerconstraint_\txNum, \amplitudeConstraint_1, \dots, \amplitudeConstraint_\txNum)
:=\\
\left\{
  (\codebookRate_1, \dots, \codebookRate_\txNum)
  :~
  \forall \txIndex \in \naturalsInitialSection{\txNum}~
  \codebookRate_\txIndex
  \leq
  \mutualInformation{
    \optimalInputDistribution(\powerconstraint_\txIndex, \amplitudeConstraint_\txIndex)
  }{\channel_{(1)}}
  \left(
      \inputRV;
      \outputRV
  \right)
\right\}
\end{multlined}
\end{align*}
It is clear from the definition of $\constrainedMacCapacity{\channel}(\powerconstraint_1, \dots, \powerconstraint_\txNum, \amplitudeConstraint_1, \dots, \amplitudeConstraint_\txNum)$ that for every inner point $(\codebookRate_1, \dots, \codebookRate_\txNum)$ of the region $\constrainedMacCapacity{\channelAWGN}(\powerconstraint_1, \dots, \powerconstraint_\txNum, \amplitudeConstraint_1, \dots, \amplitudeConstraint_\txNum)$ and every $\txIndex \in \naturalsInitialSection{\txNum}$, there are probability distributions $\inputDistribution_1, \dots, \inputDistribution_\txNum$ such that
\begin{align*}
\codebookRate_\txIndex
&\leq
\mutualInformation{
  \inputDistribution_1, \dots, \inputDistribution_\txNum
}{
  \channelAWGN
}(
  \inputRV_\txIndex; \outputRV
  |
  \inputRV_1, \dots, \inputRV_{\txIndex-1},
  \inputRV_{\txIndex+1}, \dots, \inputRV_\txNum
)
\\
\overset{(a)}&{=}
\mutualInformation{
  \inputDistribution_1, \dots, \inputDistribution_\txNum
}{
  \channelAWGN
}(
  \inputRV_\txIndex;
  \outputRV
  -
  \inputRV_1 - \dots - \inputRV_{\txIndex-1} -
  \inputRV_{\txIndex+1} - \dots - \inputRV_\txNum
)
\\
&=
\mutualInformation{
  \inputDistribution_\txIndex
}{
  \channel_{(1)}
}(
  \inputRV;
  \outputRV
)
\\
&\leq
\mutualInformation{
  \optimalInputDistribution(\powerconstraint_\txIndex, \amplitudeConstraint_\txIndex)
}{
  \channel_{(1)}
}(
  \inputRV;
  \outputRV
),
\end{align*}
where (a) is due to the additivity of $\channelAWGN$. This yields
\begin{multline}
\label{eq:constrained-capacity-region-bounds}
\constrainedMacCapacityAchievable{\channelAWGN}(\powerconstraint_1, \dots, \powerconstraint_\txNum, \amplitudeConstraint_1, \dots, \amplitudeConstraint_\txNum)
\subseteq
\constrainedMacCapacity{\channelAWGN}(
  \powerconstraint_1, \dots, \powerconstraint_\txNum,
  \amplitudeConstraint_1, \dots, \amplitudeConstraint_\txNum
)
\\
\subseteq
\unconstrainedMacCapacity{\channelAWGN}(\powerconstraint_1, \dots, \powerconstraint_\txNum)
\cap
\constrainedMacCapacityConverse{\channelAWGN}(\powerconstraint_1, \dots, \powerconstraint_\txNum, \amplitudeConstraint_1, \dots, \amplitudeConstraint_\txNum).
\end{multline}

With the representation (\ref{eq:unconstrained-capacity}), it is easy to determine $\unconstrainedMacCapacity{\channelAWGN}(\powerconstraint_1, \dots, \powerconstraint_\txNum)$ for any given set of parameters, and with the methods from~\cite{smith1971information}, it is possible to numerically determine $\constrainedMacCapacityAchievable{\channelAWGN}(\powerconstraint_1, \dots, \powerconstraint_\txNum, \amplitudeConstraint_1, \dots, \amplitudeConstraint_\txNum)$ and $\constrainedMacCapacityConverse{\channelAWGN}(\powerconstraint_1, \dots, \powerconstraint_\txNum, \amplitudeConstraint_1, \dots, \amplitudeConstraint_\txNum)$.

\begin{figure}
\centering
\begin{tikzpicture}[scale=7,spy using outlines={circle, magnification=10, size=3cm, connect spies}]
\drawaxes{.7}{.7}
\drawxtick{.5}{$0.5$}
\drawytick{.5}{$0.5$}

\draw[densely dotted] (.4068,0) -- (.4068,.3735) -- (.4040,.3763);
\draw (.4040,.3763) -- (.1653,.6150);
\draw[densely dotted] (.1653,.6150) -- (.1531,.6272) -- (0,.6272) node[at end, above right]{$\unconstrainedMacCapacity{\channelAWGN}(\powerconstraint_1,\powerconstraint_2)$};

\draw[red] (.4040,0) -- (.4040,.3725);
\draw[red,densely dotted] (.4040,.3725) -- (.4040,.6150) node[at end, right] {$\constrainedMacCapacityConverse{\channelAWGN}(\powerconstraint_1,\powerconstraint_2,\amplitudeConstraint_1,\amplitudeConstraint_2)$} -- (.1615,.6150);
\draw[red] (.1615,.6150) -- (0,.6150);

\draw[blue] (.4040,.3725) -- (.1615,.6150) node[at start,left, align = center]{$\constrainedMacCapacityAchievable{\channelAWGN}$\\$(\powerconstraint_1,\powerconstraint_2,\amplitudeConstraint_1,\amplitudeConstraint_2)$};

\spy[size=2cm] on (2.828,2.6341) in node[right] at (.5,.3763);
\end{tikzpicture}
\caption{Example of the amplitude-constrained rate region for $\txNum=2, \powerconstraint_1 = \qty{1}{\dB}, \powerconstraint_2 = \qty{4}{\dB}, \noisestd^2 = 1, \amplitudeConstraint_1 = 2\sqrt{\powerconstraint_1}, \amplitudeConstraint_2 = 2\sqrt{\powerconstraint_2}$. The solid lines outline $\constrainedMacCapacity{\channelAWGN}(\powerconstraint_1,\powerconstraint_2,\amplitudeConstraint_1,\amplitudeConstraint_2)$, where in the region of the sum rate constraint, the boundary of the capacity region is somewhere between the two solid lines.}
\label{fig:constrained-capacity-region-example}
\end{figure}

It has been noted in~\cite{smith1971information} that even for moderately generous amplitude constraints, the constrained and unconstrained capacity bounds are very close in the single-user case. Our numerical evaluations suggest that this phenomenon is even more pronounced for sum rate constraints, making the inner and outer bounds in (\ref{eq:constrained-capacity-region-bounds}) quite tight in practice. In Fig.~\ref{fig:constrained-capacity-region-example}, we show a numerical example for the case of two users.

\section{Converse}
\label{sec:converse}
In this section, we prove Lemma~\ref{lemma:converse} which implies the converse part (second inclusion) of Theorem~\ref{theorem:capacity} via Corollary~\ref{cor:converse} proven at the end of the section. To make the notation in this section more compact, we use $\analogTxSet := \naturalsInitialSection{\analogTxNum}$ to denote the set of analog transmitters and $\digitalTxSet := \{\analogTxNum+1, \dots,\txNum\}$ to denote the set of digital transmitters. If $\generalRV^\generalNatural = (\generalRV_1, \dots, \generalRV_\generalNatural)$ is a tuple of random variables and $\generalSubset \subseteq \naturalsInitialSection{\generalNatural}$, we use the shorthand $\generalRV_{\generalSubset} := (\generalRV_\generalNatural)_{\generalNatural \in \generalSubset}$ to denote the sub-tuple indexed by $\generalSubset$ and we use $\sum \generalRV_{\generalSubset} := \sum_{\generalNatural \in \generalSubset} \generalRV_\generalNatural$ to denote the sum of all elements of the tuple $\generalRV_{\generalSubset}$. Moreover, whenever the set $\naturalsInitialSection{\generalNatural}$ is clear from context, we use $\complement{\generalSubset} := \naturalsInitialSection{\generalNatural} \setminus \generalSubset$ to denote the complement of $\generalSubset$. In the system model we use, the digital transmitters are numbered starting from $\analogTxNum+1$ while some quantities related to the digital transmitters are numbered starting from $1$. For simplicity, we use notations such as $\messageRV_{\analogTxNum+\txIndex}$ and $\messageRV_\txIndex$ or $\codebookRate_{\analogTxNum+\txIndex}$ and $\codebookRate_\txIndex$ interchangeably for $\txIndex \in \naturalsInitialSection{\digitalTxNum}$ wherever the meaning is sufficiently clear from context.

\begin{lemma}
\label{lemma:converse}
Let $\powerconstraint_{\analogTxNum+1}, \dots, \powerconstraint_{\txNum}, \amplitudeConstraint_{\analogTxNum+1}, \dots, \amplitudeConstraint_{\txNum}, \analogAmplitudeConstraint, \codebookRate_1, \dots, \codebookRate_{\digitalTxNum}, \converseAnalogErrorVariance \in (0,\infty)$, $\analograte \in (0,1)$, $\noisestd \in (0,\infty)$ and $\analogTxNum, \digitalTxNum \in \naturals$ be fixed. Assume that for every $\blocklength \in \naturals$, there are $\numAnalogTransmissions \geq \analograte\blocklength, \codebookSize_1 \geq \exp(\blocklength\codebookRate_1), \dots, \codebookSize_{\digitalTxNum} \geq \exp(\blocklength\codebookRate_{\digitalTxNum})$, encoders $\preprocHybrid_1: [-1,1]^\numAnalogTransmissions \rightarrow \reals^\blocklength, \dots, \preprocHybrid_{\analogTxNum}: [-1,1]^\numAnalogTransmissions \rightarrow \reals^\blocklength, \preprocHybrid_{\analogTxNum+1}: \naturalsInitialSection{\codebookSize_1} \rightarrow \reals^\blocklength, \dots, \preprocHybrid_{\txNum}: \naturalsInitialSection{\codebookSize_{\digitalTxNum}} \rightarrow \reals^\blocklength$ and a decoder $\postprocHybrid: \reals^\blocklength \rightarrow \naturalsInitialSection{\codebookSize_1} \times \dots \times \naturalsInitialSection{\codebookSize_{\digitalTxNum}} \times \reals^\numAnalogTransmissions$ such that items \ref{item:digital-mac-with-ota-c-analog-powerconstraint}, \ref{item:digital-mac-with-ota-c-digital-powerconstraint}, and \ref{item:digital-mac-with-ota-c-error-digital} of Lemma~\ref{lemma:digital-mac-with-ota-c} are all satisfied, and for every $\indexAnalogTransmissions \in \naturalsInitialSection{\numAnalogTransmissions}$, $\objectiveFunctionEstimate_\indexAnalogTransmissions$ approximates $\objectiveFunction: (\stateSpaceElement_1, \dots, \stateSpaceElement_{\analogTxNum}) \mapsto \stateSpaceElement_1 + \dots + \stateSpaceElement_{\analogTxNum}$ with \gls{mse} $\converseAnalogErrorVariance$, i.e., we have, conditioned on any possible realization of $\stateSpaceElement_{1,1}, \dots, \stateSpaceElement_{1,\numAnalogTransmissions}$, $\dots$, $\stateSpaceElement_{\analogTxNum,1}, \dots, \stateSpaceElement_{\analogTxNum,\numAnalogTransmissions}$, $\messageRealization_1, \dots, \messageRealization_{\digitalTxNum}$, that
\begin{equation}
\label{eq:converse-mse}
\Expectation\left(
  \left(
    \stateSpaceElement_{1,\indexAnalogTransmissions} + \dots + \stateSpaceElement_{\txNum,\indexAnalogTransmissions}
    -
    \hat{\objectiveFunction}_\indexAnalogTransmissions
  \right)^2
\right)
\leq
\converseAnalogErrorVariance.
\end{equation}
Let $\analogTxSubset \subseteq \analogTxSet$. Then, there is a probability distribution $\inputDistribution_{\timeSharingRV, \inputRV_1, \dots, \inputRV_{\txNum}} = \inputDistribution_\timeSharingRV
\inputDistribution_{\inputRV_1, \dots, \inputRV_{\analogTxNum} | \timeSharingRV} \inputDistribution_{\inputRV_{\analogTxNum+1} | \timeSharingRV} \dots \inputDistribution_{\inputRV_{\txNum} | \timeSharingRV}$ for a discrete random variable $\timeSharingRV$ and the channel inputs $\inputRV_1, \dots, \inputRV_{\txNum}$ such that
\begin{align*}
&\Expectation\big((\inputRV_{\analogTxNum+1})^2\big) \leq \powerconstraint_{\analogTxNum+1}, \dots \Expectation\big((\inputRV_{\txNum})^2\big) \leq \powerconstraint_{\txNum},
\\
&\Probability(\absolutevalue{\inputRV_1} > \analogAmplitudeConstraint) = \dots = \Probability(\absolutevalue{\inputRV_{\analogTxNum}} > \analogAmplitudeConstraint)
= \Probability(\absolutevalue{\inputRV_{\analogTxNum+1}} > \amplitudeConstraint_{\analogTxNum+1}) = \dots = \Probability(\absolutevalue{\inputRV_{\txNum}} > \amplitudeConstraint_{\txNum})
= 0
\end{align*}
and for all $\digitalTxSubset \subseteq \digitalTxSet$, we have
\begin{equation}
\label{eq:converse-rate-bound-mse}
\sum\codebookRate_{\digitalTxSubset}
\leq
\mutualInformationRVCond{
  \inputRV_{\analogTxSubset},
  \inputRV_{\digitalTxSubset}
}{\outputRV}{
  \inputRV_{\analogTxSubsetComplement},
  \inputRV_{\digitalTxSubsetComplement},
  \timeSharingRV
}
-
\positivePart{
  \frac{\analograte}{2}\log\frac{2\cardinality{\analogTxSubset}^2}{\pi\eulersNumber\converseAnalogErrorVariance}
}.
\end{equation}
\end{lemma}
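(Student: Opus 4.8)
The plan is to run the standard Gaussian \gls{mac} converse while treating the analog computation as an additional \emph{virtual message} that must be conveyed over the same channel, and then to charge the unavoidable information content of that virtual message against the available digital sum rate. Fix $\analogTxSubset \subseteq \analogTxSet$. Following the idea sketched in Remark~\ref{remark:nonsum}, I would first \emph{choose} a law for the analog values: set $\stateSpaceElement_{\txIndex,\indexAnalogTransmissions} = 0$ for every $\txIndex \in \analogTxSubsetComplement$ and let all transmitters $\txIndex \in \analogTxSubset$ share one common value drawn i.i.d.\ uniformly on $[-1,1]$ across $\indexAnalogTransmissions$, so that $S_\indexAnalogTransmissions := \stateSpaceElement_{1,\indexAnalogTransmissions} + \dots + \stateSpaceElement_{\txNum,\indexAnalogTransmissions}$ is uniform on $[-\cardinality{\analogTxSubset},\cardinality{\analogTxSubset}]$ and the $\numAnalogTransmissions$ sums are independent. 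With this choice the inputs $\inputRV_{\analogTxSubset}^\blocklength$ are (correlated) functions of $S^\numAnalogTransmissions$, the inputs $\inputRV_{\analogTxSubsetComplement}^\blocklength$ are deterministic, and the digital inputs are functions of the independent uniform messages $\messageRV_1, \dots, \messageRV_{\digitalTxNum}$.

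Next I would bound the digital sum rate for an arbitrary $\digitalTxSubset \subseteq \digitalTxSet$. Since the digital error vanishes uniformly over the analog realizations (item~\ref{item:digital-mac-with-ota-c-error-digital} of Lemma~\ref{lemma:digital-mac-with-ota-c}), Fano's inequality gives $\entropyRVCond{\messageRV_{\digitalTxSubset}}{\outputRV^\blocklength} \leq \blocklength\tail_\blocklength$ with $\tail_\blocklength \to 0$, whence $\blocklength\sum\codebookRate_{\digitalTxSubset} - \blocklength\tail_\blocklength \leq \mutualInformationRVCond{\messageRV_{\digitalTxSubset}}{\outputRV^\blocklength}{\messageRV_{\digitalTxSubsetComplement}, S^\numAnalogTransmissions}$. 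Applying the chain rule to the joint information of the digital messages and the virtual analog message,
\[
\mutualInformationRVCond{\messageRV_{\digitalTxSubset}, S^\numAnalogTransmissions}{\outputRV^\blocklength}{\messageRV_{\digitalTxSubsetComplement}} = \mutualInformationRVCond{S^\numAnalogTransmissions}{\outputRV^\blocklength}{\messageRV_{\digitalTxSubsetComplement}} + \mutualInformationRVCond{\messageRV_{\digitalTxSubset}}{\outputRV^\blocklength}{\messageRV_{\digitalTxSubsetComplement}, S^\numAnalogTransmissions},
\]
so that
\[
\blocklength\sum\codebookRate_{\digitalTxSubset} \leq \mutualInformationRVCond{\messageRV_{\digitalTxSubset}, S^\numAnalogTransmissions}{\outputRV^\blocklength}{\messageRV_{\digitalTxSubsetComplement}} - \mutualInformationRVCond{S^\numAnalogTransmissions}{\outputRV^\blocklength}{\messageRV_{\digitalTxSubsetComplement}} + \blocklength\tail_\blocklength.
\]
For the first term I would note that, given $\messageRV_{\digitalTxSubsetComplement}$ (which fixes $\inputRV_{\digitalTxSubsetComplement}^\blocklength$) and the deterministic $\inputRV_{\analogTxSubsetComplement}^\blocklength$, the output depends on $(\messageRV_{\digitalTxSubset}, S^\numAnalogTransmissions)$ only through $(\inputRV_{\digitalTxSubset}^\blocklength, \inputRV_{\analogTxSubset}^\blocklength)$; data processing together with the standard memoryless single-letterization using a uniform time-sharing variable $\timeSharingRV$ on $\{1,\dots,\blocklength\}$ and $\inputRV_\txIndex := \inputRV_{\txIndex,\timeSharingRV}$ then bounds it by $\blocklength\,\mutualInformationRVCond{\inputRV_{\analogTxSubset},\inputRV_{\digitalTxSubset}}{\outputRV}{\inputRV_{\analogTxSubsetComplement},\inputRV_{\digitalTxSubsetComplement},\timeSharingRV}$. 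The required product form of the single-letter law (joint in the analog coordinates but conditionally independent across the digital ones given $\timeSharingRV$) is inherited from independence of the messages and the shared structure of the analog inputs, and the amplitude and power constraints descend to the marginals of $\inputRV_\txIndex$.

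The heart of the argument is the lower bound on the analog term $\mutualInformationRVCond{S^\numAnalogTransmissions}{\outputRV^\blocklength}{\messageRV_{\digitalTxSubsetComplement}}$, which produces the subtracted penalty. The estimates $\objectiveFunctionEstimate_1, \dots, \objectiveFunctionEstimate_\numAnalogTransmissions$ are functions of $\outputRV^\blocklength$, so data processing bounds this information below by $\mutualInformationRVCond{S^\numAnalogTransmissions}{\objectiveFunctionEstimate_1, \dots, \objectiveFunctionEstimate_\numAnalogTransmissions}{\messageRV_{\digitalTxSubsetComplement}}$. Using independence of the $S_\indexAnalogTransmissions$ to write $\diffEntropyRV{S^\numAnalogTransmissions} = \sum_\indexAnalogTransmissions \diffEntropyRV{S_\indexAnalogTransmissions}$ with $\diffEntropyRV{S_\indexAnalogTransmissions} = \log(2\cardinality{\analogTxSubset})$, then bounding each conditional term by $\diffEntropyRVCond{S_\indexAnalogTransmissions}{\objectiveFunctionEstimate_\indexAnalogTransmissions} \leq \diffEntropyRV{S_\indexAnalogTransmissions - \objectiveFunctionEstimate_\indexAnalogTransmissions} \leq \tfrac12\log(2\pi\eulersNumber\Variance(S_\indexAnalogTransmissions - \objectiveFunctionEstimate_\indexAnalogTransmissions)) \leq \tfrac12\log(2\pi\eulersNumber\converseAnalogErrorVariance)$ (the last step because the \gls{mse} bound \eqref{eq:converse-mse}, holding per realization and hence in expectation, dominates the variance), I would obtain $\mutualInformationRVCond{S^\numAnalogTransmissions}{\outputRV^\blocklength}{\messageRV_{\digitalTxSubsetComplement}} \geq \numAnalogTransmissions\cdot\tfrac12\log\frac{2\cardinality{\analogTxSubset}^2}{\pi\eulersNumber\converseAnalogErrorVariance}$. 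Combining this with the trivial bound $\geq 0$ replaces the coefficient by $\positivePart{\cdot}$; dividing by $\blocklength$ and using $\numAnalogTransmissions \geq \analograte\blocklength$ turns $\numAnalogTransmissions/\blocklength$ into $\analograte$ and gives, for each $\blocklength$, inequality \eqref{eq:converse-rate-bound-mse} up to the additive slack $\tail_\blocklength$.

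The remaining and most delicate step is to remove $\tail_\blocklength$ and produce one limiting distribution valid for all $\digitalTxSubset$ simultaneously, since the conclusion carries no slack. Here I would invoke a compactness argument: the amplitude constraints confine every single-letter law to a fixed compact set, and by a Carathéodory argument the alphabet of $\timeSharingRV$ may be taken of bounded cardinality, so the per-blocklength laws $\inputDistribution^{(\blocklength)}$ (all sharing the factorization and the constraints) admit a weakly convergent subsequence with limit $\inputDistribution$. Because for additive Gaussian noise with bounded inputs the relevant conditional mutual informations and second moments are continuous under weak convergence, the constraints and the finitely many inequalities \eqref{eq:converse-rate-bound-mse} all survive the limit $\blocklength \to \infty$, $\tail_\blocklength \to 0$. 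I expect this limiting/continuity step, rather than the information-theoretic manipulations, to be the main obstacle: one must check tightness, the cardinality reduction of $\timeSharingRV$, and continuity of $\mutualInformationRVCond{\cdot}{\cdot}{\cdot}$ carefully so that a single $\inputDistribution$ meets the amplitude, power, and all subset-rate conditions at once.
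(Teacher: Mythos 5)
Your proposal is correct and follows essentially the same route as the paper's proof: the same choice of analog input law (zeros off $\analogTxSubset$, one shared value uniform on $[-1,1]$ across $\analogTxSubset$, independent over $\indexAnalogTransmissions$), the same chain-rule split into a Fano-bounded digital term and a maximum-entropy-bounded analog term producing the penalty $\frac{\numAnalogTransmissions}{2}\log\frac{2\cardinality{\analogTxSubset}^2}{\pi\eulersNumber\converseAnalogErrorVariance}$, and the same time-sharing single-letterization. The only notable difference is technical: the paper perturbs $\objectiveFunctionEstimate_\indexAnalogTransmissions$ by an auxiliary Gaussian $\normaldistribution{0}{\disturbancestd^2}$ and lets $\disturbancestd \rightarrow 0$ so that the conditional differential entropies in your step $\diffEntropyRVCond{S_\indexAnalogTransmissions}{\objectiveFunctionEstimate_\indexAnalogTransmissions} \leq \diffEntropyRV{S_\indexAnalogTransmissions - \objectiveFunctionEstimate_\indexAnalogTransmissions}$ are guaranteed to exist, while it treats the final $\blocklength\rightarrow\infty$ step less explicitly than your compactness argument.
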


\begin{remark}
We note that in this converse, the probability distribution of the random variables in (\ref{eq:converse-rate-bound-mse}) depends on $\analogTxSubset$. However, this is sufficient here as in our proof of Corollary~\ref{cor:converse}, we need to invoke Lemma~\ref{lemma:converse} only for one specific choice of $\analogTxSubset$. For details, see the proof of Corollary~\ref{cor:converse}.
\end{remark}

\begin{proof}
We fix a block length $\blocklength$ and the associated encoders $\preprocHybrid_1, \dots, \preprocHybrid_{\txNum}$ and decoder $\postprocHybrid$ that exist according to the assumption of the theorem, as well as $\analogTxSubset \subseteq \analogTxSet$ and $\digitalTxSubset \subseteq \digitalTxSet$. For the digital messages $(\messageRV_1, \dots, \messageRV_{\digitalTxNum})$, we assume uniform distribution in $\naturalsInitialSection{\codebookSize_1} \times \dots \times \naturalsInitialSection{\codebookSize_{\digitalTxNum}}$ (which in particular means that they are independent). According to our assumptions, the error guarantees hold for every possible realization of the analog values $(\stateSpaceElement_{1,1}, \dots, \stateSpaceElement_{1,\numAnalogTransmissions}, \dots, \stateSpaceElement_{\analogTxNum,1}, \dots, \stateSpaceElement_{\analogTxNum,\numAnalogTransmissions})$. In particular, they hold under any possible random distribution of these values. We will treat the analog values as random variables that are independent of the digital messages, with a distribution to be specified later. For now, we only make the following restrictions on their probability distribution:
\begin{align}
\label{eq:analog-as-zero}
\forall \indexAnalogTransmissions \in \analogTransmissionsIndexSet~ \forall \txIndex \in \analogTxSubsetComplement:~
 \stateSpaceElement_{\txIndex, \indexAnalogTransmissions} = 0 \text{ almost surely,}
\\
\label{eq:analog-independence}
\sum \stateSpaceElement_{\analogTxSubset, 1},
\dots,
\sum \stateSpaceElement_{\analogTxSubset,\numAnalogTransmissions}
\text{ are independent.}
\end{align}
Let $\timeSharingRV$ be a random variable, uniformly distributed in $\naturalsInitialSection{\blocklength}$ and independent of all random variables defined thus far. Denote $\inputRV_\txIndex := \inputRV_{\txIndex, \timeSharingRV}$ and $\outputRV := \outputRV_\timeSharingRV$. We bound the conditional mutual information of the inputs and outputs from both sides. From above, we use the bound
\begin{align}
\nonumber
\mutualInformationRVCond{
  \inputRV_{\analogTxSubset}^\blocklength,
  \inputRV_{\digitalTxSubset}^\blocklength
}{\outputRV^\blocklength}{
  \inputRV_{\analogTxSubsetComplement}^\blocklength,
  \inputRV_{\digitalTxSubsetComplement}^\blocklength
}
\overset{(a)}&{=}
\sum_{\blIndex=1}^\blocklength
\mutualInformationRVCond{
  \inputRV_{\analogTxSubset}^\blocklength,
  \inputRV_{\digitalTxSubset}^\blocklength
}{\outputRV_\blIndex}{
  \inputRV_{\analogTxSubsetComplement}^\blocklength,
  \inputRV_{\digitalTxSubsetComplement}^\blocklength,
  \outputRV^{\blIndex-1}
}
\\
\nonumber
&\leq
\sum_{\blIndex=1}^\blocklength
\mutualInformationRVCond{
  \inputRV_{\analogTxSubset}^\blocklength,
  \inputRV_{\digitalTxSubset}^\blocklength,
  \inputRV_{\analogTxSubsetComplement,\complement{\{\blIndex\}}},
  \inputRV_{\digitalTxSubsetComplement,\complement{\{\blIndex\}}},
  \outputRV^{\blIndex-1}
}{\outputRV_\blIndex}{
  \inputRV_{\analogTxSubsetComplement}^\blocklength,
  \inputRV_{\digitalTxSubsetComplement}^\blocklength,
  \outputRV^{\blIndex-1}
}
\\
\nonumber
\overset{(b)}&{\leq}
\sum_{\blIndex=1}^\blocklength
\mutualInformationRVCond{
  \inputRV_{\analogTxSubset}^\blocklength,
  \inputRV_{\digitalTxSubset}^\blocklength,
  \inputRV_{\analogTxSubsetComplement,\complement{\{\blIndex\}}},
  \inputRV_{\digitalTxSubsetComplement,\complement{\{\blIndex\}}},
  \outputRV^{\blIndex-1}
}{\outputRV_\blIndex}{
  \inputRV_{\analogTxSubsetComplement,\blIndex},
  \inputRV_{\digitalTxSubsetComplement,\blIndex}
}
\\
\nonumber
\overset{(a)}&{=}
\sum_{\blIndex=1}^\blocklength
\mutualInformationRVCond{
  \inputRV_{\analogTxSubset,\blIndex},
  \inputRV_{\digitalTxSubset,\blIndex}
}{\outputRV_\blIndex}{
  \inputRV_{\analogTxSubsetComplement,\blIndex},
  \inputRV_{\digitalTxSubsetComplement,\blIndex}
}
+
\sum_{\blIndex=1}^\blocklength
\mutualInformationRVCond{
  \inputRV_{\analogTxSet,\complement{\{\blIndex\}}},
  \inputRV_{\digitalTxSet,\complement{\{\blIndex\}}},
  \outputRV^{\blIndex-1}
}{\outputRV_\blIndex}{
  \inputRV_{\analogTxSet,\blIndex},
  \inputRV_{\digitalTxSet,\blIndex}
}
\\
\nonumber
\overset{(c)}&{=}
\sum_{\blIndex=1}^\blocklength
\mutualInformationRVCond{
  \inputRV_{\analogTxSubset,\blIndex},
  \inputRV_{\digitalTxSubset,\blIndex}
}{\outputRV_\blIndex}{
  \inputRV_{\analogTxSubsetComplement,\blIndex},
  \inputRV_{\digitalTxSubsetComplement,\blIndex}
}
\\
\nonumber
\overset{(d)}&{=}
\blocklength
\sum_{\blIndex=1}^\blocklength
\frac{1}{\blocklength}
\mutualInformationRVCond{
  \inputRV_{\analogTxSubset},
  \inputRV_{\digitalTxSubset}
}{\outputRV}{
  \inputRV_{\analogTxSubsetComplement},
  \inputRV_{\digitalTxSubsetComplement},
  \timeSharingRV=\blIndex
}
\\
\label{eq:converse-bound-above}
&=
\blocklength
\mutualInformationRVCond{
  \inputRV_{\analogTxSubset},
  \inputRV_{\digitalTxSubset}
}{\outputRV}{
  \inputRV_{\analogTxSubsetComplement},
  \inputRV_{\digitalTxSubsetComplement},
  \timeSharingRV
},
\end{align}
where both steps labeled (a) are due to the chain rule for mutual information, (b) is because the removed conditions are conditionally independent of $\outputRV_\blIndex$ given the first argument of the information, (c) follows from the memoryless property of the channel, and (d) from the independence of $\timeSharingRV$ of the other random variables. For the following, we slightly extend our convention about indexing tuples with sets and use $\stateSpaceElement_{\analogTxSubset,\analogTransmissionsIndexSet} := (\stateSpaceElement_{\txIndex,\indexAnalogTransmissions})_{\txIndex\in\analogTxSubset,\indexAnalogTransmissions\in\analogTransmissionsIndexSet}$. From below, we use the bound
\begin{align}
\nonumber
\mutualInformationRVCond{
  \inputRV_{\analogTxSubset}^\blocklength,
  \inputRV_{\digitalTxSubset}^\blocklength
}{\outputRV^\blocklength}{
  \inputRV_{\analogTxSubsetComplement}^\blocklength,
  \inputRV_{\digitalTxSubsetComplement}^\blocklength
}
\overset{(a)}&{\geq}
\mutualInformationRVCond{
  \stateSpaceElement_{\analogTxSubset,\analogTransmissionsIndexSet},
  \messageRV_{\digitalTxSubset}
}{
  \objectiveFunctionEstimate_{\analogTransmissionsIndexSet},
  \hat{\messageRV}_{\digitalTxSubset}
}{
  \inputRV_{\analogTxSubsetComplement}^\blocklength,
  \inputRV_{\digitalTxSubsetComplement}^\blocklength
}
\\
\nonumber
\overset{(b)}&{\geq}
\mutualInformationRV{
  \stateSpaceElement_{\analogTxSubset,\analogTransmissionsIndexSet},
  \messageRV_{\digitalTxSubset}
}{
  \objectiveFunctionEstimate_{\analogTransmissionsIndexSet},
  \hat{\messageRV}_{\digitalTxSubset}
}
\\
\nonumber
\overset{(c)}&{=}
\mutualInformationRV{
  \messageRV_{\digitalTxSubset}
}{
  \objectiveFunctionEstimate_{\analogTransmissionsIndexSet},
  \hat{\messageRV}_{\digitalTxSubset}
}
+
\mutualInformationRVCond{
  \stateSpaceElement_{\analogTxSubset,\analogTransmissionsIndexSet}
}{
  \objectiveFunctionEstimate_{\analogTransmissionsIndexSet},
  \hat{\messageRV}_{\digitalTxSubset}
}{
  \messageRV_{\digitalTxSubset}
}
\\
\nonumber
&\geq
\mutualInformationRV{
  \messageRV_{\digitalTxSubset}
}{
  \hat{\messageRV}_{\digitalTxSubset}
}
+
\mutualInformationRVCond{
  \stateSpaceElement_{\analogTxSubset,\analogTransmissionsIndexSet}
}{
  \objectiveFunctionEstimate_{\analogTransmissionsIndexSet}
}{
  \messageRV_{\digitalTxSubset}
}
\\
\label{eq:converse-bound-below}
\overset{(d)}&{\geq}
\mutualInformationRV{
  \messageRV_{\digitalTxSubset}
}{
  \hat{\messageRV}_{\digitalTxSubset}
}
+
\mutualInformationRV{
  \stateSpaceElement_{\analogTxSubset,\analogTransmissionsIndexSet}
}{
  \objectiveFunctionEstimate_{\analogTransmissionsIndexSet}
},
\end{align}
where (a) is due to the data processing inequality. (b) holds because $\inputRV_{\analogTxSubsetComplement}^\blocklength$ is independent of $\stateSpaceElement_{\analogTxSubset,\analogTransmissionsIndexSet}, \messageRV_{\digitalTxSubset}$ due to (\ref{eq:analog-as-zero}), and $\inputRV_{\digitalTxSubsetComplement}^\blocklength$ is also independent of $\stateSpaceElement_{\analogTxSubset,\analogTransmissionsIndexSet}, \messageRV_{\digitalTxSubset}$ due to the independence assumption of digital messages. (c) is an application of the chain rule, and (d) holds because $\messageRV_{\digitalTxSubset}$ is independent of $\stateSpaceElement_{\analogTxSubset,\analogTransmissionsIndexSet}$. We next further bound these terms separately. By Fano's inequality (e.g.,~\cite[Section 2.1]{elgamal2011network}), we have
\begin{equation}
\label{eq:converse-bound-digital}
\mutualInformationRV{
  \messageRV_{\digitalTxSubset}
}{
  \hat{\messageRV}_{\digitalTxSubset}
}
=
\entropyRV{\messageRV_{\digitalTxSubset}}
-
\entropyRVCond{\messageRV_{\digitalTxSubset}}{\hat{\messageRV}_{\digitalTxSubset}}
\geq
\entropyRV{\messageRV_{\digitalTxSubset}}
-
1
-
\decodingError
\blocklength\sum\codebookRate_{\digitalTxSubset}
\geq
\blocklength
(1-\decodingError)
\sum\codebookRate_{\digitalTxSubset}
-
1,
\end{equation}
where $\entropyRV{\cdot}$ denotes Shannon entropy and $\decodingError$ denotes the average digital decoding error which tends to $0$ since Lemma~\ref{lemma:digital-mac-with-ota-c}-\ref{item:digital-mac-with-ota-c-error-digital} is satisfied by assumption. For the analog part, we obtain
\begin{align}
\nonumber
\mutualInformationRV{
  \stateSpaceElement_{\analogTxSubset,\analogTransmissionsIndexSet}
}{
  \objectiveFunctionEstimate_{\analogTransmissionsIndexSet}
}
\overset{(a)}&{\geq}
\mutualInformationRV{
  \sum \stateSpaceElement_{\analogTxSubset, 1},
  \dots,
  \sum \stateSpaceElement_{\analogTxSubset,\numAnalogTransmissions}
}{
  \objectiveFunctionEstimate_{\analogTransmissionsIndexSet}
}
\\
\nonumber
\overset{(b)}&{=}
\sum_{\indexAnalogTransmissions=1}^\numAnalogTransmissions
\mutualInformationRVCond{
  \sum \stateSpaceElement_{\analogTxSubset, \indexAnalogTransmissions}
}{
  \objectiveFunctionEstimate_{\analogTransmissionsIndexSet}
}{
  \sum \stateSpaceElement_{\analogTxSubset, 1},
  \dots,
  \sum \stateSpaceElement_{\analogTxSubset,\indexAnalogTransmissions-1}
}
\\
\nonumber
\overset{(\ref{eq:analog-independence})}&{\geq}
\sum_{\indexAnalogTransmissions=1}^\numAnalogTransmissions
\mutualInformationRV{
  \sum \stateSpaceElement_{\analogTxSubset, \indexAnalogTransmissions}
}{
  \objectiveFunctionEstimate_\indexAnalogTransmissions
}
\\
\label{eq:converse-bound-analog}
&=
\sum_{\indexAnalogTransmissions=1}^\numAnalogTransmissions
\mutualInformationRV{
  \frac{\sum \stateSpaceElement_{\analogTxSubset,\indexAnalogTransmissions}}{\sqrt{\converseAnalogErrorVariance}}
}{
  \frac{\objectiveFunctionEstimate_\indexAnalogTransmissions}{\sqrt{\converseAnalogErrorVariance}}
}
\end{align}
where (a) is by replacing the first argument of the information with a function of it and (b) is due to the chain rule. We fix $\indexAnalogTransmissions$ for now and use the shorthand notation $\generalRV := \sum \stateSpaceElement_{\analogTxSubset,\indexAnalogTransmissions}/\sqrt{\converseAnalogErrorVariance}$ and $\hat{\generalRV} := \objectiveFunctionEstimate_\indexAnalogTransmissions/\sqrt{\converseAnalogErrorVariance}$. Moreover, we define a new random variable $\generalRVTwo \sim \normaldistribution{0}{\disturbancestd^2}$ with some $\disturbancestd > 0$. We fix the distributions of $\stateSpaceElement_{1,\indexAnalogTransmissions}, \dots, \stateSpaceElement_{\analogTxNum,\indexAnalogTransmissions}$ such that for all $\txIndex,\txIndex' \in \analogTxSubset$, we have $\stateSpaceElement_{\txIndex,\indexAnalogTransmissions} = \stateSpaceElement_{\txIndex',\indexAnalogTransmissions}$ almost surely and $\stateSpaceElement_{\txIndex,\indexAnalogTransmissions}$ follows a uniform distribution on $[-1,1]$. Together with (\ref{eq:analog-as-zero}) and (\ref{eq:converse-mse}), this means that $\generalRV$ follows a uniform distribution on $[-\cardinality{\analogTxSubset}/\sqrt{\converseAnalogErrorVariance},\cardinality{\analogTxSubset}/\sqrt{\converseAnalogErrorVariance}]$ and
\begin{equation}
\label{eq:converse-mse-normalized}
\Expectation\left(
  \left(
    \generalRV
    -
    \hat{\generalRV}
  \right)^2
\right)
\leq
1.
\end{equation}
With these definitions and choices, we calculate
\begin{align}
\nonumber
\mutualInformationRV{
  \frac{\sum \stateSpaceElement_{\analogTxSubset,\indexAnalogTransmissions}}{\sqrt{\converseAnalogErrorVariance}}
}{
  \frac{\objectiveFunctionEstimate_\indexAnalogTransmissions}{\sqrt{\converseAnalogErrorVariance}}
}
&=
\mutualInformationRV{\generalRV}{\hat{\generalRV}}
\\
\nonumber
\overset{(a)}&{\geq}
\mutualInformationRV{\generalRV}{\hat{\generalRV}+\generalRVTwo}
\\
\nonumber
\overset{(b)}&{=}
\diffEntropyRV{\generalRV}
-
\diffEntropyRVCond{\generalRV}{\hat{\generalRV}+\generalRVTwo}
\\
\nonumber
\overset{(c)}&{=}
\diffEntropyRV{\generalRV}
-
\diffEntropyRVCond{\generalRV - \hat{\generalRV} - \generalRVTwo}{\hat{\generalRV}+\generalRVTwo}
\\
\nonumber
\overset{(d)}&{\geq}
\diffEntropyRV{\generalRV}
-
\diffEntropyRV{\generalRV - \hat{\generalRV} - \generalRVTwo}
\\
\nonumber
\overset{(e)}&{\geq}
\diffEntropyRV{\generalRV}
-
\frac{1}{2}\log\left(2\pi\eulersNumber(1+\disturbancestd)^2\right)
\\
\label{eq:converse-information-bound-general}
\overset{(f)}&{=}
\frac{1}{2}\log\frac{2\cardinality{\analogTxSubset}^2}{\pi\eulersNumber(1+\disturbancestd)^2\converseAnalogErrorVariance},
\end{align}
where (a) is due to the data processing inequality. For (b), we only need to ensure that the differential entropies that appear are well-defined. To this end, we note that $\generalRV$ has a density with respect to the Lebesgue measure, implying that the differential entropy $\diffEntropyRV{\generalRV}$ exists. Since $\generalRVTwo$ has a density with respect to the Lebesgue measure and is independent of $\generalRV$ and $\hat{\generalRV}$, by Lemma~\ref{lemma:convolution-density} in Appendix~\ref{appendix:technical-lemmas-converse}, $\hat{\generalRV} + \generalRVTwo$ also has a density, as does $\hat{\generalRV} + \generalRVTwo$ conditioned on any realization of $\generalRV$. By the Bayes rule, this implies that $\generalRV$ conditioned on any realization of $\hat{\generalRV} + \generalRVTwo$ has a density and consequently, $\diffEntropyRVCond{\generalRV}{\hat{\generalRV} + \generalRVTwo}$ exists as well. (c) is due to the invariance of differential entropy to shifts. (d) is valid because conditioning does not increase entropy. To argue (e), we use (\ref{eq:converse-mse-normalized}) to calculate
\[
\Expectation\left(
  \left(
    \generalRV - \hat{\generalRV} - \generalRVTwo
  \right)^2
\right)
=
\Expectation\left(
  \left(
    \generalRV - \hat{\generalRV}
  \right)^2
\right)
-
2\Expectation\left(
  \generalRVTwo(\generalRV - \hat{\generalRV})
\right)
+
\Expectation\left(
  \generalRVTwo^2
\right)
\leq
1 + \disturbancestd^2
\]
and then use the well-known fact that differential entropy under a power constraint is maximized by a centered Gaussian distribution (see, e.g., \cite[Section 2.2]{elgamal2011network}). Finally, for (f), we use the uniform distribution of $\generalRV$ which implies $\diffEntropyRV{\generalRV} = \log(2\cardinality{\analogTxSubset}/\sqrt{\converseAnalogErrorVariance})$.

The bound (\ref{eq:converse-information-bound-general}) is valid for every $\disturbancestd > 0$ and is independent of $\indexAnalogTransmissions$. Letting $\disturbancestd \rightarrow 0$ and substituting the resulting bound in (\ref{eq:converse-bound-analog}), we obtain
\begin{equation*}
\mutualInformationRV{
  \stateSpaceElement_{\analogTxSubset,\analogTransmissionsIndexSet}
}{
  \objectiveFunctionEstimate_{\analogTransmissionsIndexSet}
}
\geq
\frac{\numAnalogTransmissions}{2}\log\frac{2\cardinality{\analogTxSubset}^2}{\pi\eulersNumber\converseAnalogErrorVariance}.
\end{equation*}
Noting
$
\mutualInformationRV{
  \stateSpaceElement_{\analogTxSubset,\analogTransmissionsIndexSet}
}{
  \objectiveFunctionEstimate_{\analogTransmissionsIndexSet}
}
\geq
0
$
and combining this with (\ref{eq:converse-bound-above}), (\ref{eq:converse-bound-below}), and (\ref{eq:converse-bound-digital}) yields
\[
\blocklength
\mutualInformationRVCond{
  \inputRV_{\analogTxSubset},
  \inputRV_{\digitalTxSubset}
}{\outputRV}{
  \inputRV_{\analogTxSubsetComplement},
  \inputRV_{\digitalTxSubsetComplement},
  \timeSharingRV
}
\geq
\blocklength
(1-\decodingError)
\sum\codebookRate_{\digitalTxSubset}
-
1
+
\positivePart{
  \frac{\numAnalogTransmissions}{2}\log\frac{2\cardinality{\analogTxSubset}^2}{\pi\eulersNumber\converseAnalogErrorVariance}
}.
\]
We divide this inequality by $\blocklength$ and then let $\blocklength \rightarrow \infty$, arriving at (\ref{eq:converse-rate-bound-mse}).
\end{proof}

\begin{cor}
\label{cor:converse}
Lemma~\ref{lemma:converse} holds with its conclusion replaced by
\begin{equation}
\label{eq:converse-rate-bound-cor}
\sum\codebookRate_{\digitalTxSubset}
\leq
\min_{\txIndex \in \{0,\dots,\analogTxNum\}}\left(
  \unconstrainedGaussianCapacity\left(
    \frac{\sum \powerconstraint_{\digitalTxSubset} + \txIndex^2\analogAmplitudeConstraint^2}
        {\noisestd^2}
  \right)
  -
  \positivePart{
    \frac{\analograte}{2}\log\frac{2\txIndex^2}{\pi\eulersNumber\converseAnalogErrorVariance}
  }
\right).
\end{equation}
\end{cor}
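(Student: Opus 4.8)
The plan is to deduce the corollary from Lemma~\ref{lemma:converse} by bounding, for each cardinality $\txIndex \in \{0,\dots,\analogTxNum\}$, the conditional mutual information on the right-hand side of (\ref{eq:converse-rate-bound-mse}) by the Gaussian capacity term $\unconstrainedGaussianCapacity\bigl((\sum\powerconstraint_{\digitalTxSubset}+\txIndex^2\analogAmplitudeConstraint^2)/\noisestd^2\bigr)$, and then taking the minimum over $\txIndex$. So first I would fix $\txIndex \in \{0,\dots,\analogTxNum\}$, choose any $\analogTxSubset \subseteq \analogTxSet$ with $\cardinality{\analogTxSubset}=\txIndex$, and invoke Lemma~\ref{lemma:converse} with this $\analogTxSubset$. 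This produces a distribution with the stated factorization $\inputDistribution_\timeSharingRV\inputDistribution_{\inputRV_1,\dots,\inputRV_{\analogTxNum}|\timeSharingRV}\inputDistribution_{\inputRV_{\analogTxNum+1}|\timeSharingRV}\cdots\inputDistribution_{\inputRV_{\txNum}|\timeSharingRV}$ satisfying the amplitude and power constraints, together with the bound
\[
\sum\codebookRate_{\digitalTxSubset}
\leq
\mutualInformationRVCond{\inputRV_{\analogTxSubset},\inputRV_{\digitalTxSubset}}{\outputRV}{\inputRV_{\analogTxSubsetComplement},\inputRV_{\digitalTxSubsetComplement},\timeSharingRV}
-
\positivePart{\tfrac{\analograte}{2}\log\tfrac{2\txIndex^2}{\pi\eulersNumber\converseAnalogErrorVariance}}.
\]
Since the subtracted term already coincides with the one in (\ref{eq:converse-rate-bound-cor}) for this $\txIndex$, the only remaining work is to bound the mutual information.

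The core step is this mutual-information bound. I would write it as $\diffEntropyRVCond{\outputRV}{\inputRV_{\analogTxSubsetComplement},\inputRV_{\digitalTxSubsetComplement},\timeSharingRV}-\diffEntropyRVCond{\outputRV}{\inputRV_{\analogTxSet},\inputRV_{\digitalTxSet},\timeSharingRV}$, using that $\analogTxSubset\cup\analogTxSubsetComplement=\analogTxSet$ and $\digitalTxSubset\cup\digitalTxSubsetComplement=\digitalTxSet$. Conditioned on all inputs and $\timeSharingRV$, the output $\outputRV$ is a deterministic shift of the Gaussian noise, so the second term equals $\diffEntropyRV{\noiseRV}=\tfrac{1}{2}\log(2\pi\eulersNumber\noisestd^2)$. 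For the first term, conditioned on $(\inputRV_{\analogTxSubsetComplement},\inputRV_{\digitalTxSubsetComplement},\timeSharingRV=q)$ the output is a constant plus $\sum_{\txIndex\in\analogTxSubset}\inputRV_\txIndex+\sum_{\txIndex\in\digitalTxSubset}\inputRV_\txIndex+\noiseRV$; by the factorization, given $\timeSharingRV$ the analog inputs, each digital input, and the noise are mutually independent, so the conditional variance is the sum of the variance of the analog sum, the variances of the digital inputs in $\digitalTxSubset$, and $\noisestd^2$. The amplitude constraint $\Probability(\absolutevalue{\inputRV_\txIndex}>\analogAmplitudeConstraint)=0$ for $\txIndex\in\analogTxSet$ gives $\absolutevalue{\sum_{\txIndex\in\analogTxSubset}\inputRV_\txIndex}\leq\txIndex\analogAmplitudeConstraint$, hence variance at most $\txIndex^2\analogAmplitudeConstraint^2$, while the digital second-moment constraints give $\Expectation(\inputRV_\txIndex^2)\leq\powerconstraint_{\txIndex}$. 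Applying the maximum-entropy property of the Gaussian to each conditional entropy and then Jensen's inequality (concavity of $\log$) to average over $\timeSharingRV$, I obtain $\diffEntropyRVCond{\outputRV}{\inputRV_{\analogTxSubsetComplement},\inputRV_{\digitalTxSubsetComplement},\timeSharingRV}\leq\tfrac{1}{2}\log\bigl(2\pi\eulersNumber(\txIndex^2\analogAmplitudeConstraint^2+\sum\powerconstraint_{\digitalTxSubset}+\noisestd^2)\bigr)$, so that
\[
\mutualInformationRVCond{\inputRV_{\analogTxSubset},\inputRV_{\digitalTxSubset}}{\outputRV}{\inputRV_{\analogTxSubsetComplement},\inputRV_{\digitalTxSubsetComplement},\timeSharingRV}
\leq
\unconstrainedGaussianCapacity\Bigl(\tfrac{\sum\powerconstraint_{\digitalTxSubset}+\txIndex^2\analogAmplitudeConstraint^2}{\noisestd^2}\Bigr).
\]

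Combining the two displays gives $\sum\codebookRate_{\digitalTxSubset}\leq\unconstrainedGaussianCapacity\bigl((\sum\powerconstraint_{\digitalTxSubset}+\txIndex^2\analogAmplitudeConstraint^2)/\noisestd^2\bigr)-\positivePart{\tfrac{\analograte}{2}\log\tfrac{2\txIndex^2}{\pi\eulersNumber\converseAnalogErrorVariance}}$ for this $\txIndex$. Because the Gaussian-capacity bound is independent of the input distribution, the fact that each cardinality $\txIndex$ is handled with its own size-$\txIndex$ subset and its own (possibly different) distribution is harmless; hence the inequality holds for every $\txIndex\in\{0,\dots,\analogTxNum\}$, and taking the minimum over $\txIndex$ yields exactly (\ref{eq:converse-rate-bound-cor}). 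The boundary case $\txIndex=0$ (with $\analogTxSubset=\emptyset$, so all analog values are zero by (\ref{eq:analog-as-zero})) reproduces the ordinary Gaussian-MAC sum-rate bound, the positive-part term vanishing since $\log 0=-\infty$. I expect the main obstacle to be the mutual-information-to-capacity step: correctly accounting for the analog contribution $\txIndex^2\analogAmplitudeConstraint^2$, which is controlled only through the amplitude constraint and the worst case of fully correlated analog inputs (rather than being additive like the digital term $\sum\powerconstraint_{\digitalTxSubset}$, which uses independence), and cleanly pushing the averaging over the time-sharing variable $\timeSharingRV$ through Jensen's inequality.
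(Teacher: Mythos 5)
Your proposal is correct and reaches the same bound, but the key technical step is organized differently from the paper. The paper first splits the mutual information with the chain rule into an analog and a digital contribution, $\mutualInformationRVCond{\inputRV_{\analogTxSubset}}{\outputRV}{\inputRV_{\analogTxSubsetComplement},\inputRV_{\digitalTxSubsetComplement},\timeSharingRV}+\mutualInformationRVCond{\inputRV_{\digitalTxSubset}}{\outputRV}{\inputRV_{\analogTxSet},\inputRV_{\digitalTxSubsetComplement},\timeSharingRV}$, invokes Lemma~\ref{lemma:information-of-sum} to replace each tuple by its sum, bounds the digital summand by $\unconstrainedGaussianCapacity(\sum\powerconstraint_{\digitalTxSubset}/\noisestd^2)$ and the analog summand by $\unconstrainedGaussianCapacity\bigl(\cardinality{\analogTxSubset}^2\analogAmplitudeConstraint^2/(\sum\powerconstraint_{\digitalTxSubset}+\noisestd^2)\bigr)$ (treating the digital inputs as additional noise), and then lets the two logarithms telescope into the single capacity term. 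You instead bound the whole quantity in one shot as $\diffEntropyRVCond{\outputRV}{\inputRV_{\analogTxSubsetComplement},\inputRV_{\digitalTxSubsetComplement},\timeSharingRV}-\diffEntropyRV{\noiseRV}$ via the maximum-entropy property and Jensen's inequality, which is more direct and avoids Lemma~\ref{lemma:information-of-sum} entirely; the paper's two-step decomposition makes more visible which constraint (amplitude for the analog block, average power plus independence for the digital block) controls which piece. Two small points to be explicit about when writing this up: first, since the analog inputs may be mutually dependent given $\timeSharingRV$, conditioning on $\inputRV_{\analogTxSubsetComplement}$ can change the law of $\sum\inputRV_{\analogTxSubset}$, so you should justify the variance term by the almost-sure bound $\absolutevalue{\sum\inputRV_{\analogTxSubset}}\leq\txIndex\analogAmplitudeConstraint$, which controls the conditional second moment uniformly over the conditioning (you gesture at this at the end, and the paper handles it the same way); second, your handling of the dependence of the distribution on $\analogTxSubset$ is fine and matches the paper's remark --- the paper simply fixes the minimizing $\txIndex$ up front, while you run the argument for every $\txIndex$ and observe the final bound is distribution-free, which is equivalent.
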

\begin{proof}
We fix an arbitrary $\digitalTxSubset$, a $\txIndex$ that realizes the minimum in (\ref{eq:converse-rate-bound-cor}), and some $\analogTxSubset \subseteq \analogTxSet$ with $\cardinality{\analogTxSubset}=\txIndex$. With these choices, we invoke Lemma~\ref{lemma:converse} and decompose the information term that appears in (\ref{eq:converse-rate-bound-mse}) as
\begin{align}
\nonumber
\mutualInformationRVCond{
  \inputRV_{\analogTxSubset},
  \inputRV_{\digitalTxSubset}
}{\outputRV}{
  \inputRV_{\analogTxSubsetComplement},
  \inputRV_{\digitalTxSubsetComplement},
  \timeSharingRV
}
\overset{(a)}&{=}
\mutualInformationRVCond{
  \inputRV_{\analogTxSubset}
}{\outputRV}{
  \inputRV_{\analogTxSubsetComplement},
  \inputRV_{\digitalTxSubsetComplement},
  \timeSharingRV
}
+
\mutualInformationRVCond{
  \inputRV_{\digitalTxSubset}
}{\outputRV}{
  \inputRV_{\analogTxSet},
  \inputRV_{\digitalTxSubsetComplement},
  \timeSharingRV
}
\\
\label{eq:cor-converse-decomposition}
\overset{(b)}&{=}
\mutualInformationRVCond{
  \sum \inputRV_{\analogTxSubset}
}{\outputRV}{
  \inputRV_{\analogTxSubsetComplement},
  \inputRV_{\digitalTxSubsetComplement},
  \timeSharingRV
}
+
\mutualInformationRVCond{
  \sum \inputRV_{\digitalTxSubset}
}{\outputRV}{
  \inputRV_{\analogTxSet},
  \inputRV_{\digitalTxSubsetComplement},
  \timeSharingRV
},
\end{align}
where (a) is due to the chain rule for mutual information and for (b), we apply Lemma~\ref{lemma:information-of-sum} in Appendix~\ref{appendix:technical-lemmas-converse} in each summand. For each $\txIndex \in \digitalTxSubset$, the power constraint $\Expectation\big((\inputRV_\txIndex)^2\big) \leq \powerconstraint_\txIndex$ from the statement of Lemma~\ref{lemma:converse} implies a variance constraint $\Variance(\inputRV_\txIndex) \leq \powerconstraint_\txIndex$. Due to the independence of the variables in $\digitalTxSubset$, this means that the sum obeys a variance constraint $\Variance(\sum \inputRV_{\digitalTxSubset}) \leq \sum \powerconstraint_{\digitalTxSubset}$. We can obtain a simple upper bound for the second summand in (\ref{eq:cor-converse-decomposition}) by disregarding the amplitude constraint. In this case it is known that the information is maximized by Gaussian input distributions which yields
\begin{equation}
\label{eq:cor-converse-digital-bound}
\mutualInformationRVCond{
  \sum \inputRV_{\digitalTxSubset}
}{\outputRV}{
  \inputRV_{\analogTxSet},
  \inputRV_{\digitalTxSubsetComplement},
  \timeSharingRV
}
\leq
\unconstrainedGaussianCapacity\left(\sum \powerconstraint_{\digitalTxSubset}/\noisestd^2\right).
\end{equation}
For the first summand in (\ref{eq:cor-converse-decomposition}), we do not necessarily have independence of the variables in $\analogTxSubset$, but from Lemma~\ref{lemma:converse}, we know that for every $\txIndex \in \analogTxSubset$, we have $\absolutevalue{\inputRV_\txIndex} \leq \analogAmplitudeConstraint$ almost surely. From this, we can conclude $\absolutevalue{\sum\inputRV_{\analogTxSubset}} \leq \cardinality{\analogTxSubset}\analogAmplitudeConstraint$ and therefore $\Expectation\big((\sum\inputRV_{\analogTxSubset})^2\big) \leq \cardinality{\analogTxSubset}^2 \analogAmplitudeConstraint^2$. Again disregarding the amplitude constraint and only considering the power constraint, we obtain
\begin{equation}
\label{eq:cor-converse-analog-bound}
\mutualInformationRVCond{
  \sum \inputRV_{\analogTxSubset}
}{\outputRV}{
  \inputRV_{\analogTxSubsetComplement},
  \inputRV_{\digitalTxSubsetComplement},
  \timeSharingRV
}
\leq
\unconstrainedGaussianCapacity\left(
  \frac{\cardinality{\analogTxSubset}^2 \analogAmplitudeConstraint^2}{\sum \powerconstraint_{\digitalTxSubset} + \noisestd^2}
\right).
\end{equation}
Substituting (\ref{eq:cor-converse-digital-bound}) and (\ref{eq:cor-converse-analog-bound}) into (\ref{eq:cor-converse-decomposition}) yields
\[
\mutualInformationRVCond{
  \inputRV_{\analogTxSubset},
  \inputRV_{\digitalTxSubset}
}{\outputRV}{
  \inputRV_{\analogTxSubsetComplement},
  \inputRV_{\digitalTxSubsetComplement},
  \timeSharingRV
}
\leq
\unconstrainedGaussianCapacity\left(\frac{\sum \powerconstraint_{\digitalTxSubset}}{\noisestd^2}\right)
+
\unconstrainedGaussianCapacity\left(
  \frac{\cardinality{\analogTxSubset}^2 \analogAmplitudeConstraint^2}{\sum \powerconstraint_{\digitalTxSubset} + \noisestd^2}
\right)
=
\unconstrainedGaussianCapacity\left(
  \frac{\cardinality{\analogTxSubset}^2 \analogAmplitudeConstraint^2 + \sum \powerconstraint_{\digitalTxSubset}}{\noisestd^2}
\right).
\]
If we substitute this in (\ref{eq:converse-rate-bound-mse}), the corollary follows.
\end{proof}

\section{\gls{socc} over fading channels for General Functions in $\Fmon$}
\label{sec:fmon}

For simplicity, we have restricted the analog computations in Lemma~\ref{lemma:digital-mac-with-ota-c} to the case where a sum of values in $[-1,1]$ is computed and the channel is pure \gls{awgn} without fading. In this section, we extend this to fading channels. The pre-processors we use in the extended scheme depend on the fading coefficients. This means that this scheme is only applicable when channel state information is available at the transmitter. We also extend the class of functions that can be computed to $\Fmon$ from~\cite[Definition 3]{frey2021over-tsp}. It is important to note that besides linear functions such as weighted sums (which are for instance important in federated learning applications), $\Fmon$ also contains nonlinear functions such as $p$-norms for $p \geq 1$. For convenience, we recall the definition here.
\begin{definition}
\label{def:Fmon}
\emph{(\cite[Definition 3, equations (9)-(11)]{frey2021over-tsp}).}
A measurable function $\objectiveFunction: \stateSpace_1 \times \dots \times \stateSpace_\txNum \to \reals$ is said to belong to $\Fmon$ if there exist bounded and measurable functions
$(\nomPre{\txIndex}{})_{\txIndex \in \naturalsInitialSection{\txNum}}$, a measurable set $\nomIntermediateDomain \subseteq \reals$ with the property $\nomPre{1}{}(\stateSpace_1)+\dots + \nomPre{\txNum}{}(\stateSpace_\txNum) \subseteq \nomIntermediateDomain$, a measurable function $\nomPost{}:\nomIntermediateDomain \rightarrow \reals$ such that
for all $(\stateSpaceElement_1, \dots , \stateSpaceElement_\txNum) \in \stateSpace_1 \times \dots \times \stateSpace_\txNum$, we have
\begin{equation*}
\objectiveFunction(\stateSpaceElement_1, \dots , \stateSpaceElement_\txNum) = \nomPost{}\left(  \sum_{\txIndex=1}^\txNum \nomPre{\txIndex}{}(\stateSpaceElement_\txIndex)     \right ),
\end{equation*}
and there is a strictly increasing function $\nomInc{} : [0, \infty) \to [0, \infty)$  with $\nomInc{}(0)=0$ and
\begin{equation}\label{eq:monotone-domination}
\absolutevalue{
  \nomPost{}(\generalReal) - \nomPost{}(\generalRealTwo)
}
\leq
\nomInc{}\left(
  \absolutevalue{\generalReal-\generalRealTwo}
\right)
\end{equation}
for all $\generalReal,\generalRealTwo \in \nomIntermediateDomain$. We call the function $\nomInc{}$ an \emph{increment majorant} of $\objectiveFunction$ and $\nomPre{1}{}, \dots, \nomPre{\txNum}{}, \nomPost{}$ with the properties above an \emph{$\Fmon$-nomographic representation} of $\objectiveFunction$. For $\objectiveFunction \in \Fmon$ with a fixed $\Fmon$-nomographic representation, we also define the following quantities:
\begin{equation}
\label{eq:nom-quantities}
\nomMin{\txIndex}{}
:=
\inf_{\stateSpaceElement \in \stateSpace_\txIndex} \nomPre{\txIndex}{}(\stateSpaceElement),
~~
\nomMax{\txIndex}{}
:=
\sup_{\stateSpaceElement \in \stateSpace_\txIndex} \nomPre{\txIndex}{}(\stateSpaceElement),
~~
\nomMaxSpread{\objectiveFunction}
:=
\max_{\txIndex=1}^\txNum
  (\nomMax{\txIndex}{}-\nomMin{\txIndex}{}).
\end{equation}
\end{definition}
In~\cite{frey2021over-tsp}, there are a few examples of functions that are contained in $\Fmon$, along with some discussion of their relevance for practical systems.

In this section, we consider the complex fading channel $\channelFading$ given by
\[
\outputRV
=
\fading_1 \inputRV_1
+
\dots
+
\fading_{\txNum} \inputRV_{\txNum}
+
\noiseRV
\]
where the channel inputs $\inputRV_1, \dots, \inputRV_{\txNum}$ are $\complexNumbers$-valued, and $\noiseRV$ follows a complex normal distribution with mean $0$ and variance $\noisestd^2$ per complex dimension. We assume that each transmitter $\txIndex$ knows the fading coefficient $\fading_\txIndex \in \complexNumbers \setminus \{0\}$. Moreover, we assume amplitude constraints $\maximumNorm{\inputRV_1^\blocklength} \leq \amplitudeConstraint_1, \dots, \maximumNorm{\inputRV_{\txNum}^\blocklength} \leq \amplitudeConstraint_{\txNum}$ for all transmitters and digital power constraints $\euclidNorm{\inputRV_{\analogTxNum+1}^\blocklength}^2 \leq \powerconstraint_{\analogTxNum+1}, \dots \euclidNorm{\inputRV_{\txNum}^\blocklength}^2 \leq \powerconstraint_{\txNum}$. Define
\begin{equation}
\label{eq:cor-analog-amplitude-constraint}
\analogAmplitudeConstraint
:=
\min_{\txIndex = 1, \dots, \analogTxNum}
  \absolutevalue{\fading_\txIndex}\amplitudeConstraint_\txIndex.
\end{equation}
Then we have the following corollary to Lemma~\ref{lemma:digital-mac-with-ota-c}:
\begin{cor}
\label{cor:nom}
Let $\objectiveFunction^{(1)}, \dots, \objectiveFunction^{(\numAnalogTransmissions)}$ be a tuple of functions in $\Fmon$, and for every $\indexAnalogTransmissions$, fix an $\Fmon$-nomographic representation $\nomPre{1}{\indexAnalogTransmissions}, \dots, \nomPre{\txNum}{\indexAnalogTransmissions}, \nomPost{\indexAnalogTransmissions}$ of $\objectiveFunction^{(\indexAnalogTransmissions)}$ along with an increment majorant $\nomInc{\indexAnalogTransmissions}$. We use $\nomMin{\txIndex}{\indexAnalogTransmissions}$, $\nomMax{\txIndex}{\indexAnalogTransmissions}$, and $\nomMaxSpread{\objectiveFunction^{(\indexAnalogTransmissions)}}$ to denote the quantities defined in (\ref{eq:nom-quantities}).

Let $\analograte, \analograte' \in (0,1)$ with $\analograte < \analograte'$, $\noisestd \in (0,\infty)$ and $\analogTxNum, \digitalTxNum \in \naturals$ be fixed, let $\blocklength \in \naturals$ be the block length and assume that $\analograte\blocklength \leq \numAnalogTransmissions \leq \analograte'\blocklength$. Let $\blocklength_1, \dots, \blocklength_\numAnalogTransmissions$ be natural numbers with $\blocklength = \blocklength_1 + \dots + \blocklength_\numAnalogTransmissions$. Let $\codebookRate = (\codebookRate_1, \dots, \codebookRate_{\digitalTxNum})$ be such that
\[
\codebookRate' = (\codebookRate_1', \dots, \codebookRate_{\digitalTxNum}') := (\codebookRate_1/2(1-\analograte'), \dots, \codebookRate_{\digitalTxNum}/2(1-\analograte'))
\]
is an inner point of
\begin{multline*}
\constrainedMacCapacity{\channelAWGN}\left(
  \frac{\absolutevalue{\fading_{\analogTxNum+1}}^2\powerconstraint_{\analogTxNum+1}}{1-\analograte},
  \dots,
  \frac{\absolutevalue{\fading_{\txNum}}^2\powerconstraint_{\txNum}}{1-\analograte},
  \frac{(\sqrt{2}-1)\absolutevalue{\fading_{\analogTxNum+1}}\amplitudeConstraint_{\analogTxNum+1}}{2},
  \dots,
  \frac{(\sqrt{2}-1)\absolutevalue{\fading_{\txNum}}\amplitudeConstraint_{\txNum}}{2}
\right)
\\
\supseteq
\constrainedMacCapacity{\channelAWGN}\left(
  \frac{\absolutevalue{\fading_{\analogTxNum+1}}^2 \powerconstraint_{\analogTxNum+1}}{1-\analograte},
  \dots,
  \frac{\absolutevalue{\fading_{\txNum}}^2 \powerconstraint_{\txNum}}{1-\analograte},
  \frac{\absolutevalue{\fading_{\analogTxNum+1}}\amplitudeConstraint_{\analogTxNum+1}}{4.83},
  \dots,
  \frac{\absolutevalue{\fading_{\txNum}}\amplitudeConstraint_{\txNum}}{4.83}
\right).
\end{multline*}
For large enough block lengths $\blocklength$, there is a \gls{socc} scheme consisting of pre-processors $\preprocHybrid_1, \dots, \preprocHybrid_\txNum$ and a post-processor $\postprocHybrid$ with $\stateSpace_{\txIndex,\indexAnalogTransmissions} = [\nomMin{\txIndex}{},\nomMax{\txIndex}{}]$ for all $\txIndex \in \naturalsInitialSection{\txNum}, \indexAnalogTransmissions \in \naturalsInitialSection{\numAnalogTransmissions}$; $\codebookSize_1 \geq \exp(\blocklength\codebookRate_1), \dots, \codebookSize_{\digitalTxNum} \geq \exp(\blocklength\codebookRate_{\digitalTxNum})$ such that:
\begin{enumerate}
    \item \label{item:cor-digital-mac-with-ota-c-analog-powerconstraint} The following slight generalization of \eqref{eq:analog-amplitude-constraint} holds: For all $\txIndex \in \naturalsInitialSection{\analogTxNum}$, we have
    $
    \maximumNorm{
      \preprocHybrid_\txIndex(
        \stateSpaceElement_{\txIndex,1}, \dots, \stateSpaceElement_{\txIndex,\numAnalogTransmissions}
      )
    }
    \leq
    \amplitudeConstraint_\txIndex.
    $
    \item \label{item:cor-digital-mac-with-ota-c-digital-powerconstraint} The digital amplitude constraint \eqref{eq:digital-amplitude-constraint} and power constraint \eqref{eq:digital-power-constraint} hold.
    \item \label{item:cor-digital-mac-with-ota-c-error-digital} $\decodingError$ defined in \eqref{eq:reconstruction-error} tends to $0$ uniformly for all possible realizations of $\stateSpaceElement_{1,1}, \dots, \stateSpaceElement_{1,\numAnalogTransmissions}$, $\dots$, $\stateSpaceElement_{\analogTxNum,1}, \dots, \stateSpaceElement_{\analogTxNum,\numAnalogTransmissions}$ as $\blocklength \rightarrow \infty$.
    \item \label{item:cor-digital-mac-with-ota-c-error-analog} For every $\tail\in(0,\infty)$, $\indexAnalogTransmissions \in \naturalsInitialSection{\numAnalogTransmissions}$, conditioned on every possible realization of $\stateSpaceElement_{1,1}, \dots, \stateSpaceElement_{1,\numAnalogTransmissions}$, $\dots$, $\stateSpaceElement_{\analogTxNum,1}, \dots, \stateSpaceElement_{\analogTxNum,\numAnalogTransmissions}$, $\messageRealization_1, \dots, \messageRealization_{\digitalTxNum}$, we have that $\objectiveFunctionEstimate_\indexAnalogTransmissions$ is an
    $(
      \tail,
      \exp(
        -\generalReal
      )/\sqrt{\pi\generalReal}
    )$-approximation of $\objectiveFunction^{(\indexAnalogTransmissions)}(\stateSpaceElement_{1,\indexAnalogTransmissions}, \dots, \stateSpaceElement_{\txNum,\indexAnalogTransmissions})$,
    where
    \begin{equation}
    \label{eq:nom-tail-short}
    \generalReal
    :=
    \frac{2\left({\nomInc{\indexAnalogTransmissions}}^{-1}(\tail)\right)^2}
        {\nomMaxSpread{\objectiveFunction^{(\indexAnalogTransmissions)}}^2}
    \cdot
    \frac{\analogAmplitudeConstraint^2}{\noisestd^2}
    \cdot
    \blocklength_\indexAnalogTransmissions
    .
    \end{equation}
\end{enumerate}
\end{cor}
\begin{proof}
We define, for $\txIndex\in\naturalsInitialSection{\analogTxNum}$ and $\indexAnalogTransmissions\in\naturalsInitialSection{\numAnalogTransmissions}$,
\begin{equation}
\label{eq:nom-preproc}
\stateSpaceElement_{\txIndex,\indexAnalogTransmissions}'
:=
2
\frac{
  \nomPre{\txIndex}{\indexAnalogTransmissions}(\stateSpaceElement_{\txIndex,\indexAnalogTransmissions}) - \nomMin{\txIndex}{\indexAnalogTransmissions}
}{
  \nomMaxSpread{\objectiveFunction^{(\indexAnalogTransmissions)}}
}
-
1.
\end{equation}
Then clearly, $\stateSpaceElement_{\txIndex,\indexAnalogTransmissions}' \in [-1,1]$ for any choice $\stateSpaceElement_{\txIndex,\indexAnalogTransmissions} \in \stateSpace_{\txIndex,\indexAnalogTransmissions}$. We invoke Lemma~\ref{lemma:digital-mac-with-ota-c} to obtain, for some $\codebookSize_1 \geq \exp(\blocklength\codebookRate_1), \dots, \codebookSize_{\digitalTxNum} \geq \exp(\blocklength\codebookRate)$ and $\objectiveFunctionDomain_{\txIndex,\indexAnalogTransmissions} \in [-1,1]$ for all $\txIndex \in \naturalsInitialSection{\analogTxNum}, \indexAnalogTransmissions \in \naturalsInitialSection{\numAnalogTransmissions}$, a \gls{socc} scheme for the channel $\channelAWGN$ and block length $2\blocklength$. This scheme consists of pre-processors $\preprocHybrid_1', \dots, \preprocHybrid_\txNum'$ and a post-processor $\postprocHybrid'$. The pre-processors obey the analog amplitude constraint $\analogAmplitudeConstraint/\sqrt{2}$, the digital amplitude constraints $\absolutevalue{\fading_{\analogTxNum+1}}\amplitudeConstraint_{\analogTxNum+1}/\sqrt{2}, \dots, \absolutevalue{\fading_{\txNum}}\amplitudeConstraint_{\txNum}/\sqrt{2}$, and the digital power constraints $\absolutevalue{\fading_{\analogTxNum+1}}^2 \powerconstraint_{\analogTxNum+1}, \dots, \absolutevalue{\fading_{\txNum}}^2 \powerconstraint_{\txNum}$.

Let
\begin{equation}
\label{eq:cor-original-preproc}
\inputRV_\txIndex'^\blocklength
:=
\begin{cases}
\preprocHybrid_\txIndex'(\stateSpaceElement_{\txIndex,1}', \dots, \stateSpaceElement_{\txIndex,\numAnalogTransmissions}'), &\txIndex \in \naturalsInitialSection{\analogTxNum},\\
\preprocHybrid_\txIndex'(\messageRealization), &\txIndex \in \{\analogTxNum+1, \dots, \txNum\}.
\end{cases}
\end{equation}
Then define $\preprocHybrid_\txIndex(\stateSpaceElement_{\txIndex,1}, \dots, \stateSpaceElement_{\txIndex,\numAnalogTransmissions}) := \inputRV_\txIndex^\blocklength$ for $\txIndex \in \naturalsInitialSection{\analogTxNum}$ and $\preprocHybrid_\txIndex(\messageRealization) := \inputRV_\txIndex^\blocklength$ for $\txIndex \in \{\analogTxNum+1, \dots, \txNum\}$, where
\begin{equation}
\label{eq:real-complex}
\inputRV_{\txIndex,\blIndex}
:=
\frac{1}{\fading_\txIndex}\left(
  \inputRV_{\txIndex,2\blIndex-1}'
  +
  \imaginaryUnit \cdot
  \inputRV_{\txIndex,2\blIndex}'
\right),
\end{equation}
with the imaginary unit denoted as $\imaginaryUnit$.

In order to argue that item \ref{item:cor-digital-mac-with-ota-c-analog-powerconstraint} holds, we note that for $\txIndex \in \naturalsInitialSection{\analogTxNum}$,
\begin{align*}
\maximumNorm{
  \preprocHybrid_\txIndex(
    \stateSpaceElement_{\txIndex,1}, \dots, \stateSpaceElement_{\txIndex,\numAnalogTransmissions}
  )
}
&=
\max_{\blIndex \in \naturalsInitialSection{\blocklength}} \absolutevalue{\inputRV_{\txIndex,\blIndex}}
\\
\overset{\eqref{eq:real-complex}}&{=}
\max_{\blIndex \in \naturalsInitialSection{\blocklength}}
  \frac{\sqrt{(\inputRV_{\txIndex,2\blIndex-1}')^2 + (\inputRV_{\txIndex,2\blIndex}')^2}}
       {\absolutevalue{\fading_\txIndex}}
\\
&\leq
\frac{\sqrt{2}\max_{\blIndex \in \naturalsInitialSection{2\blocklength}} \absolutevalue{\inputRV_{\txIndex,\blIndex}'}}
     {\absolutevalue{\fading_\txIndex}}
\\
\overset{\eqref{eq:cor-original-preproc}}&{=}
\frac{\sqrt{2}
        \maximumNorm{
          \preprocHybrid_\txIndex'(
            \stateSpaceElement_{\txIndex,1}', \dots, \stateSpaceElement_{\txIndex,\numAnalogTransmissions}'
          )
        }
     }
     {\absolutevalue{\fading_\txIndex}}
\\
\overset{(a)}&{\leq}
\frac{\analogAmplitudeConstraint}
     {\absolutevalue{\fading_\txIndex}}
\\
\overset{\eqref{eq:cor-analog-amplitude-constraint}}&{\leq}
\amplitudeConstraint_\txIndex,
\end{align*}
where step (a) is by the analog amplitude constraint $\analogAmplitudeConstraint/\sqrt{2}$ which $\preprocHybrid_1', \dots, \preprocHybrid_{\analogTxNum}'$ satisfy via Lemma~\ref{lemma:digital-mac-with-ota-c}-\ref{item:digital-mac-with-ota-c-analog-powerconstraint}.

For the amplitude constraints in item \ref{item:cor-digital-mac-with-ota-c-digital-powerconstraint}, we argue in an analog way that for $\txIndex \in \{\analogTxNum+1, \txNum\}$,
\[
\maximumNorm{
  \preprocHybrid_\txIndex(
    \messageRealization
  )
}
\overset{(a)}{\leq}
\frac{\sqrt{2}
        \maximumNorm{
          \preprocHybrid_\txIndex'(
            \messageRealization'
          )
        }
     }
     {\absolutevalue{\fading_\txIndex}}
\overset{(b)}{\leq}
\amplitudeConstraint_\txIndex,
\]
where step (a) is the same as the corresponding steps in the calculation for item \ref{item:cor-digital-mac-with-ota-c-analog-powerconstraint} and (b) is via Lemma~\ref{lemma:digital-mac-with-ota-c}-\ref{item:digital-mac-with-ota-c-digital-powerconstraint} because each $\preprocHybrid_\txIndex'$ satisfies the amplitude constraint $\absolutevalue{\fading_\txIndex}\amplitudeConstraint_\txIndex/\sqrt{2}$.

For the power constraints in item \ref{item:cor-digital-mac-with-ota-c-digital-powerconstraint}, we calculate
\[
\euclidNorm{
  \preprocHybrid_\txIndex(
    \messageRealization
  )
}^2
=
\sum_{\blIndex=1}^\blocklength
  \absolutevalue{\inputRV_{\txIndex,\blIndex}}^2
\overset{\eqref{eq:real-complex}}{=}
\frac{1}{\absolutevalue{\fading_\txIndex}^2}
\sum_{\blIndex=1}^{2\blocklength}
  (\inputRV_{\txIndex,\blIndex}')^2
\overset{(a)}{\leq}
\powerconstraint_\txIndex,
\]
where step (a) is via Lemma~\ref{lemma:digital-mac-with-ota-c}-\ref{item:digital-mac-with-ota-c-digital-powerconstraint} because each $\preprocHybrid_\txIndex'$ satisfies the power constraint $\absolutevalue{\fading_\txIndex}^2\powerconstraint_\txIndex$.

At the receiver, we first determine $(\hat{\messageRV}_1, \dots, \hat{\messageRV}_{\digitalTxNum}, \objectiveFunctionEstimate_{(1)}', \dots, \objectiveFunctionEstimate_{(\numAnalogTransmissions)}') := \postprocHybrid'(\outputRV^\blocklength)$. The post-processor output is then $\postprocHybrid(\outputRV^\blocklength) := (\hat{\messageRV}_1, \dots, \hat{\messageRV}_{\digitalTxNum}, \objectiveFunctionEstimate_{(1)}, \dots, \objectiveFunctionEstimate_{(\numAnalogTransmissions)})$, where
\begin{equation}
\label{eq:nom-postproc}
\objectiveFunctionEstimate_{(\indexAnalogTransmissions)}
:=
\nomPost{\indexAnalogTransmissions}\left(
  \left(
    \objectiveFunctionEstimate_{(\indexAnalogTransmissions)}'
    +
    \analogTxNum
  \right)
  \cdot
  \frac{\nomMaxSpread{\objectiveFunction^{(\indexAnalogTransmissions)}}}
       {2}
  +
  \sum_{\txIndex=1}^{\analogTxNum}
    \nomMin{\txIndex}{\indexAnalogTransmissions}
\right).
\end{equation}

Item \ref{item:cor-digital-mac-with-ota-c-error-digital} follows from Lemma~\ref{lemma:digital-mac-with-ota-c}-\ref{item:digital-mac-with-ota-c-error-digital} since due to the channel inversion in \eqref{eq:real-complex}, each use of the channel $\channel_\mathrm{fading}$ corresponds to two uses of the channel $\channelAWGN$. For item \ref{item:cor-digital-mac-with-ota-c-error-analog}, we first note that by Lemma~\ref{lemma:digital-mac-with-ota-c}-\ref{item:digital-mac-with-ota-c-error-analog}, each $\objectiveFunctionEstimate_{(\indexAnalogTransmissions)}'$ is a Gaussian approximation of $\stateSpaceElement_{1,\indexAnalogTransmissions}' + \dots + \stateSpaceElement_{\txNum,\indexAnalogTransmissions}'$ with variance $\noisestd^2/\blocklength_\indexAnalogTransmissions\analogAmplitudeConstraint^2$ (note that we substitute $\blocklength_\indexAnalogTransmissions' := 2\blocklength_\indexAnalogTransmissions$ and amplitude constraint $\analogAmplitudeConstraint' := \analogAmplitudeConstraint/\sqrt{2}$ into the expression given in Lemma~\ref{lemma:digital-mac-with-ota-c}, with the resulting factors of $2$ canceling in the expression). In other words,  we can write, for $\indexAnalogTransmissions\in\naturalsInitialSection{\numAnalogTransmissions}$,
\begin{equation*}
\objectiveFunctionEstimate_{(\indexAnalogTransmissions)}'
=
\stateSpaceElement_{1,\indexAnalogTransmissions}'
+
\dots
+
\stateSpaceElement_{\analogTxNum,\indexAnalogTransmissions}'
+
\effectiveNoise^{(\indexAnalogTransmissions)},
\end{equation*}
where $\effectiveNoise^{(\indexAnalogTransmissions)}$ is distributed as $\normaldistribution{0}{\noisestd^2/\blocklength_\indexAnalogTransmissions \analogAmplitudeConstraint^2}$. So we can calculate
\begin{align*}
\objectiveFunctionEstimate_{(\indexAnalogTransmissions)}
\overset{(\ref{eq:nom-postproc})}&{=}
\nomPost{\indexAnalogTransmissions}\left(
  \left(
    \sum_{\txIndex=1}^{\analogTxNum}
      \stateSpaceElement_{\txIndex,\indexAnalogTransmissions}'
    +
    \effectiveNoise^{(\indexAnalogTransmissions)}
    +
    \analogTxNum
  \right)
  \cdot
  \frac{\nomMaxSpread{\objectiveFunction^{(\indexAnalogTransmissions)}}}
       {2}
  +
  \sum_{\txIndex=1}^{\analogTxNum}
    \nomMin{\txIndex}{\indexAnalogTransmissions}
\right)
\\
\overset{(\ref{eq:nom-preproc})}&{=}
\nomPost{\indexAnalogTransmissions}\left(
  \left(
    2
    \sum_{\txIndex=1}^{\analogTxNum}
      \frac{
        \nomPre{\txIndex}{\indexAnalogTransmissions}(\stateSpaceElement_{\txIndex,\indexAnalogTransmissions}) - \nomMin{\txIndex}{\indexAnalogTransmissions}
      }{
        \nomMaxSpread{\objectiveFunction^{(\indexAnalogTransmissions)}}
      }
    +
    \effectiveNoise^{(\indexAnalogTransmissions)}
  \right)
  \cdot
  \frac{\nomMaxSpread{\objectiveFunction^{(\indexAnalogTransmissions)}}}
       {2}
  +
  \sum_{\txIndex=1}^{\analogTxNum}
    \nomMin{\txIndex}{\indexAnalogTransmissions}
\right)
\\
&=
\nomPost{\indexAnalogTransmissions}\left(
  \sum_{\txIndex=1}^{\analogTxNum}
    \nomPre{\txIndex}{\indexAnalogTransmissions}(\stateSpaceElement_{\txIndex,\indexAnalogTransmissions})
  +
  \effectiveNoise^{(\indexAnalogTransmissions)}
  \cdot
  \frac{\nomMaxSpread{\objectiveFunction^{(\indexAnalogTransmissions)}}}
       {2}
\right).
\end{align*}
This yields
\begin{align*}
\Probability\left(
  \absolutevalue{
    \objectiveFunction^{(\indexAnalogTransmissions)}(\stateSpaceElement_{1,\indexAnalogTransmissions}, \dots, \stateSpaceElement_{\analogTxNum,\indexAnalogTransmissions})
    -
    \objectiveFunctionEstimate_{(\indexAnalogTransmissions)}
  }
  \geq
  \tail
\right)
&=
\Probability\left(
  \absolutevalue{
    \nomPost{\indexAnalogTransmissions}\left(
      \sum_{\txIndex=1}^{\analogTxNum}
        \nomPre{\txIndex}{\indexAnalogTransmissions}(\stateSpaceElement_{\txIndex,\indexAnalogTransmissions})
    \right)
    -
    \nomPost{\indexAnalogTransmissions}\left(
      \sum_{\txIndex=1}^{\analogTxNum}
        \nomPre{\txIndex}{\indexAnalogTransmissions}(\stateSpaceElement_{\txIndex,\indexAnalogTransmissions})
      +
      \effectiveNoise^{(\indexAnalogTransmissions)}
      \cdot
      \frac{\nomMaxSpread{\objectiveFunction^{(\indexAnalogTransmissions)}}}
           {2}
    \right)
  }
  \geq
  \tail
\right)
\\
\overset{(\ref{eq:monotone-domination})}&{\leq}
\Probability\left(
  \nomInc{\indexAnalogTransmissions}\left(
    \absolutevalue{
      \effectiveNoise^{(\indexAnalogTransmissions)}
      \cdot
      \frac{\nomMaxSpread{\objectiveFunction^{(\indexAnalogTransmissions)}}}
           {2}
    }
  \right)
  \geq
  \tail
\right)
\\
&=
\Probability\left(
  \absolutevalue{
    \frac{\effectiveNoise^{(\indexAnalogTransmissions)}\analogAmplitudeConstraint\sqrt{\blocklength_\indexAnalogTransmissions}}
         {\noisestd}
  }
  \geq
  \frac{2{\nomInc{\indexAnalogTransmissions}}^{-1}(\tail)\analogAmplitudeConstraint\sqrt{\blocklength_\indexAnalogTransmissions}}
       {\nomMaxSpread{\objectiveFunction^{(\indexAnalogTransmissions)}} \noisestd}
\right)
\\
\overset{(\ref{eq:nom-tail-short})}&{=}
\Probability\left(
  \absolutevalue{
    \frac{\effectiveNoise^{(\indexAnalogTransmissions)}\analogAmplitudeConstraint\sqrt{\blocklength_\indexAnalogTransmissions}}
         {\noisestd}
  }
  \geq
  \sqrt{2\generalReal}
\right)
.
\end{align*}
Since $\effectiveNoise^{(\indexAnalogTransmissions)}\analogAmplitudeConstraint\sqrt{\blocklength_\indexAnalogTransmissions}/\noisestd$ is standard Gaussian, we can use the tail bound~\cite[Proposition 2.1.2]{vershynin2018high} to finish the proof of the corollary.
\end{proof}

\section{Simulation Results}
\label{sec:simulations}
\begin{figure}
\centering
\begin{subfigure}[t]{.45\textwidth}
\begin{tikzpicture}
\begin{semilogyaxis}[
  xlabel={noise power in $\unit{\dB}$},
  ylabel={digital \acrshort{ber}},
  ymin=1e-6,
  ymax=1,
  legend pos=south east
]
  \addplot[black,solid,mark=triangle*,mark options=solid] table[x=noise_power_dB, y=digital_ber, col sep=comma] {pgf_data_middleton.csv};
  \addplot[black,dashed,mark=diamond*,mark options=solid] table[x=noise_power_dB, y=digital_ber, col sep=comma] {pgf_data_middleton_moderate_short_blocks.csv};
  \addplot[black,dashed,mark=*,mark options=solid] table[x=noise_power_dB, y=digital_ber, col sep=comma] {pgf_data_short_blocks.csv};
  \addplot[black,solid,mark=diamond*,mark options=solid] table[x=noise_power_dB, y=digital_ber, col sep=comma] {pgf_data_middleton_moderate.csv};
  \addplot[black,solid,mark=*,mark options=solid] table[x=noise_power_dB, y=digital_ber, col sep=comma] {pgf_data.csv};
  \addplot[black,densely dotted] coordinates {(-9.999984753320783,1e-8) (-9.999984753320783,1)};
  \legend{{M $0.1$ $2000$}, {M $1.5$ $324$}, {G $324$}, {M $1.5$ $2000$}, {G $2000$}, achievable};
\end{semilogyaxis}
\end{tikzpicture}
\caption{Digital \acrshort{ber} for varying levels of channel noise. ``G'' stands for Gaussian noise, ``M $1.5$'' for Middleton Class A noise with impulsive index and Gaussian-to-impulsive power ratio of $1.5$, and ``M 0.1''  for Middleton Class A noise with impulsive index and Gaussian-to-impulsive power ratio of $0.1$. The last number indicates how many digital bits were encoded per transmission, which at the parameters of the simulated scenario corresponds to a block length of $740$ complex channel uses at $2000$ bits and $120$ complex channel uses at $324$ bits. The vertical dotted line indicates the noise level below which the scheme is asymptotically achievable for block lengths tending to $\infty$ according to Theorem~\ref{theorem:capacity}.}
\label{fig:simulation-ber}
\end{subfigure}
\hspace{.02\textwidth}
\begin{subfigure}[t]{.45\textwidth}
\begin{tikzpicture}
\begin{semilogyaxis}[
  xlabel={noise power in $\unit{\dB}$},
  ylabel={analog \acrshort{mse}},
  legend style={at={(0.5,0.5)}}
]
  \addplot[black,solid,mark=*,mark options=solid] table[x=noise_power_dB, y=analog_mse, col sep=comma] {pgf_data_middleton.csv};
  \addplot[black,dashed,mark=square*,mark options=solid] table[x=noise_power_dB, y=analog_mse, col sep=comma] {pgf_data_middleton_norm.csv};
  \legend{sum,norm}
\end{semilogyaxis}
\end{tikzpicture}
\caption{Analog \gls{mse} for varying levels of channel noise. It should be kept in mind that the analog transmitters have a very low transmission power of $\qty{-10}{\dB}$. The analog \glspl{mse} were determined for the same parameter sets as in Fig.~\ref{fig:simulation-ber} and confirmed to be equal for all evaluated block lengths and noise distributions. Therefore, only one line is shown per \gls{ota} computed function.}
\label{fig:simulation-mse}
\end{subfigure}
\caption{Digital and analog error performance for the simulated \gls{socc} scenario. Every data point shown is averaged over the transmission of approximately $2 \cdot 10^9$ digital bits.}
\label{fig:simulation-errors}
\end{figure}

\begin{figure}
\centering
\begin{tikzpicture}
\begin{axis} [
  yticklabel style={/pgf/number format/.cd,fixed,precision=2},
  xlabel={maximum amplitude ratio},
  ylabel={relative frequency},
  ymin=0,
  ymax=.22,
]
  \addplot[ybar interval,mark=no,fill=black] table[x=bin_left, y=density, col sep=comma] {pgf_data_histogram.csv};
\end{axis}
\end{tikzpicture}
\caption{Histogram showing the ratio between the maximal amplitude of the encoded and modulated digital code words before and after Lemma~\ref{lemma:zerosum} processing. This includes all code words across all parameter sets contained in Fig.~\ref{fig:simulation-errors}. The histogram shows that the maximal amplitude of none of the code words is increased by a factor of more than $1.6$ in the simulated scenario. Note that Lemma~\ref{lemma:zerosum} guarantees that the processing does not affect the average power of the code words.}
\label{fig:simulation-amplitude}
\end{figure}

The \gls{socc} scheme proposed in Lemma~\ref{lemma:digital-mac-with-ota-c} uses a random coding argument to prove the reliability of the digital transmissions. This means that reliable decoding is possible only asymptotically for block lengths tending to infinity. A straightforward implementation of this procedure (as is normal for random codes) would entail exponential computational complexity at codebook generation, encoding, and decoding. Also, an exponential amount of memory that would be needed to store the random codebook. However, the technique we use in the proof of Lemma~\ref{lemma:digital-mac-with-ota-c} consists of a constructive modification by additional pre- and post-processing as per Lemma~\ref{lemma:zerosum} which consists of relatively low-dimensional linear maps. Hence, these operations do not present a high computational burden in addition to the digital encoding and decoding. In the proof of Lemma~\ref{lemma:digital-mac-with-ota-c}, this modification is applied to a standard random codebook to prove the achievability bound for \gls{socc}, but it is agnostic to the digital coding technique used and can also be applied to computationally feasible codes.

To illustrate this point and explore the practicality of \gls{socc} schemes that are generated in this way, we have implemented such a \gls{socc} scheme based on an \gls{ldpc} coding and modulation scheme that is used in \acrshort{5g}. The Python code used for the simulations is available as an electronic supplement with this paper. In our example, we use a system model with $\analogTxNum=10$ analog transmitters with a transmission power of $\qty{-10}{\dB}$, $\digitalTxNum=1$ digital transmitter\footnote{We choose the case of a single transmitter for simplicity, but the \gls{socc} scheme can be used with any practical multiple-access scheme in the same way, as detailed in the proof of Lemma~\ref{lemma:zerosum}.} with a transmission power of $\qty{0}{\dB}$, and a varying noise power. For digital encoding, we use the \texttt{py3gpp} toolbox to \gls{ldpc}-encode the digital message at a rate of $769/1024$ and modulate the resulting code word as complex channel symbols with a 16QAM scheme, yielding an overall code rate of about $2.08$ nats per complex channel symbol. The input values held at the analog transmitters are drawn in such a way that the sums of all $10$ values at each computation are uniformly distributed in $[-10,10]$, and the analog rate is set at $0.1$. This corresponds to one computation per $10$ real channel uses or to one computation per $5$ complex channel uses. The scheme modifications proposed in Lemma~\ref{lemma:zerosum} are designed for real signals and channels. Therefore, the signals are mapped from their complex representations to real vectors and converted back to complex signals afterwards. This is done in the same way as in the proof of Corollary~\ref{cor:nom}. Disregarding amplitude constraints,\footnote{This means in particular that we cannot expect any scheme that is constructed as proposed in Lemma~\ref{lemma:zerosum} to be reliable at noise powers greater than $\qty{-10.0}{\dB}$.} the scheme with these parameters is achievable for noise powers below $\qty{-10.0}{\dB}$ according to Theorem~\ref{theorem:capacity}. We simulate the \gls{socc} scheme both for the computation of sums and for $2$-norms (which is a nonlinear function).

In Fig.~\ref{fig:simulation-ber}, we show the digital \gls{ber} for two different block lengths and various distributions of the additive channel noise. Besides Gaussian noise, we also simulate Middleton Class A noise, which has been found in real-world measurements to fairly accurately model various different common types of non-Gaussian interference in wireless communications. Examples of this are interfering radio communications, electromagnetic emissions from power lines, and interference from cosmic radiation~\cite[Table 2.1]{middleton1993elements}. We compare the results for Gaussian distributions with a Middleton Class A distribution with an impulsive index and Gaussian-to-impulsive power ratio of $1.5$, representing a moderate departure from the Gaussianity assumption, and with a Middleton Class A distribution where these two non-Gaussianity parameters have been set to $0.1$, representing a more strongly non-Gaussian noise.\footnote{The two non-Gaussianity parameters of the Middleton Class A distribution range in $(0,\infty)$ and the Middleton Class A noise approaches a Gaussian distribution whenever at least one of these parameters tends to $\infty$.} As can be seen in the plots, the \gls{socc} scheme can be practically implemented at moderate block lengths with a performance that is only about $\qty{1}{\dB}$ away from the theoretical achievability bound, and the performance degrades only moderately when we use extremely short block lengths and/or the channel noise departs from the assumption of Gaussianity. The error curves are the same regardless of whether sums or $2$-norms are computed in the analog part of the system (as can be expected from the theoretical results).

In Fig.~\ref{fig:simulation-mse}, we show the \gls{mse} of the two \gls{ota} computed functions. The error values are the same for all parameter sets evaluated in Fig.~\ref{fig:simulation-ber}, so we show only one curve for each \gls{ota} computed function. It should be kept in mind that in our example scenario, we have analog transmitters of a very low transmission power. Hence, it can be expected that if the analog transmission power increases, the error will reduce in the same manner as for a corresponding decrease in noise power.

In Fig.~\ref{fig:simulation-amplitude}, we show how the additional processing affects the maximum amplitude of the digital code words. Compared to standard \gls{ldpc} coding and 16QAM modulation, our proposed \gls{socc} scheme increases the maximal amplitude of code words by not more than a factor of approximately $1.6$, which is much better performance in this regard than the theoretical upper bound of $\sqrt{2}/(\sqrt{2}-1)\approx3.42$ according to Lemma~\ref{lemma:zerosum}-\ref{item:zerosum-peakpower}, indicating that our analysis may be a little bit too pessimistic. We remark, however, that it can be seen in the proof of Lemma~\ref{lemma:orthogonal} that this factor increases as the analog computation rate $\analograte$ approaches $0$ and that this factor is lower whenever $1/\analograte$ is close to a power of $2$. In our example, we have $\analograte^{-1} = 10$, representing neither the best nor the worst possible case. For any given value of $\analograte^{-1}$, $\maximumNorm{\planeMap_{\analograte^{-1}}}$ (defined in the proof of Lemma~\ref{lemma:orthogonal}) can easily be evaluated numerically. In the simulated scenario, we have $\maximumNorm{\planeMap_{10}} \approx 1.75$, indicating that this provides a reasonably sharp upper bound for the factor of maximum amplitude increase.

We conclude from our numerical evaluations that the \gls{socc} scheme we propose in this paper is very promising in terms of its practical applicability. Still, we leave many important questions in this regard open for further research, such as the effects of fading with imperfect channel state information and the presence of interfering wireless communications, as well as the importance of perfect time synchronization between the digital and analog transmitters.

\section{Concluding Remarks}
\label{sec:analograte-mse-relationship}
The communication scheme we propose to show achievability in this paper only works if the relationship \eqref{eq:analograte-mse-relationship} between $\analograte$ and $\converseAnalogErrorVariance$ is satisfied. This means that if we want to increase the analog rate $\analograte$ of the scheme, we need to allow for a proportional increase in the error $\converseAnalogErrorVariance$ of the analog \gls{otac} function values. At the same time, the increase in analog rate by \eqref{eq:inner-bound} means that the digital rates will typically decrease as well. While it is a property we can expect from any reasonable tradeoff that if we want to increase one performance parameter (here $\analograte$), other performance parameters will necessarily have to deteriorate, \eqref{eq:analograte-mse-relationship} represents a somewhat more restrictive constraint than is typical: Because it mandates a direct relationship between $\analograte$ and $\converseAnalogErrorVariance$, it is not possible to, e.g., sacrifice some additional digital rate in order to be able to decrease $\converseAnalogErrorVariance$ while keeping $\analograte$ constant. From this point of view, \eqref{eq:analograte-mse-relationship} represents a restriction on system design in addition to what one would ordinarily expect. In the remainder of this section, we study the impact of this restriction and possible mitigation. In Section~\ref{sec:analograte-mse-relationship-application}, we take a closer look at an example communication system which uses \gls{otac} for a federated learning application and discuss how this restriction impacts the feasibility of using \gls{otac} in this case. In Section~\ref{sec:analograte-mse-relationship-extension}, we discuss ideas of how this restriction could be lifted or at least mitigated in further research.

\subsection{Application to Federated Learning}
\label{sec:analograte-mse-relationship-application}
Most envisioned applications of analog \gls{otac} are designed to deal with communication noise of different strengths, as long as it is possible to a limited extent to control this noise. It is not commonly required to design the noise strength to exact specifications. Examples for communication systems in which this is the case include proposed systems for federated learning~\cite{ang2020robust,wei2022federated} as well as widespread applications in communication systems that deal with problems in control, such as average consensus in networks~\cite{rajagopal2010network} and distributed convex optimization~\cite{agrawal2024distributed}.

In the following, we take a closer look at the federated learning example. The federated learning protocols proposed in~\cite{ang2020robust,wei2022federated} lead to convergence under the following conditions:
\begin{itemize}
  \item For the \emph{expectation-based model},~\cite[Proposition 2 and Remark 3]{ang2020robust} state that the overall error converges to a local optimum with rate $\landauO(1/\flRounds)$ (where $\flRounds$ is the number of communication rounds) as long as the variance of the effective noise adheres to a (constant) upper bound stated in ~\cite[Remark 2]{ang2020robust}.
  \item For the \emph{model transmission scheme},~\cite[Theorem 1]{wei2022federated} states that the overall error converges to a local optimum with rate $\landauO(1/\flRounds)$ as long as the variance of the effective noise scales as $\landauO(1/\flRounds^2)$.
  \item For the \emph{model differential transmission scheme},~\cite[Theorem 3]{wei2022federated} states that the overall error converges to a local optimum with rate $\landauO(1/\flRounds)$ as long as the variance of the effective noise scales as $\landauO(1)$ (i.e., it is allowed to remain constant and does not need to converge to $0$).
\end{itemize}
All of these results have in common that they require the effective noise variance only to satisfy some bound (which for one of the schemes becomes lower for later rounds of the scheme). They can, however, accommodate a variety of different noise powers without it directly affecting the convergence rate. They also have in common that, while the authors do provide more detailed expressions, the main takeaway emphasized in the results is the convergence rate in $\landauO$ notation. In light of our results in Theorem~\ref{theorem:capacity},  it is always possible to integrate such schemes into a wireless network with concurrent digital communications using \gls{socc}. Specifically, as long as the corresponding value for $\converseAnalogErrorVariance$ according to \eqref{eq:analograte-mse-relationship} is within the fairly large region of effective noise powers in which convergence is guaranteed, the federated learning scheme can simply be run until the convergence needed by the application is achieved. This means that there is also significant design freedom in choosing the analog rate based on the availability of channel resources that are not being used for digital communication. Moreover, we point out that the authors of~\cite{wei2022federated} propose adjusting the variance of the effective noise with power allocation at the analog transmitters over the subsequent communication rounds, which in the case of our \gls{socc} scheme would vary the analog error without impacting on the communication rates at all. This is therefore an approach that would work in a similar way as it does in~\cite{wei2022federated} when \gls{socc} is used to carry out \gls{otac}.

\subsection{Extension in Further Research}
\label{sec:analograte-mse-relationship-extension}
In the previous subsection, we have shown for an example that implementing \gls{socc} in a communication system with \gls{otac} will not in every case hinge on an ability to trade off analog error for performance parameters of the system other than the analog rate. In this subsection, we discuss some possible directions for future research.

To a certain extent, the relationship \eqref{eq:analograte-mse-relationship} is inherent to the proposed communication scheme and cannot easily be fully lifted. However, we do have some ideas on how it could be mitigated as a part of future research. The basic idea is that the \gls{socc} scheme we propose is necessarily simplified compared to what would eventually be used in a real-world communication system. This simplification is necessary for the scheme to be amenable to theoretical analysis. It is very likely that research which intends to bring \gls{socc} closer to real-world applications needs to make a number of modifications and extensions. We expect that they will be necessary to address challenges and complications faced in specific implementation scenarios. A mitigation of the strictness of \eqref{eq:analograte-mse-relationship} is something that could naturally be achieved as part of an extension that also addresses a different challenge of real-world applications.

The idea we intend to outline here is specific to a scenario such as in Section~\ref{sec:fmon} where the analog signals are subject to fading and the analog transmitters may be subject to different amplitude constraints. In this case, our scheme adjusts the effective transmit amplitude of all analog transmitters as shown in \eqref{eq:cor-analog-amplitude-constraint} to match that of the weakest user in the system. The cost of this in terms of performance could be massive if the product of amplitude constraint and channel strength $\absolutevalue{\fading_{\txIndex}}\amplitudeConstraint_{\txIndex}$ vary a lot among the analog users. One possible extension to our \gls{socc} scheme would be to select a number of users with the lowest values of $\absolutevalue{\fading_{\txIndex}}\amplitudeConstraint_{\txIndex}$ and have these users transmit a source and channel encoded version of their analog values during the digital part of the communication. The receiver would adapt to this by decoding the digitally transmitted values and adjust the computed sum value in post-processing. This would have the following effects on the performance parameters of the communication system:
\begin{itemize}
  \item The remaining analog users could transmit at a higher power and make fuller use of their amplitude allowance, decreasing the analog error $\converseAnalogErrorVariance$ (notice that \eqref{eq:analograte-mse-relationship} contains not just $\analograte$ but also $\analogAmplitudeConstraint$ which would increase according to \eqref{eq:cor-analog-amplitude-constraint}).
  \item The analog rate $\analograte$ used in the communication system would not change.
  \item Since some of the analog users now participate in the digital part of the communication system, the sum rate available for the remaining digital communications would decrease.
\end{itemize}
This means we have achieved the tradeoff which according to our example in the opening paragraph of Section~\ref{sec:analograte-mse-relationship} is not possible to achieve in the unmodified version of the \gls{socc} scheme.

We conclude this section with the remark that this idea is just a simple example of how the strictness of \eqref{eq:analograte-mse-relationship} can be mitigated while addressing different practical limitations of the \gls{socc} scheme. Another example is to use the scheduling scheme proposed in~\cite{nazer2024computation} to mitigate the requirement of channel state information at the transmitters that is needed for the transmitter pre-processing operation as seen in \eqref{eq:real-complex}. The authors of~\cite{nazer2024computation} propose to relax this requirement by scheduling groups of analog transmissions of users at a time that have similar channel coefficients. It is then only necessary to feed these selection patterns back to the transmitters which presumably have a significantly lower rate than the full channel state information would have. This can be seen as a more refined version of the extension described in the preceding paragraphs. This extension also aims at making the users that remain analog transmitters more similar in terms of channel gains. Similarly, this more refined version would allow to modify the tradeoff between analog rates, digital rates, and analog errors. For instance, it would be possible to allocate different analog rates to different groups of users. This would result in an average analog rate that can be prescribed by a design choice. If a reduction in error at the cost of digital sum rate is intended, it is also possible to have some of the smaller groups of analog transmitters participate in the digital part of the communication scheme.

\appendix
\subsection{Table of Symbols}
\label{appendix:symbols}

\renewcommand{\glossarysection}[2][]{}
\glsfindwidesttoplevelname
\renewcommand{\glstreenamefmt}{}
\printnoidxglossary[sort=use, type=main, style=alttree]

\subsection{Proof of Lemma~\ref{lemma:orthogonal}}
\label{appendix:lemma-orthogonal}
In this proof, we use the following conventions that are common in linear algebra: We identify linear maps between Euclidean spaces with matrices, use $\innerproduct{\cdot}{\cdot}$ to denote the standard scalar product, and write the all-ones vector of length $\generalNatural$ as $\onevector_\generalNatural$. For the purposes of matrix-vector and vector-vector multiplication, we consider vectors to be column vectors unless transposition is indicated, and we sometimes identify tuples in information-theoretic notation such as $\generalReal^\generalNatural$ with column vectors. We use $\euclidNorm{\cdot}$ and $\maximumNorm{\cdot}$ to denote the operator norms induced by the standard Euclidean norm and maximum norm, respectively. Outside of this proof, however (including the lemma statement), we stick closely to information-theoretic notational conventions and treat $\planeMap_\generalNatural$ and $\planeMapInv_\generalNatural$ as maps between sets of tuples of real numbers, and we do not identify them with matrices or distinguish between row and column vectors.

For $\generalNatural = 1$, we use the convention that $\reals^0 := \{0\}$ denotes a zero-dimensional vector space. We note that there is only one linear map $\reals \rightarrow \reals^0$ and $\reals^0 \rightarrow \reals$ and that these maps satisfy the lemma statement. Therefore, we assume $\generalNatural \geq 2$ from now on.
For $\generalNatural = 2$, we define
\begin{equation}
\label{eq:planemap-def-base}
\planeMap_2
:=
\begin{pmatrix}
2^{-\frac{1}{2}} \\
- 2^{-\frac{1}{2}}
\end{pmatrix}.
\end{equation}

For $\generalNatural > 2$, we define $\planeMap_\generalNatural$ recursively. For even $\generalNatural$, we let
\begin{equation}
\label{eq:planemap-def-even}
\planeMap_\generalNatural
:=
\begin{pmatrix}
 &                                       & &                                       & & 1/\sqrt{\generalNatural} \\
 & \planeMap_{\frac{\generalNatural}{2}} & & 0                                     & & \vdots \\
 &                                       & &                                       & & 1/\sqrt{\generalNatural} \\
 &                                       & &                                       & & -1/\sqrt{\generalNatural} \\
 & 0                                     & & \planeMap_{\frac{\generalNatural}{2}} & & \vdots \\
 &                                       & &                                       & & -1/\sqrt{\generalNatural}
\end{pmatrix}.
\end{equation}
For odd $\generalNatural$, we let
\begin{equation}
\label{eq:planemap-def-odd}
\planeMap_\generalNatural
:=
\begin{pmatrix}
 &                               & & 1/\sqrt{\generalNatural^2- \generalNatural} \\
 & \planeMap_{\generalNatural-1} & & \vdots \\
 &                               & & 1/\sqrt{\generalNatural^2- \generalNatural} \\
 & 0                             & & - \sqrt{(\generalNatural-1)/\generalNatural}
\end{pmatrix}.
\end{equation}
In the following, we use
\[
\planeMap_\generalNatural
=
\begin{pmatrix}
\planeMapElement^{(\generalNatural)}_{1,1}               & \cdots & \planeMapElement^{(\generalNatural)}_{1,\generalNatural-1} \\
\vdots                                                   &        & \vdots \\
\planeMapElement^{(\generalNatural)}_{\generalNatural,1} & \cdots & \planeMapElement^{(\generalNatural)}_{\generalNatural,\generalNatural-1}
\end{pmatrix}
\]
and $\basisTwo^{(\generalNatural)}_{\generalIndex} := \transpose{\left(\planeMapElement^{(\generalNatural)}_{1,\generalIndex}, \dots, \planeMapElement^{(\generalNatural)}_{\generalNatural,\generalIndex}\right)}$ to denote the columns of $\planeMap_\generalNatural$.

We first show by induction that $\basisTwo^{(\generalNatural)}_1, \dots, \basisTwo^{(\generalNatural)}_{\generalNatural-1}$ forms an orthonormal basis of the hyperplane
\[
\plane_\generalNatural := \left\{
  \generalReal^\generalNatural \in \reals^\generalNatural:~
  \innerproduct{\generalReal^\generalNatural}{\onevector_\generalNatural} = 0
\right\}.
\]
It is obvious from (\ref{eq:planemap-def-base}) that $\basisTwo^{(2)}_1$ forms an orthonormal basis of the one-dimensional subspace $\plane_2$ since it has length $1$ and is clearly an element of $\plane_2$. For $\generalNatural > 2$, we observe that due to the definitions (\ref{eq:planemap-def-even}) and (\ref{eq:planemap-def-odd}), $\basisTwo^{(\generalNatural)}_{\generalNatural-1}$ (the last column of $\planeMap_\generalNatural$) clearly has length $1$, it is an element of $\plane_\generalNatural$ since its elements sum to $0$, and it is orthogonal to all the other columns since their non-zero entries sum to $0$ (by induction hypothesis, these sub-vectors are elements of $\plane_{\generalNatural/2}$ respectively $\plane_{\generalNatural-1}$) and are aligned with entries in $\basisTwo^{(\generalNatural)}_{\generalNatural-1}$ which are all equal. Similarly, it is clear from (\ref{eq:planemap-def-even}), (\ref{eq:planemap-def-odd}), and the induction hypothesis that $\basisTwo^{(\generalNatural)}_{1}, \dots, \basisTwo^{(\generalNatural)}_{\generalNatural-2}$ are also of unit length, mutually orthogonal, and elements of $\plane_{\generalNatural}$. Since $\plane_{\generalNatural}$ is of dimension $\generalNatural - 1$, this concludes the proof that $\basisTwo^{(\generalNatural)}_1, \dots, \basisTwo^{(\generalNatural)}_{\generalNatural-1}$ forms an orthonormal basis of $\plane_\generalNatural$.

We write
\[
\planeMap_\generalNatural
=
\sum_{\generalIndex=1}^{\generalNatural-1}
  \basisTwo^{(\generalNatural)}_\generalIndex \transpose{\left(\basisOne^{(\generalNatural-1)}_\generalIndex\right)},
\]
where $\basisOne^{(\generalNatural-1)}_1, \dots, \basisOne^{(\generalNatural-1)}_{\generalNatural-1}$ denotes the standard basis of $\reals^{\generalNatural-1}$. To conclude the proof of item \ref{item:orthogonal-inv}, we define $\planeMapInv_\generalNatural$ as the matrix transpose of $\planeMap_\generalNatural$ and use the orthonormality of the basis systems to argue
\[
\transpose{\planeMap_\generalNatural} \planeMap_\generalNatural
=
\left(
  \sum_{\generalIndex=1}^{\generalNatural-1}
    \basisOne^{(\generalNatural-1)}_\generalIndex \transpose{\left(\basisTwo^{(\generalNatural)}_\generalIndex\right)}
\right)
\left(
  \sum_{\generalIndex=1}^{\generalNatural-1}
    \basisTwo^{(\generalNatural)}_\generalIndex \transpose{\left(\basisOne^{(\generalNatural-1)}_\generalIndex\right)}
\right)
=
\sum_{\generalIndex=1}^{\generalNatural-1}
  \basisOne^{(\generalNatural-1)}_\generalIndex \transpose{\left(\basisOne^{(\generalNatural-1)}_\generalIndex\right)}
=
\identityMapping{\reals^{\generalNatural-1}}.
\]
For item \ref{item:orthogonal-plane}, we note that
\[
\planeMap_\generalNatural \generalReal^{\generalNatural-1}
=
\left(
\sum_{\generalIndex=1}^{\generalNatural-1}
  \basisTwo^{(\generalNatural)}_\generalIndex \transpose{\left(\basisOne^{(\generalNatural-1)}_\generalIndex\right)}
\right)
\generalReal^{\generalNatural-1}
=
\sum_{\generalIndex=1}^{\generalNatural-1}
  \basisTwo^{(\generalNatural)}_\generalIndex \innerproduct{\basisOne^{(\generalNatural-1)}_\generalIndex}{\generalReal^{\generalNatural-1}}
\in
\plane_\generalNatural,
\]
so the statement holds by definition of $\plane_\generalNatural$. For item \ref{item:orthogonal-zero}, we calculate
\[
\planeMapInv_\generalNatural \onevector_\generalNatural
=
\left(
  \sum_{\generalIndex=1}^{\generalNatural-1}
    \basisOne^{(\generalNatural-1)}_\generalIndex \transpose{\left(\basisTwo^{(\generalNatural)}_\generalIndex\right)}
\right)
\onevector_\generalNatural
=
\sum_{\generalIndex=1}^{\generalNatural-1}
  \basisOne^{(\generalNatural-1)}_\generalIndex \innerproduct{\basisTwo^{(\generalNatural)}_\generalIndex}{\onevector_\generalNatural}
=
0,
\]
where the last step follows because $\basisTwo^{(\generalNatural)}_1, \dots, \basisTwo^{(\generalNatural)}_{\generalNatural-1}$ are elements of $\plane_\generalNatural$. For item \ref{item:orthogonal-norm}, we use item \ref{item:orthogonal-inv} and observe
\[
\euclidNorm{\planeMap_\generalNatural \generalReal^{\generalNatural-1}}^2
=
\innerproduct{\planeMap_\generalNatural \generalReal^{\generalNatural-1}}{\planeMap_\generalNatural \generalReal^{\generalNatural-1}}
=
\innerproduct{\transpose{\planeMap_\generalNatural}\planeMap_\generalNatural \generalReal^{\generalNatural-1}}{\generalReal^{\generalNatural-1}}
=
\euclidNorm{\generalReal^{\generalNatural-1}}^2.
\]

For item \ref{item:orthogonal-max}, we use the facts that

\begin{align*}
\maximumNorm{\planeMap_{\generalNatural}\generalReal^{\generalNatural-1}}
&\leq
\maximumNorm{\planeMap_{\generalNatural}} \maximumNorm{\generalReal^{\generalNatural-1}}
\\
\maximumNorm{\planeMap_{\generalNatural}}
&=
\max_{1\leq \generalIndex \leq n}
  \sum_{j=1}^{\generalNatural-1}
    \absolutevalue{\planeMapElement^{(\generalNatural)}_{\generalIndex , j}}
\end{align*}
for all $\generalReal^{\generalNatural-1}\in \reals^{\generalNatural-1}$,
and show by induction that if $\generalNatural=2^\generalNaturalTwo + \generalNaturalTwo'$ with $\generalNaturalTwo' \in \{0, \dots, 2^\generalNaturalTwo-1\}$, we have, for all $\generalIndex \in \{1, \dots, \generalNatural\}$,
\begin{equation}
\label{eq:orthogonal-max-induction}
\absolutevalue{\planeMapElement^{(\generalNatural)}_{\generalIndex,1}}
+
\cdots
+
\absolutevalue{\planeMapElement^{(\generalNatural)}_{\generalIndex,\generalNatural-1}}
\leq
\frac{1}{\sqrt{2}^1}
+
\cdots
+
\frac{1}{\sqrt{2}^\generalNaturalTwo}
+
\frac{\generalNaturalTwo'}{2^\generalNaturalTwo}.
\end{equation}

This is stronger than item \ref{item:orthogonal-max} since $\generalNaturalTwo'/2^\generalNaturalTwo < 1$ and due to the known convergence behavior of the geometric series,
\[
\frac{1}{\sqrt{2}^1}
+
\cdots
+
\frac{1}{\sqrt{2}^\generalNaturalTwo}
<
\sum_{\generalNaturalTwo=1}^\infty
  \frac{1}{\sqrt{2}^\generalNaturalTwo}
=
\frac{1}{\sqrt{2}-1}.
\]
For $\generalNatural=2=2^1+0$, it is immediately clear from (\ref{eq:planemap-def-base}) that (\ref{eq:orthogonal-max-induction}) holds. For even $\generalNatural > 2$, we write $\generalNatural = 2^\generalNaturalTwo + \generalNaturalTwo'$ and $\generalNatural/2 = 2^{\generalNaturalTwo-1} + \generalNaturalTwo'/2$. By induction hypothesis and (\ref{eq:planemap-def-even}), for $\generalIndex \in \{1, \dots, \generalNatural/2\}$, we have
\[
\absolutevalue{\planeMapElement^{(\generalNatural)}_{\generalIndex,1}}
+
\cdots
+
\absolutevalue{\planeMapElement^{(\generalNatural)}_{\generalIndex,\generalNatural/2-1}}
\leq
\frac{1}{\sqrt{2}^1}
+
\cdots
+
\frac{1}{\sqrt{2}^{\generalNaturalTwo-1}}
+
\frac{\generalNaturalTwo'/2}{2^{\generalNaturalTwo-1}}
\]
and
\[
\planeMapElement^{(\generalNatural)}_{\generalIndex,\generalNatural/2}
=
\cdots
=
\planeMapElement^{(\generalNatural)}_{\generalIndex,\generalNatural-2}
=
0,
\]
and for $\generalIndex \in \{\generalNatural/2+1, \dots, \generalNatural\}$, we have
\[
\planeMapElement^{(\generalNatural)}_{\generalIndex,1}
=
\cdots
=
\planeMapElement^{(\generalNatural)}_{\generalIndex,\generalNatural/2-1}
=
0
\]
and
\[
\absolutevalue{\planeMapElement^{(\generalNatural)}_{\generalIndex,\generalNatural/2}}
+
\cdots
+
\absolutevalue{\planeMapElement^{(\generalNatural)}_{\generalIndex,\generalNatural-2}}
\leq
\frac{1}{\sqrt{2}}
+
\cdots
+
\frac{1}{\sqrt{2}^{\generalNaturalTwo-1}}
+
\frac{\generalNaturalTwo'/2}{2^{\generalNaturalTwo-1}}.
\]
In both cases, $\absolutevalue{\planeMapElement^{(\generalNatural)}_{\generalIndex,\generalNatural-1}} = 1/\sqrt{\generalNatural} \leq 1/\sqrt{2}^\generalNaturalTwo$, and hence,
\[
\absolutevalue{\planeMapElement^{(\generalNatural)}_{\generalIndex,1}}
+
\cdots
+
\absolutevalue{\planeMapElement^{(\generalNatural)}_{\generalIndex,\generalNatural-1}}
\leq
\frac{1}{\sqrt{2}^1}
+
\cdots
+
\frac{1}{\sqrt{2}^{\generalNaturalTwo-1}}
+
\frac{\generalNaturalTwo'/2}{2^{\generalNaturalTwo-1}}
+
\frac{1}{\sqrt{2}^\generalNaturalTwo}
=
\frac{1}{\sqrt{2}^1}
+
\cdots
+
\frac{1}{\sqrt{2}^{\generalNaturalTwo}}
+
\frac{\generalNaturalTwo'}{2^{\generalNaturalTwo}},
\]
so we have shown (\ref{eq:orthogonal-max-induction}). For odd $\generalNatural = 2^\generalNaturalTwo + \generalNaturalTwo'$, we conclude from the induction hypothesis and (\ref{eq:planemap-def-odd}) that for $\generalIndex \in \{1, \dots, \generalNatural-1\}$,
\[
\absolutevalue{\planeMapElement^{(\generalNatural)}_{\generalIndex,1}}
+
\cdots
+
\absolutevalue{\planeMapElement^{(\generalNatural)}_{\generalIndex,\generalNatural-1}}
\leq
\frac{1}{\sqrt{2}^1}
+
\cdots
+
\frac{1}{\sqrt{2}^\generalNaturalTwo}
+
\frac{\generalNaturalTwo'-1}{2^\generalNaturalTwo}
+
\frac{1}{\sqrt{(\generalNatural-1)^2}}
\leq
\frac{1}{\sqrt{2}^1}
+
\cdots
+
\frac{1}{\sqrt{2}^\generalNaturalTwo}
+
\frac{\generalNaturalTwo'}{2^\generalNaturalTwo}.
\]
For the last row sum, we note that
\[
\absolutevalue{\planeMapElement^{(\generalNatural)}_{\generalNatural,1}}
+
\cdots
+
\absolutevalue{\planeMapElement^{(\generalNatural)}_{\generalNatural,\generalNatural-1}}
=
\absolutevalue{\planeMapElement^{(\generalNatural)}_{\generalNatural,\generalNatural-1}}
=
\sqrt{\frac{\generalNatural-1}{\generalNatural}}
\leq
1
\]
and since we have dealt with the base case $\generalNatural=2$ separately, we either have $\generalNaturalTwo \geq 2$ or $\generalNaturalTwo = \generalNaturalTwo' = 1$. In both cases the right hand side of (\ref{eq:orthogonal-max-induction}) is clearly greater than $1$. Therefore, we have shown (\ref{eq:orthogonal-max-induction}).

\subsection{Technical Lemmas Used in the Proof of Lemma~\ref{lemma:converse}}
\label{appendix:technical-lemmas-converse}
In this appendix, we state two technical lemmas that are necessary to prove the converse result of Lemma~\ref{lemma:converse}.

\begin{lemma}
\label{lemma:convolution-density}
Let $\generalRV, \generalRVTwo$ be independent and real-valued random variables. Suppose that $\generalRVTwo$ is absolutely continuous with respect to the Lebesgue measure. Then $\generalRV + \generalRVTwo$ is absolutely continuous with respect to the Lebesgue measure.
\end{lemma}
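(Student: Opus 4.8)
The plan is to exploit the fact that, by independence, the law of $\generalRV + \generalRVTwo$ is the convolution of the laws of $\generalRV$ and $\generalRVTwo$, and then to verify the defining property of absolute continuity directly: that every Lebesgue-null Borel set $E \subseteq \reals$ satisfies $\Probability(\generalRV + \generalRVTwo \in E) = 0$. Writing $\mu_{\generalRV}$ and $\mu_{\generalRVTwo}$ for the distributions of $\generalRV$ and $\generalRVTwo$, the key structural input is that independence makes the joint distribution the product measure $\mu_{\generalRV} \otimes \mu_{\generalRVTwo}$.

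First I would fix a Borel set $E$ with Lebesgue measure zero and, using independence together with Tonelli's theorem applied to the nonnegative jointly measurable function $(a,b) \mapsto \mathbbm{1}_E(a+b)$, write
\[
\Probability(\generalRV + \generalRVTwo \in E)
=
\int_{\reals} \mu_{\generalRVTwo}(E - a) \, d\mu_{\generalRV}(a),
\]
where $E - a := \{ y - a : y \in E \}$ is the translate of $E$. This identity merely re-expresses the event $\{\generalRV + \generalRVTwo \in E\}$ by conditioning on the value $a$ of $\generalRV$ and observing that $\generalRV + \generalRVTwo \in E$ is equivalent to $\generalRVTwo \in E - a$.

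Next I would invoke the hypothesis that $\generalRVTwo$ is absolutely continuous, so its distribution admits a Lebesgue density $f_{\generalRVTwo}$ and $\mu_{\generalRVTwo}(E - a) = \int_{E - a} f_{\generalRVTwo}(y)\, dy$ for every $a \in \reals$. Because Lebesgue measure is translation invariant, $E - a$ is again Lebesgue-null, and the integral of any density over a null set vanishes; hence $\mu_{\generalRVTwo}(E - a) = 0$ for all $a$. Substituting this into the displayed identity makes the integrand identically zero, so $\Probability(\generalRV + \generalRVTwo \in E) = 0$. Since $E$ was an arbitrary Lebesgue-null Borel set, this is precisely the statement that the law of $\generalRV + \generalRVTwo$ is absolutely continuous with respect to Lebesgue measure.

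The only points requiring care are the measurability prerequisites underlying the interchange of integration order: one checks that $(a,b) \mapsto \mathbbm{1}_E(a+b)$ is jointly Borel measurable, which holds since addition is continuous and $E$ is Borel, and that $a \mapsto \mu_{\generalRVTwo}(E-a)$ is measurable, both being standard consequences of Tonelli's theorem for the product measure. I do not anticipate a genuine obstacle here; the substance of the argument rests entirely on the translation invariance of Lebesgue measure combined with the elementary fact that integrating a density over a null set returns zero.
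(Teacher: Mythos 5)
Your argument is correct. The paper itself does not give a proof at all---it simply cites \cite[Theorem 2.1.11]{durrett2010probability}, which is the standard convolution theorem asserting that if $\generalRVTwo$ has density $f_{\generalRVTwo}$ and is independent of $\generalRV$, then $\generalRV+\generalRVTwo$ has the explicit density $z \mapsto \int f_{\generalRVTwo}(z-a)\,d\mu_{\generalRV}(a)$. Your route is the same underlying computation but packaged differently: instead of exhibiting a density, you verify the null-set characterization of absolute continuity directly, using Tonelli to write $\Probability(\generalRV+\generalRVTwo \in E) = \int \mu_{\generalRVTwo}(E-a)\,d\mu_{\generalRV}(a)$ and then killing the integrand via translation invariance of Lebesgue measure. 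The two are equivalent (by Radon--Nikodym the null-set criterion is the same as having a density), and your measurability remarks are the right ones; the only thing the cited theorem buys beyond your argument is the explicit formula for the density, which the paper never uses---only the existence of the differential entropy $\diffEntropyRV{\generalRV+\generalRVTwo}$ matters downstream, and your version suffices for that.
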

\begin{proof}
This is an immediate consequence of \cite[Theorem 2.1.11]{durrett2010probability}.
\end{proof}

\begin{lemma}
\label{lemma:information-of-sum}
Let $\generalRV_1, \dots, \generalRV_\generalNatural, \generalRVTwo$ be random variables, and let $\generalRV := \generalRV_1 + \dots + \generalRV_\generalNatural + \generalRVTwo + \generalReal$, where $\generalReal \in
\reals$ is deterministic. Assume that the tuple $(\generalRV_1, \dots, \generalRV_\generalNatural)$ is stochastically independent of $\generalRVTwo$ and that $\generalRVTwo$ is absolutely continuous with respect to the Lebesgue measure. Then
\[
\mutualInformationRV{\generalRV_1, \dots, \generalRV_\generalNatural}{\generalRV}
=
\mutualInformationRV{\generalRV_1 + \dots + \generalRV_\generalNatural}{\generalRV}.
\]
\end{lemma}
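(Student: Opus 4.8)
The plan is to reduce the claim to two structural facts: that the output $\generalRV$ depends on the inputs $\generalRV_1, \dots, \generalRV_\generalNatural$ only through their sum, and that this sum is itself a deterministic function of the inputs. First I would write $S := \generalRV_1 + \dots + \generalRV_\generalNatural$, so that $\generalRV = S + \generalRVTwo + \generalReal$. Since $(\generalRV_1, \dots, \generalRV_\generalNatural)$ is independent of $\generalRVTwo$, so is $S$ (being a function of the tuple), and hence by Lemma~\ref{lemma:convolution-density} applied with $S + \generalReal$ in place of its first summand, the output $\generalRV$ is absolutely continuous with respect to the Lebesgue measure. This, together with the absolute continuity of $\generalRVTwo$, is what will make the conditional densities used below exist and the general-alphabet manipulations legitimate.

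Next I would establish the Markov chain $(\generalRV_1, \dots, \generalRV_\generalNatural) \to S \to \generalRV$. The key observation is that the conditional law of $\generalRVTwo$ given the inputs is unchanged from its unconditional law, so that conditioned on $S = s$ the output $s + \generalRVTwo + \generalReal$ has a distribution depending on $s$ alone and not on the individual components. This yields conditional independence of $\generalRV$ and $(\generalRV_1, \dots, \generalRV_\generalNatural)$ given $S$, i.e.\ $\mutualInformationRVCond{\generalRV_1, \dots, \generalRV_\generalNatural}{\generalRV}{S} = 0$. In the other direction, because $S$ is a deterministic function of $(\generalRV_1, \dots, \generalRV_\generalNatural)$, conditioning on the full tuple leaves no residual uncertainty about $S$, giving $\mutualInformationRVCond{S}{\generalRV}{\generalRV_1, \dots, \generalRV_\generalNatural} = 0$.

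Finally I would expand the joint quantity $\mutualInformationRV{\generalRV_1, \dots, \generalRV_\generalNatural, S}{\generalRV}$ by the chain rule in the two possible orders,
\[
\mutualInformationRV{S}{\generalRV} + \mutualInformationRVCond{\generalRV_1, \dots, \generalRV_\generalNatural}{\generalRV}{S}
=
\mutualInformationRV{\generalRV_1, \dots, \generalRV_\generalNatural}{\generalRV} + \mutualInformationRVCond{S}{\generalRV}{\generalRV_1, \dots, \generalRV_\generalNatural},
\]
and substitute the two vanishing conditional terms from the previous step to conclude $\mutualInformationRV{S}{\generalRV} = \mutualInformationRV{\generalRV_1, \dots, \generalRV_\generalNatural}{\generalRV}$, which is exactly the assertion. (Equivalently, the two inequalities follow separately from the data-processing inequality applied to the Markov chain $(\generalRV_1, \dots, \generalRV_\generalNatural) \to S \to \generalRV$ and to the deterministic map $(\generalRV_1, \dots, \generalRV_\generalNatural) \mapsto S$.)

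The step I expect to require the most care is not the algebra but the justification that the chain rule and the vanishing of the two conditional mutual informations remain valid for general, possibly continuous or mixed, input random variables rather than for discrete ones. This is precisely where the absolute-continuity hypothesis on $\generalRVTwo$ does its work: it makes the joint law of $\big((\generalRV_1, \dots, \generalRV_\generalNatural), \generalRV\big)$ admit the conditional density $a \mapsto p_{\generalRVTwo}(a - S - \generalReal)$, which depends on the inputs only through $S$. The conditional-independence and functional-dependence statements can then be read off directly from this density, so that the general-alphabet chain rule applies without pathology and the identity holds as stated.
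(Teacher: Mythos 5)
Your proof is correct, but it takes a genuinely different route from the paper. The paper computes both mutual informations explicitly via differential entropies: it first uses Lemma~\ref{lemma:convolution-density} to ensure $\diffEntropyRV{\generalRV}$ exists, then writes $\mutualInformationRV{\generalRV_1, \dots, \generalRV_\generalNatural}{\generalRV} = \diffEntropyRV{\generalRV} - \diffEntropyRVCond{\generalRV}{\generalRV_1, \dots, \generalRV_\generalNatural}$, uses translation invariance of differential entropy under the conditioning (given the tuple, $\generalRV$ is a deterministic shift of $\generalRVTwo$) and independence to reduce this to $\diffEntropyRV{\generalRV} - \diffEntropyRV{\generalRVTwo}$, and observes that the identical computation with the tuple replaced by its sum yields the same value. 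You instead establish the Markov chain $(\generalRV_1, \dots, \generalRV_\generalNatural) \to \generalRV_1 + \dots + \generalRV_\generalNatural \to \generalRV$ from the fact that the conditional law of $\generalRV$ given the inputs depends on them only through their sum, and combine this with the deterministic dependence of the sum on the tuple via the chain rule (equivalently, data processing in both directions). Both arguments are sound. Your approach buys generality and robustness: the Markov-chain/data-processing argument works with mutual information in the general (Kolmogorov) sense, so it does not actually require the absolute continuity of $\generalRVTwo$ at all, nor any care about whether the differential entropies are finite or well-defined -- the hypothesis you flag as doing the work is, in your formulation, essentially dispensable. The paper's approach is more computational and, as a byproduct, produces the explicit common value $\diffEntropyRV{\generalRV} - \diffEntropyRV{\generalRVTwo}$, but it has to spend effort (via Lemma~\ref{lemma:convolution-density} and the Bayes-rule remark) justifying that every conditional differential entropy appearing in the telescoping identity exists.
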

\begin{proof}
First note that by Lemma~\ref{lemma:convolution-density}, $\generalRV$ is absolutely continuous with respect to the Lebesgue measure. Hence, the differential entropies $\diffEntropyRV{\generalRV}$ and $\diffEntropyRV{\generalRVTwo}$ exist and due to the equalities proven below, also the conditional differential entropies that appear in this proof. We have
\[
\mutualInformationRV{\generalRV_1, \dots, \generalRV_\generalNatural}{\generalRV}
=
\diffEntropyRV{\generalRV}
-
\diffEntropyRVCond{\generalRV}{\generalRV_1, \dots, \generalRV_\generalNatural}
\overset{(a)}{=}
\diffEntropyRV{\generalRV}
-
\diffEntropyRVCond{\generalRVTwo}{\generalRV_1, \dots, \generalRV_\generalNatural}
\overset{(b)}{=}
\diffEntropyRV{\generalRV}
-
\diffEntropyRV{\generalRVTwo},
\]
where (a) is because under the condition $(\generalRV_1, \dots, \generalRV_\generalNatural)$, we have that $\generalRVTwo$ is a shifted version of $\generalRV$, and (b) holds due to the independence assumption. Clearly, this derivation also holds with $(\generalRV_1, \dots, \generalRV_\generalNatural)$ replaced by $\generalRV_1 + \dots + \generalRV_\generalNatural$, obtaining equality with the same term $\diffEntropyRV{\generalRV}-\diffEntropyRV{\generalRVTwo}$ on the right hand side.
\end{proof}

\bibliographystyle{unsrt}
\bibliography{hybrid-analog-digital-references}
\end{document}